\def\carac#1,#2{
\left[
\begin{smallmatrix}
#1 \\ #2
\end{smallmatrix}
\right]
}
\newtheorem{algorithm}[theorem]{Algorithm}
\newcommand{\Aff}{\mathbb A} 
\newcommand{\N}{\mathbb N}
\newcommand{\Z}{\mathbb{Z}}
\newcommand{\xz}{\mathbb Z}
\newcommand{\proj}{\mathbb P}
\newcommand{\F}{\mathbb F}
\newcommand{\abb}[5]{#1:#2\rightarrow#3,#4\mapsto#5}
\newcommand{\hsp}{\hspace{5pt}}
\newcommand{\spec}[1]{ \mathrm{Spec} (#1)}
\newcommand{\xg}{\mathbb{G}}
\newcommand{\inzweizweimat}[4]{\begin{array}{c@{\quad}c}
#1 & #2 \\
#3 & #4
\end{array}}
\newcommand{\squaremat}[4]{\left( \inzweizweimat{#1}{#2}{#3}{#4} \right)}
\newcommand{\xpol}{\mathscr{X}}
\newcommand{\pol}{\mathscr{L}}
\newcommand{\ppol}{\mathscr{L}_0}
\newcommand{\bpol}{\mathscr{M}_0}
\newcommand{\pppol}{\mathscr{M}}
\renewcommand*{\emptyset}{\varnothing}
\renewcommand*{\phi}{\varphi}
\renewcommand*{\epsilon}{\varepsilon}
\renewcommand{\leq}{\ensuremath{\leqslant}}
\renewcommand{\geq}{\ensuremath{\geqslant}}
\DeclareMathOperator{\Aut}{Aut}
\DeclareMathOperator{\Auts}{Aut_s}
\DeclareMathOperator{\Id}{Id}
\DeclareMathOperator{\SProj}{Proj}
\newcommand{\kbar}{\overline{k}}
\newcommand{\overln}{\ensuremath{\overline{\ell n}}}
\newcommand{\overn}{\ensuremath{\overline{n}}}
\newcommand{\overl}{\ensuremath{\overline{\ell}}}
\newcommand{\overtwo}{\ensuremath{\overline{2}}}
\newcommand{\Thetastruct}[1]{\ensuremath{\Theta_{#1}}}
\newcommand{\Hstruct}[1]{\ensuremath{\mathcal{H}(#1)}}
\newcommand{\Zstruct}[1]{\ensuremath{Z({#1})}}
\newcommand{\dZstruct}[1]{\hat{Z}(#1)}
\newcommand{\Kstruct}[1]{\ensuremath{K({#1})}}
\newcommand{\Zln}{\Zstruct{\overln}}
\newcommand{\Zn}{\Zstruct{\overn}}
\newcommand{\Zl}{\Zstruct{\overl}}
\newcommand{\Zdelta}{\Zstruct{\delta}}
\newcommand{\dZln}{\dZstruct{\overln}}
\newcommand{\dZn}{\dZstruct{\overn}}
\newcommand{\dZl}{\dZstruct{\overl}}
\newcommand{\dZdelta}{\dZstruct{\delta}}
\newcommand{\Hln}{\Hstruct{\overln}}
\newcommand{\Hn}{\Hstruct{\overn}}
\newcommand{\Hdelta}{\Hstruct{\delta}}
\newcommand{\Kdelta}{\Kstruct{\delta}}
\newcommand{\Thetaln}{\Thetastruct{\overln}}
\newcommand{\Thetan}{\Thetastruct{\overn}}
\newcommand{\Thetadelta}{\Thetastruct{\delta}}
\newcommand{\Mstruct}[1]{\mathcal{M}_{#1}}
\newcommand{\Mln}{\Mstruct{\overln}}
\newcommand{\Mn}{\Mstruct{\overn}}
\newcommand{\Mdelta}{\Mstruct{\delta}}
\newcommand{\Mbarstruct}[1]{\overline{\mathcal{M}}_{#1}}
\newcommand{\Mbarln}{\Mbarstruct{\overln}}
\newcommand{\Mbarn}{\Mbarstruct{\overn}}
\newcommand{\Mbardelta}{\Mbarstruct{\delta}}
\newcommand{\Ztwo}{\Zstruct{\overtwo}}
\newcommand{\dZtwo}{\dZstruct{\overtwo}}
\newcommand{\Az}{A^0}
\newcommand{\pionetilde}{\tilde{\pi}_1}
\newcommand{\pitwotilde}{\tilde{\pi}_2}
\title{Computing modular correspondences for abelian varieties}
\author{Jean-Charles Faug\`ere\inst{1}, David Lubicz\inst{2,3}, Damien
Robert\inst{4}}
\institute{
INRIA, Centre Paris-Rocquencourt, SALSA Project\\
UPMC, Univ Paris 06, LIP6\\
CNRS, UMR 7606, LIP6\\
UFR Ing\'enierie 919,
LIP6 Passy Kennedy,
Boite courrier 169,\\
4, place Jussieu,
F-75252 Paris Cedex 05
\and
C\'ELAR,
BP 7419,
F-35174 Bruz
\and
IRMAR, Universt\'e de Rennes 1,
Campus de Beaulieu,
F-35042 Rennes
\and
LORIA, CACAO Project\\ 
Campus Scientifique\\
BP 239\\
54506 Vandoeuvre-l\`es-Nancy Cedex\\
}
\begin{document}

\maketitle

\bigskip
\begin{abstract}
\noindent
The aim of this paper is to give a higher dimensional equivalent of
the classical modular polynomials $\Phi_\ell(X,Y)$. If $j$ is the
$j$-invariant associated to an elliptic curve $E_k$ over a field $k$
then the roots of $\Phi_\ell(j,X)$ correspond to the $j$-invariants of
the curves which are $\ell$-isogeneous to $E_k$.  Denote by $X_0(N)$
the modular curve which parametrizes the set of elliptic curves
together with a $N$-torsion subgroup. It is possible to interpret
$\Phi_\ell(X,Y)$ as an equation cutting out the image of a certain
modular correspondence $X_0(\ell) \rightarrow X_0(1) \times X_0(1)$ in
the product $X_0(1) \times X_0(1)$.

Let $g$ be a positive integer and $\overn \in \N^g$.  We are
interested in the moduli space that we denote by
$\Mn$ of abelian varieties of dimension $g$
over a field $k$ together with an ample symmetric line bundle
$\pol$ and a symmetric theta structure of type $\overn$. If $\ell$ is a
prime and let $\overl=(\ell, \ldots , \ell)$, there exists a
modular correspondence $\Mln \rightarrow
\Mn \times \Mn$. We give
a system of algebraic equations defining the image of this modular
correspondence.

We describe an algorithm to solve this system of algebraic equations
which is much more efficient than a general purpose Gr\"obner basis
algorithm. As an application, we explain how this algorithm can be
used to speed up the initialisation phase of a point counting
algorithm.\\
\textbf{Keywords:} \emph{Abelian varieties, Theta functions,
  Isogenies, Modular correspondences.}
\end{abstract}

\section{Introduction}
\label{s_intro}
The aim of this paper is to give a higher dimensional equivalent of
the classical modular polynomials $\Phi_\ell(X,Y)$. We recall that
$\Phi_\ell(X,Y)$ is a polynomial with integer coefficients and that if
$j$ is the $j$-invariant associated to an elliptic curve $E_k$ over a
field $k$ then the roots of $\Phi_\ell(j,X)$ correspond to the
$j$-invariants of elliptic curves which are $\ell$-isogeneous to
$E_k$. These modular polynomials have important algorithmic
applications. For instance, Atkin and Elkies (see \cite{MR1486831})
take advantage of the modular parametrisation of $\ell$-torsion
subgroups of an elliptic curve to improve the original point counting
algorithm of Schoof \cite{Schoof1}.

In \cite{MR2001j:11049}, Satoh has introduced an algorithm to count
the number of rational points of an elliptic curve $E_k$ defined over
a finite field $k$ of small characteristic $p$ which rely on the
computation of the canonical lift of the $j$-invariant of $E_k$. Here
again it is possible to improve the original lifting algorithm of
Satoh \cite{MR1895422,MR2293798} by solving over the $p$-adics an
equations given by the modular polynomial $\Phi_p(X,Y)$.

This last algorithm has been improved by Kohel in \cite{MR2093256}
using the notion of oriented modular correspondence. For $N\in \N^*$,
the modular curve $X_0(N)$ parametrizes the set of isomorphism classes
of elliptic curves together with a $N$-torsion subgroup. For instance,
the curve $X_0(1)$ is just the line of $j$-invariants. Let $p$ be
prime to $N$. A rational map of curves $X_0(pN) \rightarrow X_0(N)
\times X_0(N)$ is an oriented modular correspondence if the image of
each point represented by a pair $(E,G)$ where $G$ is a subgroup of
order $pN$ of $E$ is a couple $((E_1, G_1), (E_2, G_2))$ with $E_1=E$
and $G_1$ is the unique subgroup of index $p$ of $G$, and $E_2=E/H$
where $H$ is the unique subgroup of order $p$ of $G$. In the case that
the curve, $X_0(N)$ has genus zero, the correspondence can be
expressed as a binary equation $\Phi(X,Y)=0$ in $X_0(N) \times X_0(N)$
cutting out a curve isomorphic
to $X_0(pN)$ inside the product. For instance, if one consider the
oriented correspondence $X_0(\ell) \rightarrow X_0(1) \times X_0(1)$
for $\ell$ a prime number then the polynomial defining its image in
the product is the modular polynomial $\Phi_\ell(X,Y)$.

In this paper, we are interested in the computation of an analog of
oriented modular correspondences for higher dimensional abelian
varieties over a field $k$.  We use a moduli space which is different
from the one of \cite{MR2093256}. We fix an integer $g>0$ for the rest
of the paper. In the following if $n$ is an integer, $\overn$
denotes the element $(n, \ldots , n) \in \Z^g$.  We consider the set
of triples of the form $(A_k, \pol, \Thetan)$ where
$A_k$ is a $g$ dimensional abelian variety equipped with a symmetric
ample line bundle $\pol$ and a symmetric theta structure $\Thetan$
of type $\overn$.  Such a triple is called an abelian variety
with a $\overn$-marking. To a triple $(A_k, \pol,
\Thetan)$, one can associate following \cite{MR34:4269}
its theta null point. The locus of theta null points corresponding to
the set of abelian varieties with a $\overn$-marking is a
quasi-projective variety $\Mn$.  Moreover, it
is proved in \cite{MR36:2621} that if $8|n$ then
$\Mn$ is a classifying space for abelian
varieties with a $\overn$-marking.  We would like to compute
oriented modular correspondences in $\Mn$.

For this, let $(A_k, \pol, \Thetaln)$ be an abelian
variety with a $(\overln)$-marking. We suppose that $\ell$
and $n$ are relatively prime. From the theta structure
$\Thetaln$, we deduce a decomposition of the kernel
of the polarization $K(\pol)=K_1(\pol) \times K_2(\pol)$ into maximal
isotropic subspaces for the commutator pairing associated to $\pol$.
Let $K(\pol)[\ell]=K_1(\pol)[\ell] \times K_2(\pol)[\ell]$ be the
induced decomposition of the $\ell$-torsion part of $K(\pol)$. Let
$B_k$ be the quotient of $A_k$ by $K_2(\pol)[\ell]$ and $C_k$ be the
quotient of $A_k$ by $K_1(\pol)[\ell]$. In this paper, we show that
the theta structure of type $\overln$ of $A_k$ induces in a
natural manner theta structures of type $\overn$ on $B_k$ and
$C_k$. As a consequence, we obtain a modular correspondence,
$\Mln \rightarrow
\Mn \times \Mn$.  In the
projective coordinate system provided by theta constants, we give
a system of equations for the image of $\Mln$ in the
product $\Mn \times \Mn$
as well as an efficient algorithm to solve this system.

This paper is organized as follows. In Section~\ref{s_notations} we
recall some basic definitions and properties about algebraic theta
functions. In Section~\ref{s_isogenies}, we define formally the
modular correspondence, and then in Section~\ref{s_modcorr} we give
explicit equations for the computation of this correspondence. In
particular, we define a polynomial system (the
equations of the image of $\Mln$), which solutions give theta null
points of isogeneous varieties. In Section~\ref{s_solutions}, we
describe the geometry of these solutions. The last Section is devoted to
the description of a fast algorithm compute the solutions.

\section{Some notations and basic facts} \label{s_notations}
In this section, we fix some notations for the rest of the paper and
recall well known results on abelian varieties and theta structures.

Let $A_k$ be a $g$ dimensional abelian variety over a field $k$. Let
$\pol$ be a degree $d$ ample symmetric line bundle on $A_k$.  From
here, we suppose that $d$ is prime to the characteristic of $k$ or
that $A_k$ is ordinary.  Denote by $K(\pol)$ the kernel of the
polarization $\pol$ and by $G(\pol)$ the theta group (see
\cite{MR34:4269}) associated to $\pol$. The theta group $G(\pol)$ is
by definition the set of pairs $(x,\psi)$ where $x$ is a geometric point
of $K(\pol)$ and $\psi$ is an isomorphism of line bundles $\psi : \pol
\rightarrow \tau^*_x \pol$ together with the composition law $(x,
\psi)\circ (y, \phi)=(x+y, \tau_y^* \psi \circ \phi)$.  Let
$\delta=(d_1, \ldots , d_g)$ be a finite sequence of integers such
that $d_i | d_{i+1}$, we consider the finite group scheme $\Zdelta =
(\xz/d_1 \xz)_k \times_k \ldots \times_k (\xz/d_g \xz)_k$ with
elementary divisors given by $\delta$.  For a well chosen unique
$\delta$, the finite group scheme $\Kdelta = \Zdelta \times \dZdelta$
(where $\dZdelta$ is the Cartier dual of $\Zdelta$) is isomorphic to
$K(\pol)$ (see \cite{MR0282985}).  The Heisenberg group of type
$\delta$ is the scheme $\mathcal{H}(\delta) = \xg_{m,k} \times
Z(\delta) \times \hat{Z}(\delta)$ together with the group law defined
on geometric points by $(\alpha, x_1, x_2).(\beta, y_1, y_2) =
(\alpha.\beta.y_2(x_1), x_1 + y_1, x_2 + y_2)$.  We recall
\cite{MR34:4269} that a theta structure $\Thetadelta$ of type $\delta$
is an isomorphism of central extension from $\Hdelta$ to $G(\pol)$
fitting in the following diagram:
\begin{equation}\label{thetastructure}
\xymatrix{
0 \ar[r] & \xg_{m,k} \ar[r] \ar@{=}[d] & \Hdelta \ar[r] \ar[d]^{\Thetadelta}
& \Kdelta \ar[r] \ar[d]^{\overline{\Theta}_\delta} & 0 \\
0 \ar[r] & \xg_{m,k} \ar[r] & G(\pol) \ar[r]^\kappa
& K(\pol) \ar[r] & 0 
  }.
\end{equation}

We note that $\Thetadelta$ induces an isomorphism, denoted
$\overline{\Theta}_\delta$ in the preceding diagram, from $\Kdelta$
into $K(\pol)$ and as a consequence a decomposition $K(\pol)=K_1(\pol)
\times K_2(\pol)$ where $K_2(\pol)$ is the Cartier dual of
$K_1(\pol)$. The data of a triple $(A_k, \pol, \Thetadelta)$ defines
a basis of global sections of $\pol$ that we denote $(\vartheta_i)_{i
  \in \Zdelta}$ and as a consequence an morphism of $A_k$ into
$\proj_k^{d-1}$ where $d=\prod_{i=1}^g d_i$ is the degree of $\pol$.
We briefly recall the construction of this basis.  We recall \cite[pp.
291]{MR34:4269} that a level subgroup $\tilde{K}$ of $G(\pol)$ is a
subgroup such that $\tilde{K}$ is isomorphic to its image by $\kappa$
in $K(\pol)$ where $\kappa$ is defined in (\ref{thetastructure}).
We define the maximal level subgroups $\tilde{K}_1$ over $K_1(\pol)$
and $\tilde{K}_2$ over $K_2(\pol)$ as the image by $\Thetadelta$ of
the subgroups $(1,x,0)_{x \in \Zdelta}$ and $(1,0,y)_{y \in
  \dZdelta}$ of $\Hdelta$.  Let $\Az_k$ be the
quotient of $A_k$ by $K_2(\pol)$ and $\pi : A_k \rightarrow \Az_k$ be
the natural projection.  By the descent theory of Grothendieck, the
data of $\tilde{K}_2$ is equivalent to the data of a couple $(\pol_0,
\lambda)$ where $\pol_0$ is a degree one ample line bundle on $\Az_k$
and $\lambda$ is an isomorphism $\lambda : \pi^*(\pol_0) \rightarrow
\pol$. Let $s_0$ be the unique global section of $\pol_0$ up to a
constant factor and let $s = \lambda(\pi^*(s_0))$. We have the
following proposition (see \cite{MR34:4269})
\begin{proposition}\label{prop1}
  For all $i \in \Zdelta$, let $(x_i, \psi_i)=
  \Thetadelta((1,i,0))$.  We set $\vartheta^{\Thetadelta}_i =
  (\tau_{-x_i}^*\psi_i(s))$.  
  The elements
  $(\vartheta^{\Thetadelta}_i)_{i \in \Zdelta}$ form a basis of the
  global sections of $\pol$ which is uniquely determined up to a
  multiplication by a factor independent of $i$ by the data of
  $\Thetadelta$.
\end{proposition}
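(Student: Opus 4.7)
The plan is to interpret each $\vartheta_i^{\Thetadelta}$ as the image of the distinguished section $s$ under the natural action of the theta group element $\tilde{x}_i:=(x_i,\psi_i)\in G(\pol)$ on $H^0(A_k,\pol)$, and then deduce the basis property from the eigenspace decomposition for the induced action of the level subgroup $\tilde{K}_2$. That each $\vartheta_i^{\Thetadelta}$ is a bona fide global section of $\pol$ is immediate: $\psi_i\colon\pol\to\tau_{x_i}^*\pol$ sends $s$ to a section of $\tau_{x_i}^*\pol$, which $\tau_{-x_i}^*$ returns to a section of $\pol$.

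The first key step is to observe that $s$ is, up to a nonzero scalar, the unique $\tilde{K}_2$-invariant section of $\pol$. Indeed, the descent correspondence---which is precisely what the level subgroup $\tilde{K}_2$ encodes---identifies $\tilde{K}_2$-invariant sections of $\pol$ with sections of $\pol_0$ on $\Az_k$; since $\pol_0$ is an ample line bundle of degree one on an abelian variety, $H^0(\Az_k,\pol_0)$ is one-dimensional by Riemann--Roch and is spanned by $s_0$. Thus $s=\lambda(\pi^*s_0)$ is $\tilde{K}_2$-invariant and is, up to scalar, the only such section.

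Next I would exploit the commutation rule in $\Hdelta$ to diagonalize the family $(\vartheta_i^{\Thetadelta})_i$ under $\tilde{K}_2$. A direct computation using the product $(\alpha,x_1,x_2)\cdot(\beta,y_1,y_2)=(\alpha\beta y_2(x_1),x_1+y_1,x_2+y_2)$ gives $(1,0,y)\cdot(1,i,0)=y(i)^{-1}(1,i,0)\cdot(1,0,y)$, so if $\tilde{y}:=\Thetadelta((1,0,y))\in\tilde{K}_2$ then $\tilde{y}\cdot\vartheta_i^{\Thetadelta}=y(i)^{-1}\vartheta_i^{\Thetadelta}$. Non-degeneracy of the duality pairing $\Zdelta\times\dZdelta\to\xg_{m,k}$ ensures that the characters $y\mapsto y(i)^{-1}$ are pairwise distinct as $i$ ranges over $\Zdelta$, so the $\vartheta_i^{\Thetadelta}$ are eigenvectors for distinct characters and therefore linearly independent. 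Since $\dim_k H^0(A_k,\pol)=d=|\Zdelta|$ by Riemann--Roch, these $d$ sections form a basis; the uniqueness up to a single common scalar is inherited from the unique scaling freedom on $s_0$, which enters uniformly in every $\vartheta_i^{\Thetadelta}$.

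The main delicate point is the conjunction of the descent-theoretic identification $H^0(A_k,\pol)^{\tilde{K}_2}\cong H^0(\Az_k,\pol_0)$ with the commutator calculation in the Heisenberg group $\Hdelta$; once both are established, the remainder of the argument is a dimension count driven by Riemann--Roch and the non-degeneracy of Cartier duality.
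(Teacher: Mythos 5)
The paper does not supply its own proof of this proposition; it is cited directly from Mumford's \emph{On the equations defining abelian varieties, I} (the reference \cite{MR34:4269}), where the setup through $(\pol_0,\lambda)$, $s_0$, and $s=\lambda(\pi^*s_0)$ is laid out in the immediately preceding paragraph. Your argument is correct and is essentially the one Mumford gives: identify $s$ as the unique (up to scalar) $\tilde K_2$-invariant section via descent to $\Az_k$ and $h^0(\Az_k,\pol_0)=1$; translate $s$ by $\Thetadelta((1,i,0))$; compute, using the Heisenberg relation $(1,0,y)(1,i,0)=y(i)^{-1}(1,i,0)(1,0,y)$ together with $\tilde K_2$-invariance of $s$, that $\vartheta^{\Thetadelta}_i$ is a $\tilde K_2$-eigenvector with character $y\mapsto y(i)^{-1}$; and conclude by nondegeneracy of Cartier duality plus $h^0(A_k,\pol)=d$. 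One small point you leave implicit and should state: each $\vartheta^{\Thetadelta}_i$ is nonzero, because $(x_i,\psi_i)$ acts on $\Gamma(\pol)$ by a composition of sheaf isomorphisms $\psi_i$ and $\tau_{-x_i}^*$ — without this, ``eigenvectors for distinct characters are independent'' does not by itself give $d$ independent sections. Also note that when $d$ is not prime to the characteristic (the paper allows $A_k$ ordinary), $\dZdelta$ is a non-reduced group scheme and ``pairwise distinct characters'' must be read scheme-theoretically; the statement remains true but the phrasing via $\bar k$-points needs a word of care, which Mumford handles. Neither issue is a gap in the argument's structure.
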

If no ambiguity is possible, we let
$\vartheta^{\Thetadelta}_i=\vartheta_i$ for $i\in \Zdelta$.

The image of the zero point $0$ of $A_k$ by the morphism provided
by $\Thetadelta$, which has homogeneous coordinates
$(\vartheta_i(0))_{i \in \Zdelta}$, is by definition the theta null
point associated to $(A_k, \pol, \Thetadelta)$.  If $\Thetadelta$ is
symmetric \cite[pp. 317]{MR34:4269}, we say that $(A_k, \pol,
\Thetadelta)$ is an abelian variety with a $\delta$-marking. The locus
of the theta null points associated to abelian varieties with a
$\delta$-marking is a quasi-projective variety denoted $\Mdelta$.

Let $(A_k, \pol, \Thetadelta)$ be an abelian variety with a
$\delta$-marking. We recall that the natural action of $G(\pol)$ on
the global sections of $\pol$ is given by $(x, \psi) . f =
\tau_{-x}^{*}\psi(f)$ for $f \in \Gamma(\pol)$ and $(x, \psi) \in
G(\pol)$. There is an action of $\Hdelta$ on $(\vartheta_i)_{i \in
  \Zdelta}$ given by:
\begin{equation}
  (\alpha,i,j). \vartheta_{k} =\alpha
e_\delta(k+i,-j) \vartheta_{k+i},
\label{eq_actiontheta}
\end{equation}
for $(\alpha,i,j) \in \Hdelta$ and $e_\delta$ the commutator pairing
on $\Kdelta$, which is compatible via $\Thetadelta$ with the natural
action of $G(\pol)$ on $(\vartheta_i)_{i \in \Zdelta}$. Using
(\ref{eq_actiontheta}), one can compute the coordinates in the
projective system given by the $(\theta_i)_{i \in \Zdelta}$ of any
point of $K(\pol)$ from the data of the theta null point associated to
$(A_k, \pol, \Thetadelta)$.

Let $\delta=(\delta_1, \ldots, \delta_g) \in \N^g$ and
$\delta'=(\delta'_1, \ldots , \delta'_g) \in \N^g$, $\delta|\delta'$ means that for $i=1,
\ldots , g$, $\delta_i | \delta'_i$. If $n \in \N$, $n | \delta$ means that
$(n,\ldots,n) \in \N^g | \delta$. If $\delta | \delta'$ we have the usual
embedding 
\begin{equation}
  i: \Zdelta \to \Zstruct{\delta'}, (x_i)_{i \in \{1, \ldots ,g\}} \mapsto
  (\delta'_i/\delta_i . x_i) 
  \label{eq_class_embedding}
\end{equation}

A basic ingredient of our algorithm is given by 
the Riemann relations which are algebraic relations satisfied by the
theta null values if $4 | \delta$.
\begin{theorem}\label{riemannquad}
  Denote by $\dZtwo$ the dual group of $\Ztwo$.  Let $(a_i)_{i \in
    \Zdelta}$ be the theta null points associated to an abelian
  variety with a $\delta$-marking $(A_k, \pol, \Thetadelta)$ where $2
  | \delta$.  For all $x,y,u,v \in \Zstruct{2 \delta}$ which are
  congruent modulo $\Ztwo$, and all $\chi \in \dZtwo$, we have
\begin{eqnarray*}
\big(\sum_{t \in \Ztwo} \chi(t) \vartheta_{x+y+t} \vartheta_{x-y+t}\big).\big(\sum_{t \in \Ztwo} \chi(t)
a_{u+v+t} a_{u-v+t}\big)= \\
=\big(\sum_{t \in \Ztwo} \chi(t) \vartheta_{x+u+t} \vartheta_{x-u+t}\big).\big(\sum_{t \in \Ztwo} \chi(t)
a_{y+v+t} a_{y-v+t}\big). 
\end{eqnarray*}

Here we embed $\Ztwo$ into $\Zdelta$ and $\Zdelta$ into $\Zstruct{2
\delta}$ using~\eqref{eq_class_embedding}.
\end{theorem}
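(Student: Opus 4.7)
The identity is Mumford's classical Riemann theta relation for level-$\delta$ theta functions (see \cite{MR34:4269}), and my plan is to derive it via the standard isogeny-pullback technique applied to the ``difference'' isogeny $\epsilon\colon A_k \times A_k \to A_k \times A_k$ sending $(x,y)$ to $(x+y, x-y)$. Its kernel is the diagonal copy of $A_k[2]$, and the theorem of the square yields a natural isomorphism $\epsilon^*(\pol \boxtimes \pol) \simeq \pol^{\otimes 2} \boxtimes \pol^{\otimes 2}$.

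First I would lift the symmetric theta structure $\Thetadelta$ on $(A_k, \pol)$ to a symmetric theta structure of type $2\delta$ on $(A_k, \pol^{\otimes 2})$, producing a basis $(\vartheta^{2\delta}_i)_{i \in \Zstruct{2\delta}}$ of $\Gamma(\pol^{\otimes 2})$ as in Proposition~\ref{prop1}; the hypothesis $2 \mid \delta$ is what makes this lift canonical and compatible with the projection $\Zstruct{2\delta} \to \Zdelta$. Expanding the pullback $\epsilon^*(\vartheta_x \otimes \vartheta_y)$ in the product basis $(\vartheta^{2\delta}_k \otimes \vartheta^{2\delta}_l)_{k,l \in \Zstruct{2\delta}}$ of $\Gamma(\pol^{\otimes 2} \boxtimes \pol^{\otimes 2})$, and using the $A_k[2]$-invariance of $\epsilon$-pullbacks together with the theta-group action \eqref{eq_actiontheta}, one pins down the expansion coefficients and obtains an \emph{addition formula} of the shape
\[
\vartheta_x(z_1 + z_2)\,\vartheta_y(z_1 - z_2) \;=\; \sum_{t \in \Ztwo} \vartheta^{2\delta}_{x+y+t}(z_1)\,\vartheta^{2\delta}_{x-y+t}(z_2),
\]
up to a universal constant, with the indices embedded via \eqref{eq_class_embedding}.

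The next step is to apply this formula twice, once to the left-hand side and once to the right-hand side of the claimed identity, after specializing the appropriate variable to $0$ to turn half of the level-$\delta$ theta functions into theta null values $a_i$. Both sides then get rewritten as expressions in the same level-$2\delta$ basis. Summing against the character $\chi \in \dZtwo$ and exploiting orthogonality of characters of $\Ztwo$, the level-$2\delta$ contributions cancel symmetrically between the two sides, leaving the stated Riemann relation. The congruence hypothesis $x \equiv y \equiv u \equiv v \pmod{\Ztwo}$ is exactly what ensures that every index appearing in the identity lies in $\Zdelta$ once one adds $t \in \Ztwo$, so that the level-$\delta$ theta functions are indexed correctly.

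The main obstacle is purely bookkeeping: one must verify that the induced theta structure on $(A_k, \pol^{\otimes 2})$ matches the conventions of Proposition~\ref{prop1}, track the factors of the commutator pairing $e_\delta$ hidden inside $\chi$ through repeated use of \eqref{eq_actiontheta}, and check that the various universal constants cancel between left- and right-hand sides. These are technical but routine applications of the theory developed in \cite{MR34:4269}, and produce no additional hypothesis beyond $2 \mid \delta$.
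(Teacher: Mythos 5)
The paper does not supply its own proof of Theorem~\ref{riemannquad}: the statement is quoted as a known result from Mumford's \cite{MR34:4269} (the sentence immediately after the theorem points to the same reference for the further fact that these relations cut out the image of $A_k$). So there is no in-paper proof to compare against; what you have reconstructed is Mumford's original argument, and the outline you give is accurate: the difference isogeny $\epsilon(z_1,z_2)=(z_1+z_2,z_1-z_2)$ with kernel the diagonal $A_k[2]$, the isomorphism $\epsilon^*(\pol\boxtimes\pol)\cong\pol^{\otimes 2}\boxtimes\pol^{\otimes 2}$ coming from symmetry of $\pol$, the induced level-$2\delta$ theta structure on $(A_k,\pol^{\otimes 2})$, Mumford's duplication/addition formula, and the sum against a character $\chi\in\dZtwo$. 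Your observation about the role of the congruence condition (so that $x\pm y$, $x\pm u$, $u\pm v$, $y\pm v$ all land in $\Zdelta\subset\Zstruct{2\delta}$ once $t\in\Ztwo$ is added, using $\Ztwo\subset\Zdelta$ when $2\mid\delta$) is also correct.

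The one place your sketch is too telegraphic is the final paragraph: the Riemann quartic relation does not simply fall out of applying the addition formula ``once to each side and cancelling level-$2\delta$ terms.'' After specializing $z_2=0$, the addition formula produces a bilinear relation between level-$\delta$ sections and level-$2\delta$ theta null values; to obtain the quadrilinear relation in the statement, Mumford reorganizes a four-fold application (effectively comparing two different factorizations of a map $A_k^4\to A_k^4$), and it is this reorganization that makes the $\chi$-isotypic components on the two sides match up. That step is the genuine combinatorial content of the proof and should be spelled out rather than absorbed into ``bookkeeping.'' Apart from that, the plan is sound and reproduces the classical argument the paper is citing.
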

It is moreover proved in \cite{MR34:4269} that if $4|\delta$ the image
of $A_k$ by the projective morphism defined by $\Thetadelta$ is the
closed subvariety of $\proj_k^{d-1}$ defined by the homogeneous ideal
generated by the relations of Theorem \ref{riemannquad}.

A consequence of Theorem~\ref{riemannquad} is the fact that if
$4|\delta$, from the knowledge of a valid theta null point $(a_i)_{i
  \in \Zdelta}$, one can recover a couple $(A_k, \pol)$ from which it
comes from. In fact, the abelian variety $A_k$ is defined by the
homogeneous equations of Theorem~\ref{riemannquad}. Moreover, from the
knowledge of the projective embedding of $A_k$, one recover immediately
$\pol$ by pulling back the sheaf $\mathcal{O}(1)$ of the projective
space.

An immediate consequence of the preceding theorem is the
\begin{theorem}\label{thetarel}
  Let $(a_i)_{i \in \Zdelta}$ be the theta null point associated to
  an abelian variety with a $\delta$-marking $(A_k, \pol,
  \Thetadelta)$ where $2 | \delta$.  For all $x,y,u,v \in Z(2 \delta)$
  which are congruent modulo $\Zdelta$, and all $\chi \in \dZtwo$, we
  have
\begin{eqnarray*}
\big(\sum_{t \in \Ztwo} \chi(t) a_{x+y+t} a_{x-y+t}\big).\big(\sum_{t \in \Ztwo} \chi(t)
a_{u+v+t} a_{u-v+t}\big)= \\
=\big(\sum_{t \in \Ztwo} \chi(t) a_{x+u+t} a_{x-u+t}\big).\big(\sum_{t \in \Ztwo} \chi(t)
a_{y+v+t} a_{y-v+t}\big). 
\end{eqnarray*}
As $\Thetadelta$ is symmetric, the theta constants also satisfy the
additional symmetry relations $a_i=a_{-i}$, $i \in \Zdelta$.
\end{theorem}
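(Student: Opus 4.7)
The plan is to specialize Theorem~\ref{riemannquad} to the theta null values. Since $a_i = \vartheta_i(0)$ by definition, evaluating the functional identity of Theorem~\ref{riemannquad}---an equality of global sections of $\pol\otimes\pol$---at the origin $0 \in A_k$ substitutes every $\vartheta_j$ by the corresponding theta constant $a_j$, producing a quadratic identity of exactly the claimed shape among the theta null values.

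Carried out directly, this yields Theorem~\ref{thetarel} when $(x,y,u,v)$ is congruent modulo $\Ztwo$. To cover the weaker hypothesis that the tuple is congruent only modulo the larger subgroup $\Zdelta$, I would exploit the action of $\Hdelta$ on the basis $(\vartheta_i)_{i\in\Zdelta}$ described by \eqref{eq_actiontheta}: for any $\alpha \in \Zdelta$, the element $(1,\alpha,0) \in \Hdelta$ sends $\vartheta_k$ to $\vartheta_{k+\alpha}$. Equivalently, evaluating the identity of Theorem~\ref{riemannquad} at a suitably chosen $2$-torsion translate of the origin in $K(\pol)$ introduces a controlled shift of indices together with character factors $e_\delta(\cdot,-\cdot)$. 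By choosing the translation so that the shifted tuple is congruent modulo $\Ztwo$, one reduces the $\Zdelta$-congruent case to the already-known $\Ztwo$-congruent case, with the extra character factors entering symmetrically on the two sides of the identity and cancelling.

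The additional symmetry relation $a_i = a_{-i}$ reflects the hypothesis that $\Thetadelta$ is a symmetric theta structure in the sense of~\cite[pp.~317]{MR34:4269}: for such a structure the basis sections satisfy $[-1]_{A_k}^* \vartheta_i = \vartheta_{-i}$ as sections of the symmetric line bundle $\pol$, and evaluating this equality at the fixed point $0 = -0$ yields $a_i = a_{-i}$ immediately.

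The only real obstacle is the bookkeeping of the character factors $e_\delta$ in the reduction step: one must verify that the factors produced by the translation appear in a matched configuration on the two sides of the identity, so that their contributions cancel. This is a routine computation based on the bilinearity of the commutator pairing on $\Kdelta$, but it is the only non-formal part of the argument; everything else is a direct specialization of the section-level Riemann relations.
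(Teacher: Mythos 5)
Your core idea---evaluating the section-level identity of Theorem~\ref{riemannquad} at the origin, which substitutes each $\vartheta_j$ by $\vartheta_j(0)=a_j$ and turns it into a quadratic identity among theta constants---is exactly the paper's argument: the paper states Theorem~\ref{thetarel} with no further justification beyond calling it an immediate consequence of the preceding theorem, and the symmetry $a_i=a_{-i}$ follows from the symmetry of $\Thetadelta$ precisely as you describe.

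Where you diverge from the paper is the extra ``bridge'' you propose to pass from a $\Ztwo$-congruence hypothesis to the $\Zdelta$-congruence hypothesis of Theorem~\ref{thetarel}. That step is unnecessary and, as described, does not actually do what you want. It is unnecessary because $\Zdelta$-congruence is already the minimal hypothesis under which the symbols $a_{x\pm y+t}$, $a_{u\pm v+t}$, etc.\ are defined: one needs $x\pm y$, $u\pm v$, $x\pm u$, $y\pm v$ to lie in $\Zdelta$ (and the shift by $t\in\Ztwo\subset\Zdelta$ changes nothing), which is equivalent to $x\equiv y\equiv u\equiv v\ (\mathrm{mod}\ \Zdelta)$ since $2z\in\Zdelta$ for every $z\in\Zstruct{2\delta}$. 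The phrase ``congruent modulo $\Ztwo$'' in Theorem~\ref{riemannquad} should be read as a misprint for ``modulo $\Zdelta$''; otherwise most instances of Theorem~\ref{thetarel} would not even be well-typed, and the natural form of Mumford's Riemann relations uses the $\Zdelta$ condition. As for the bridge itself: translating by a $2$-torsion point of $K(\pol)$ shifts indices by elements of $\Ztwo$, hence preserves every $\Ztwo$-congruence class, so it cannot reduce a $\Zdelta$-congruent tuple to a $\Ztwo$-congruent one. To change the $\Ztwo$-class you would need to translate by an element of $\Zdelta$ that is not $2$-torsion, in which case you are no longer evaluating at a $2$-torsion point and no longer land directly on theta constants. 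The clean statement is simply that Theorem~\ref{riemannquad} holds under $\Zdelta$-congruence, and evaluating at $0$ is the entire proof.
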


The Theorem \ref{thetarel} gives equations satisfied by the
theta null points of abelian varieties together with a
$\delta$-marking. Let $\Mbardelta$ be the projective variety
over $k$ defined by the symmetry relations together with the relations
from theorem~\ref{thetarel}. 
Mumford proved in \cite{MR36:2621} the following

\begin{theorem}
 Suppose that $8 | \delta$. Then
 \begin{enumerate}
   \item 
     $\Mdelta$ is a classifying space for abelian varieties with a
     $\delta$-marking: to a theta null point corresponds a unique
     triple $(A_k, \pol, \Thetadelta)$.
   \item 
     $\Mdelta$ is an open subset of $\Mbardelta$.
 \end{enumerate}
  \label{th_fond_mumford}
\end{theorem}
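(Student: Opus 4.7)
The plan is to establish the inclusion $\Mdelta \subseteq \Mbardelta$ first, then to construct an inverse procedure which reconstructs a triple $(A_k,\pol,\Thetadelta)$ from any point of $\Mdelta$, and finally to argue openness. The first inclusion is essentially the content of Theorem~\ref{thetarel} together with the symmetry of $\Thetadelta$: applied to the theta null values themselves, the quartic relations hold, and symmetry of the theta structure forces $a_i=a_{-i}$. This identifies $\Mdelta$ set-theoretically with a subset of $\Mbardelta$.

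For the reverse direction, starting with a candidate theta null point $(a_i)_{i\in\Zdelta}\in\Mbardelta$, I would substitute these values into the Riemann relations of Theorem~\ref{riemannquad} viewed as equations in formal variables $\vartheta_i$; this cuts out a closed subscheme $X\subseteq\proj_k^{d-1}$. The Heisenberg action \eqref{eq_actiontheta} on the coordinates $\vartheta_i$ gives an action of $\Hdelta$ on $\proj_k^{d-1}$, factoring through $\Kdelta$, and one checks that the ideal generated by the Riemann relations is preserved, so that $\Kdelta$ acts on $X$. The heart of the argument is to show that, provided $8\mid\delta$, the scheme $X$ is smooth, geometrically irreducible of dimension $g$, and that $\Kdelta$ acts freely and transitively on the closed subscheme where all $\vartheta_i\neq 0$ simultaneously fails in a controlled way; this makes $X$ a torsor under $\Kdelta$, and after designating the image of $(a_i)$ as the origin, an abelian variety $A_k$. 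The polarization $\pol$ is recovered as the pullback of $\mathcal{O}(1)$, and the theta structure is read off from the lift of the $\Kdelta$-action to $\Hdelta$ on the sections $\vartheta_i$.

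Uniqueness of the triple attached to a given theta null point is then immediate: the projective embedding, the ample line bundle, and the compatible Heisenberg action are all directly determined by the coordinates $(a_i)$, so any two triples with the same theta null point are canonically identified. Upgrading this to a classifying property requires running the same construction over an arbitrary base scheme $S$, which presents no additional conceptual difficulty: the equations of Theorems~\ref{riemannquad} and~\ref{thetarel} and the formula \eqref{eq_actiontheta} are all defined over $\Z$.

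For the openness statement, I would describe $\Mbardelta\setminus\Mdelta$ as the locus of degenerate parameters, i.e.\ those $(a_i)$ for which the scheme $X$ fails to be smooth of dimension $g$, or for which the $\Kdelta$-action fails to be free. Each of these conditions is cut out by the vanishing of appropriate minors of Jacobians or of combinations of the $a_i$, hence is closed; its complement $\Mdelta$ is open. The principal obstacle is the geometric content of the second paragraph, namely verifying that the quartic system indeed defines a smooth abelian variety of the expected dimension for parameters in $\Mbardelta$, and it is precisely here that the hypothesis $8\mid\delta$ is needed, to guarantee enough sections of $\pol$ to obtain a projectively normal embedding that can be recognised purely from its defining equations. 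This is the technical substance of Mumford's analysis in~\cite{MR36:2621}, which we invoke.
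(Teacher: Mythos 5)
The paper does not prove this theorem at all: it is stated as a citation to Mumford's \emph{On the equations defining abelian varieties II} (\cite{MR36:2621}), and the text immediately preceding the statement says exactly that. So there is no internal proof in the paper for you to match. Your sketch is, in spirit, a summary of the shape of Mumford's argument and you do acknowledge at the end that the substance is being delegated to \cite{MR36:2621}; to that extent the two treatments agree.

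That said, your intermediate reconstruction step contains a misstatement that would be worth repairing before presenting it as a proof outline. You claim that $\Kdelta$ ``acts freely and transitively on'' (a piece of) $X$ and that this ``makes $X$ a torsor under $\Kdelta$, and after designating the image of $(a_i)$ as the origin, an abelian variety.'' A torsor under the finite group scheme $\Kdelta$ would be finite, not $g$-dimensional, and fixing an origin on a smooth projective variety does not by itself produce a group law. What is actually true, and what Mumford establishes, is that $\Kdelta$ acts freely on $X$ by translations (via the Heisenberg action on coordinates), that $X$ is smooth, geometrically connected, of dimension $g$, and — crucially — that one can construct the addition law on $X$ (e.g.\ via the multiplication/addition formula of algebraic theta functions) making it an abelian variety with the given origin; the polarization and theta structure are then read off as you describe. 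The level condition $8\mid\delta$ (Mumford works with $4\mid\delta$ and an extra symmetry hypothesis, which $8\mid\delta$ guarantees) is what makes the Riemann relations cut out precisely the abelian variety and makes the theta null point determine the triple up to isomorphism, and this is the technical core you are right to defer to Mumford. You should also be more careful to start from a point of $\Mdelta$ rather than an arbitrary point of $\Mbardelta$ when claiming the reconstruction succeeds, since on $\Mbardelta\setminus\Mdelta$ the scheme $X$ will fail to be an abelian variety, which is exactly the openness assertion you want to derive, not something you can assume at the outset.
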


A geometric point $P$ of $\Mbardelta$ is called a theta constant. If a
theta constant $P$ is in $\Mdelta$ we say that $P$ is a valid theta
null point, otherwise we say that $P$ is a degenerate theta null
point.

\begin{remark}
  As the results of Section~\ref{s_solutions} show, $\Mbardelta$ may not be
  a projective closure of $\Mdelta$. Nonetheless, 
  every degenerate theta null point
  can be obtained from a valid theta null point by a ``degenerate'' group
  action (see the discussion after Proposition~\ref{lem_damien}), hence the notation.
\end{remark}

\section{Theta null points and isogenies} \label{s_isogenies}
In this section, we are interested in the following situation.  Let
$\ell$ and $n$ be relatively prime integers and suppose that $n$ is
divisible by $2$.  Let $(A_k, \pol, \Thetaln)$ be a $g$-dimensional
abelian
variety together with a $(\overln)$-marking. We recall that
the theta structure $\Thetaln$ induces a
decomposition of the kernel of the polarization
\begin{equation}\label{decomp}
K(\pol)=K_1(\pol) \times K_2(\pol)
\end{equation}
into maximal isotropic subgroups for the commutator pairing associated
to $\pol$. Let $K$ be a maximal isotropic $\ell$-torsion subgroup of
$K(\pol)$ compatible with the decomposition (\ref{decomp}). There are
two possible choices for $K$, one contained in $K_1(\pol)$, the other
one in $K_2(\pol)$. In the next paragraph, we explain that a choice of
$K$ determines a certain abelian variety together with a
$\overn$-marking. The main results of this Section are Corollary
\ref{coro1} and Proposition \ref{prop3} which explain how to compute
the theta null points associated to the abelian variety together with
a $\overn$-marking defined by a choice of $K$.

Let $X_k$ be the quotient of $A_k$ by $K$ and let $\pi : A_k
\rightarrow X_k$ be the natural projection.  Let $\kappa : G(\pol)
\rightarrow K(\pol)$ be the natural projection deduced from the
diagram (\ref{thetastructure}). As $K$ is a subgroup of $K(\pol)$, we
can consider the subgroup $G$ of $G(\pol)$ defined as
$G=\kappa^{-1}(K)$.  Let $\tilde{K}$ be the level subgroup of
$G(\pol)$ defined as the intersection of $G$ with the image of
$(1,x,y)_{(x,y) \in \Zln \times \dZln} \subset \Hln$ by
$\Thetaln$. By the descent theory of Grothendieck,
we know that the data of $\tilde{K}$ is equivalent to the data of a
line bundle $\xpol$ on $X_k$ and an isomorphism $\lambda :
\pi^*(\xpol) \rightarrow \pol$.

Now, we explain that the $(\overln)$-marking on $A_k$
induces a $\overn$-marking on $X_k$. Let $G^*(\pol)$ be the
centralizer of $\tilde{K}$ in $G(\pol)$.  Applying \cite[Proposition 2 pp. 291]{MR34:4269}
, we obtain an isomorphism
\begin{equation}\label{iso1}
G^*(\pol)/ \tilde{K} \simeq G(\xpol)
\end{equation}
and as a consequence a natural projection $q : G^*(\pol) \rightarrow
G(\xpol)$.

As $\Hn$ is generated by the subgroups
$1_{\xg_m} \times \Zn \times 0_{\dZn}$
and $1_{\xg_m} \times 0_{\Zn} \times
\dZn$, in order to define a theta structure
$\Thetan : \Hn \rightarrow
G(\xpol)$, it is enough to give morphisms $1_{\xg_m} \times
\Zn \times 0_{\dZn} \rightarrow G(\xpol)$
and $1_{\xg_m} \times 0_{\Zn} \times \dZn
\rightarrow G(\xpol)$. Let $Z^*(\overln)$ be such that
$1_{\xg_m} \times Z^*(\overln) \times 0_{\dZln}
=\Thetaln^{-1}(G^*(\pol)) \cap \Zln$
and let $\hat{Z}^*(\overln)$ be such that $1_{\xg_m} \times
0_{\Zln} \times
\hat{Z}^*(\overln)=\Thetaln^{-1}(G^*(\pol)) \cap
\dZln$.

As $\hat{Z}^*(\overln)$ is in the orthogonal of
$\overline{\Theta}^{-1}_{\overln}(K)$ for the commutator
pairing, we have $\hat{Z}^*(\overln)=\dZln$ or 
$\hat{Z}^*(\overln)=\dZn$
depending on the choice of $\tilde{K}$. In any case, there exists
a natural projection $p : \hat{Z}^*(\overln) \rightarrow
\dZn$.  In the same way, $Z^*(\overln)=\Zln$ or 
$Z^*(\overln)=\Zn$ and there is a natural injection $i :
\Zn \rightarrow Z^*(\overln)$.

We can define $\Thetan$ as the unique theta structure
for $\xpol$ such that the following diagrams are commutative
\begin{equation}\label{eq1}
\xymatrix{
(1,0,y)_{y \in \hat{Z}^*(\overln)} \ar[d]^{\tilde{p}}
\ar[r]^{\Thetaln} & G^*(\pol) \ar[d]^q \\ 
(1,0,y)_{y \in \hat{Z}(\overn)} \ar[r]^{\Thetan} & G(\xpol) 
},
\end{equation}

\begin{equation}\label{eq2}
  \xymatrix{
    (1,x,0)_{y \in Z^*(\overln)} \ar[r]^{\Thetaln} & G^*(\pol) \ar[d]^q \\
    (1,x,0)_{y \in \Zn}  \ar[u]^{\tilde{i}} \ar[r]^{\Thetan} & G(\xpol) 
  },
\end{equation}
where $\tilde{i}$ is deduced from $i$ and $\tilde{p}$ is deduced from
$p$. Using the fact that $\Thetaln$ is symmetric, it is easy to see
that $\Thetan$ is also symmetric.

We say that the theta structures $\Thetaln$ and
$\Thetan$ are $\pi$-compatible (or compatible) if the diagrams
(\ref{eq1}) and (\ref{eq2}) commute.

Let $K_1$ and $K_2$ be the maximal $\ell$-torsion subgroups of
respectively $K_1(\pol)$ and $K_2(\pol)$. By taking $K=K_2$ and
$K=K_1$ in the preceding construction, we obtain respectively $(B_k,
\ppol, \Thetan)$ and $(C_k, \pppol, \Theta'_{\overn})$ two abelian varieties
with a $\overn$-marking. As a consequence, we have a well defined
modular correspondence
\begin{equation}\label{modular}
\Phi_\ell: \Mln \rightarrow \Mn
\times \Mn.
\end{equation}

Let $\pi : A_k \rightarrow B_k$ and $\pi': A_k \rightarrow C_k$ be the
isogenies deduced from the construction.  Let $[\ell]$ be the isogeny
of multiplication by $\ell$ on $B_k$ and let $\hat{\pi} : B_k
\rightarrow A_k$ be the isogeny such that $[\ell]= \pi \circ
\hat{\pi}$. From the symmetry of $\pol$ we deduce that $\ppol$ is
symmetric and by applying the formula of \cite[pp. 289]{MR34:4269}, we
have $[\ell]^* \ppol= \ppol^{\ell^2}$.  The following diagram shows
that $C_k$ is obtain by quotienting $B_k$ by a maximal isotropic
subgroup of $(B_k, \ppol^{\ell^2})$ of order $\ell^{2g}$.

\begin{equation}
\xymatrix{
B_k \ar[dd]^{[\ell]} \ar[dr]^{\hat{\pi}} & & \\
& A_k \ar[dl]^{\pi} \ar[dr]^{\pi'} & \\
B_k & & C_k \\
}.
  \label{eq_diag1}
\end{equation}

The following two propositions explain the relation between the theta
null point of $(A_k, \pol, \Thetaln)$ and the theta
null points of $(B_k, \ppol, \Thetan)$ and
$(C_k, \pppol, \Theta'_{\overn})$.
Keeping the notations of the previous paragraph, we have
\begin{proposition}
  Let $(A_k, \pol, \Thetaln)$, $(B_k, \ppol,
  \Thetan)$ and $\pi:A_k \rightarrow B_k$ be defined as
  above. There exists a constant factor $\omega \in \overline{k}$ such
  that for all $i \in \Zn$, we have
  $\pi^{*}(\vartheta_i^{\Thetan}) = \omega
  \vartheta_i^{\Thetaln}$. In this last relation,
  $\Zn$ is identified as a subgroup of $\Zln$ 
  via the map $x \mapsto \ell x$. 
\end{proposition}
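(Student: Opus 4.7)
The plan is to unwind the formula of Proposition~\ref{prop1} on both sides, pull the formula for $\vartheta_i^{\Thetan}$ back by $\pi$, and then match it to $\omega \vartheta_{\ell i}^{\Thetaln}$ using the $\pi$-compatibility of the theta structures encoded in diagrams~(\ref{eq1}) and (\ref{eq2}). For $i \in \Zn$, write $(x_i, \psi_i) := \Thetaln((1, \ell i, 0))$ and $(x'_i, \psi'_i) := \Thetan((1, i, 0))$. The identity to establish is
\[
\pi^*\bigl(\tau^*_{-x'_i}\psi'_i(s')\bigr) \;=\; \omega \cdot \tau^*_{-x_i}\psi_i(s),
\]
where $s$ on $A_k$ and $s'$ on $B_k$ are the distinguished sections of $\pol$ and $\ppol$ coming from the Grothendieck descent in Proposition~\ref{prop1}.

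The heart of the proof is the comparison of the two descents that produce $s$ and $s'$. Let $\pi_A : A_k \to \Az_k = A_k/K_2(\pol)$ and $\pi_B : B_k \to B_k^0 := B_k/K_2(\ppol)$ be the canonical projections. In the case at hand ($K = K_2$), a direct check shows $\hat{Z}^*(\overln) = \dZln$, so that diagram~(\ref{eq1}) says $\Thetaln$ sends $(1,0,y)_{y \in \dZln}$ (whose image is the level subgroup $\tilde{K}_2$ associated to $\pol$) into $G^*(\pol)$, and $q : G^*(\pol) \to G(\ppol)$ carries it onto $\tilde{K}_2(\ppol)$. Projecting via $\kappa$ yields $\pi(K_2(\pol)) = K_2(\ppol)$ and $\pi^{-1}(K_2(\ppol)) = K_2(\pol)$, so $\pi$ induces an isomorphism $\bar\pi : \Az_k \xrightarrow{\sim} B_k^0$ with $\pi_B \circ \pi = \bar\pi \circ \pi_A$. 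More importantly, the identity $q(\tilde{K}_2(\pol)) = \tilde{K}_2(\ppol)$ says that the descent datum on $\pol$ used to produce $\pol_0$ on $\Az_k$ coincides, via $q$ and $\lambda$, with the descent datum on $\ppol$ used to produce the degree-one quotient line bundle on $B_k^0$; hence their pullbacks under $\bar\pi$ are canonically isomorphic descents of $\pol$ on $\Az_k$.

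Since both distinguished sections are nonzero global sections of the same degree-one ample line bundle on $\Az_k = B_k^0$, they are proportional. Pulling this proportionality back by $\pi_A$ and using the canonical identification of line bundles established above gives $\pi^*(s') = \omega \cdot s$ for a uniquely determined $\omega \in \bar k^{\times}$.

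To finish, diagram~(\ref{eq2}), applied with the embedding $\tilde{i} : \Zn \hookrightarrow \Zln$, $x \mapsto \ell x$, asserts $q(\Thetaln((1, \ell i, 0))) = \Thetan((1, i, 0))$, i.e.\ $\pi(x_i) = x'_i$ and the isomorphism $\pi^*\psi'_i : \pi^*\ppol \to \tau^*_{x_i}\pi^*\ppol$ corresponds, via $\lambda : \pi^*\ppol \xrightarrow{\sim} \pol$, to $\psi_i : \pol \to \tau^*_{x_i}\pol$. Combining this with $\pi^*(s') = \omega\cdot s$ yields the displayed identity. The main obstacle is the comparison of descents in the second paragraph: identifying the two degree-one line bundles on $\Az_k = B_k^0$ is exactly what the $\pi$-compatibility of theta structures from (\ref{eq1}) is designed to give, and once it is in place the remainder is a bookkeeping computation using the explicit formula of Proposition~\ref{prop1}.
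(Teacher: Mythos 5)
Your proof is correct and takes essentially the same route as the paper's: both isolate the $i=0$ case, prove $\pi^*(s')=\omega s$ by comparing the two Grothendieck descents of $\pol$ and $\ppol$ down to the common degree-one quotient $A_k/K_2(\pol)=B_k/K_2(\ppol)$ using diagram~(\ref{eq1}), and then propagate to all $i\in\Zn$ via diagram~(\ref{eq2}). The only difference is stylistic — you introduce the isomorphism $\bar\pi$ between the two quotients explicitly, whereas the paper identifies them outright as a single variety $D_k$ with $\pi_A=\pi_B\circ\pi$.
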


\begin{proof}
  This proposition is a particular case of the isogeny theorem 
\cite[Th. 4]{MR34:4269} but we give here a direct proof.

Let $\tilde{K}_A$ be the level subgroup of $G(\pol)$ defined by the
image of $(1,0,y)_{y \in \dZln}$ by $\Thetaln$ and let $K_A$ be the
subgroup of $A_k$ which is the image of $(0,y)_{y \in \dZln}$ by
$\overline{\Theta}_{\overln}$. Let $D_k$ be the quotient of $A_k$ by
$K_A$ and $\pi_A : A_k \rightarrow D_k$ the natural projection. The
data of $\tilde{K}_A$ gives a couple $(\pol_A, \lambda_A)$ where
$\pol_A$ is a degree one line bundle on $D_k$ and $\lambda_A$ is an
isomorphism $\lambda_A : \pi_A^*(\pol_A) \rightarrow \pol$. We recall
that $\tilde{K}$ be the level subgroup of $G(\pol)$ defined as the
intersection of $G=\kappa^{-1}(K)$ with the image of $(1,x,y)_{(x,y)
  \in \Zln \times \dZln} \subset \Hln$ by $\Thetaln$.

In the same manner, we can consider $\tilde{K}_B$ the level subgroup
of $G(\ppol)$ defined by the image of $(1, 0, y)_{y \in \dZn}$ by
$\Thetan$ and $K_B$ the subgroup of $B_k$ which is the image of
$(0,y)_{y \in \dZn}$ by $\overline{\Theta}_{\overn}$. By (\ref{eq1})
$K_B=\pi(K_A)$ and by definition of $\pi$ its kernel $K$ is contained
in $K_A$. We deduce that $D_k$ is the quotient of $B_k$ by $K_B$ and
$\pi_A = \pi_B \circ \pi$ where $\pi_B$ is the natural projection $B_k
\to D_k$. Because of (\ref{eq1}) and the fact that
$\hat{Z}^*(\overln)= \dZln$, we have an isomorphism $\tilde{K}_B\simeq
\tilde{K}_B / \tilde{K}$ and the data of $\tilde{K_B}$ gives a couple
$(\pol_B, \lambda_B)$ where $\lambda_B$ is an isomorphism $\lambda_B :
\pi_B^*(\pol_B) \rightarrow \ppol$ and we have $\pol_B=\pol_A$ and
$\lambda_A \circ \pi_A^* = \lambda \circ \pi^* \circ \lambda_B \circ
\pi_B^*$.

If $s_0$ is the unique global section of $\pol_A$ up to multiplication
by a constant factor, we have $\lambda_A(\pi_A^*(s_0))=
\lambda(\pi^*(\lambda_B(\pi_B^*(s_0))))$.  By definition,
$\vartheta_0^{\Thetan} = \lambda_B(\pi_B^*(s_0))$ and
$\vartheta_0^{\Thetaln} = \lambda_A(\pi_A^*(s_0))$. As a consequence,
there exists $\omega \in \overline{k}$ such that we have that
$\pi^*(\vartheta_0^{\Thetan}) = \omega \vartheta_0^{\Thetaln}$.

Let $s=\vartheta_0^{\Thetaln}$ and $s'=\vartheta_0^{\Thetan}$. We set
for all $i \in \Zln$, $(x_i, \psi_i)=\Thetaln((1,i,0))$ and for all $i
\in Z(n)$, $(x'_i, \psi'_i)=\Thetan((1,i,0))$. Then
$\pi^*(\vartheta_i^{\Thetan})=\pi^*(\psi'_i \tau^*_{-x'_i}
(s'))=\psi_i \tau^*_{-x_i} \pi^* (s')$ by the commutativity of
(\ref{eq2}). But we already know that $\pi^*(s')=\omega s$ and $\psi_i
\tau^*_{-x_i}(\omega s)=\omega \vartheta_i^{\Thetaln}$. This concludes
the proof.
\end{proof}

As an immediate consequence of the preceding proposition, we have
\begin{corollary}\label{coro1}
  Let $(A_k, \pol, \Thetaln)$ and $(B_k, \ppol, \Thetan)$ be defined
  as above. Let $(a_u)_{ u \in \Zln}$ and $(b_u)_{ u \in \Zn}$ be
  theta null points respectively associated to $(A_k, \pol, \Thetaln)$
  and $(B_k, \ppol, \Thetan)$. Considering $\Zn$ as a subgroup of
  $\Zln$ via the map $x \mapsto \ell x$, there exists a constant
  factor $\omega \in \overline{k}$ such that for all $u \in \Zn$, $b_u
  = \omega a_u$.
\end{corollary}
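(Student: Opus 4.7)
The corollary is an immediate specialisation of the preceding proposition to the zero point, and essentially all of the work is already contained there; the plan is simply to evaluate the section-level identity at $0_{A_k}$ and read off the consequence on theta null coordinates.

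More precisely, I would first recall that by construction of the theta null points (Proposition~\ref{prop1}) the coordinate $a_u$ is, up to a single global scalar independent of $u$, the value $\vartheta_u^{\Thetaln}(0_{A_k})$, and similarly $b_u$ is, up to a global scalar, $\vartheta_u^{\Thetan}(0_{B_k})$. Since we are working with projective coordinates on $\Mln$ and $\Mn$, these global scalars may be absorbed into the constant $\omega$ we are about to produce, so there is no loss of generality in identifying the projective representatives with the actual values of the theta sections at the origin.

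Next I would use the fact that $\pi : A_k \to B_k$ is a homomorphism of abelian varieties, so $\pi(0_{A_k}) = 0_{B_k}$. Applying the previous proposition, there exists $\omega \in \bar k$ such that
\begin{equation*}
\pi^{*}\bigl(\vartheta_u^{\Thetan}\bigr) = \omega\, \vartheta_u^{\Thetaln} \qquad \text{for all } u \in \Zn,
\end{equation*}
where $\Zn \hookrightarrow \Zln$ via $x \mapsto \ell x$. Evaluating both sides at $0_{A_k}$ gives
\begin{equation*}
\vartheta_u^{\Thetan}(0_{B_k}) = \omega\, \vartheta_u^{\Thetaln}(0_{A_k}),
\end{equation*}
which is exactly the required relation $b_u = \omega a_u$ after absorbing the (common, $u$-independent) normalisation scalars into $\omega$.

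There is essentially no obstacle: the only point that requires a moment of care is the compatibility of the embedding $\Zn \hookrightarrow \Zln$, $x \mapsto \ell x$, with the statement of the proposition, so that the index $u$ appearing in $a_u$ and $b_u$ refers to the same element on both sides; this is precisely the identification fixed in the statement of the proposition and is inherited unchanged. The passage from a relation between global sections on $A_k$ to a relation between coordinates of theta null points is then a one-line evaluation at the origin, and the scalar $\omega$ is the same scalar produced by the proposition (up to the projective normalisation of the two theta null points).
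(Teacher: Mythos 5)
Your proof is correct and follows exactly the route the paper intends: the paper states Corollary~\ref{coro1} as an ``immediate consequence'' of the preceding proposition, and your argument of evaluating the pullback identity $\pi^{*}(\vartheta_u^{\Thetan}) = \omega\, \vartheta_u^{\Thetaln}$ at the origin (using $\pi(0_{A_k})=0_{B_k}$) and absorbing projective normalisation scalars into $\omega$ is precisely that one-line deduction.
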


\begin{proposition}\label{prop3}
  Let $(A_k, \pol, \Thetaln)$ and $(C_k, \ppol, \Thetan)$ be defined
  as above. Let $(a_u)_{ u \in \Zln}$ and $(c_u)_{ u \in \Zn}$ be the
  theta null points respectively associated to $(A_k, \pol, \Thetaln
  )$ and $(C_k, \ppol, \Thetan)$. We have for all $u \in \Zn$,
\begin{equation}\label{eqiso}
c_u= \sum_{t \in \Zl} a_{u+t},
\end{equation}
where $\Zn$ and $\Zl$ are considered as subgroups of $\Zln$ via the
maps $j \mapsto \ell j$ and $j \mapsto n j$.
\end{proposition}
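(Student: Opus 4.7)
The plan is to establish the pullback identity
$$\pi'^{*}(\vartheta^{\Thetan}_u) \;=\; \omega' \sum_{t \in \Zl} \vartheta^{\Thetaln}_{u+t}$$
for a single constant $\omega' \in \overline{k}$ independent of $u$ (with the embeddings of the statement), and then evaluate at $0_{A_k}$, which $\pi'$ sends to $0_{C_k}$. Since theta null points are defined only up to a common projective scalar, the factor $\omega'$ is absorbed and~(\ref{eqiso}) follows.

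The case $u=0$ is obtained by exhibiting both sides as generators of a one-dimensional invariant subspace of $H^0(A_k,\pol)$. Set $\tilde{K} = \Thetaln(\{(1, nj, 0) : j \in \Zl\})$ (the level subgroup over $K_1$) and $\tilde{L} = \Thetaln(\{(1, 0, y) : y \in \dZn\})$, where $\dZn$ sits in $\dZln$ as $\hat{Z}^{*}(\overln)$. Because every $y \in \dZn$ is trivial on $n\Zln$, the Heisenberg commutator $y(nj)$ equals $1$, and a direct multiplication in $\Hln$ shows that $\tilde{K}\tilde{L} = \Thetaln(\{(1, nj, y) : j \in \Zl, y \in \dZn\})$ is a level subgroup of $G(\pol)$ (no non-trivial scalar intervenes). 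The same triviality makes it isotropic for the commutator pairing, and it has order $\ell^g n^g$, which is maximal in $K(\pol)$. Hence $H^0(A_k,\pol)^{\tilde{K}\tilde{L}}$ is one-dimensional. The section $\sum_{j \in \Zl} \vartheta^{\Thetaln}_{nj}$ lies in it: by~(\ref{eq_actiontheta}), $\Thetaln(1, nj', 0)$ permutes the summands, and $\Thetaln(1, 0, y)$ multiplies each $\vartheta^{\Thetaln}_{nj}$ by $e_{\overln}(nj, -y) = 1$. The section $\pi'^{*}(\vartheta^{\Thetan}_0)$ also lies in it: it is $\tilde{K}$-invariant as a pullback along $\pi'$, and by~(\ref{eq1}) we have $q(\tilde{L}) = \Thetan(\{(1, 0, y) : y \in \dZn\})$, under which $\vartheta^{\Thetan}_0$ is fixed by~(\ref{eq_actiontheta}) with $k=0$. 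By one-dimensionality, $\pi'^{*}(\vartheta^{\Thetan}_0) = \omega' \sum_{j \in \Zl} \vartheta^{\Thetaln}_{nj}$ for some $\omega' \in \overline{k}$.

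To propagate to arbitrary $u$, I would act by the theta group. Commutativity of~(\ref{eq2}) gives $q(\Thetaln(1, \ell u, 0)) = \Thetan(1, u, 0)$, so the action of $\Thetaln(1, \ell u, 0)$ on any $\tilde{K}$-invariant section of $\pol$ corresponds through $\pi'^{*}$ to the action of $\Thetan(1, u, 0)$ on the descended section on $C_k$. Applying this to the $u=0$ identity and expanding each side by~(\ref{eq_actiontheta}) (all pairings occur against $0$ and are trivial) yields
$$\pi'^{*}(\vartheta^{\Thetan}_u) \;=\; \omega' \sum_{j \in \Zl} \vartheta^{\Thetaln}_{\ell u + nj}.$$
Evaluating at the origin produces $c_u = \omega' \sum_{t \in \Zl} a_{u+t}$, which is~(\ref{eqiso}).

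The main obstacle is the identification step at $u=0$: one must simultaneously verify that $\tilde{K}\tilde{L}$ is a maximal isotropic level subgroup and that $\pi'^{*}(\vartheta^{\Thetan}_0)$ is jointly $\tilde{K}$- and $\tilde{L}$-invariant, using precisely the compatibility encoded by diagrams~(\ref{eq1}) and~(\ref{eq2}). Once this one-dimensional argument is in place, the rest reduces to Heisenberg equivariance and direct application of~(\ref{eq_actiontheta}).
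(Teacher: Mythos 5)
Your proof is correct, but it takes a different route from the paper's. The paper proves Proposition~\ref{prop3} in one line by verifying the hypotheses of Mumford's isogeny theorem \cite[Th.~4]{MR34:4269} (noting that $K' = \ker\pi'$ sits inside $K_1(\pol)$ and induces $\sigma : K_1(\pol)/K' \to K_1(\pppol)$), exactly as it cited the same theorem for the $B_k$ case. You instead re-derive the needed special case of the isogeny theorem from first principles: you form the maximal isotropic level subgroup $\tilde{K}\tilde{L}$ (over $K_1(\pol)[\ell] \times K_2(\pol)$, via the identification $\dZn \cong \ell\,\dZln = \hat{Z}^{*}(\overln)$), invoke the one-dimensionality of $H^0(A_k,\pol)^{\tilde{K}\tilde{L}}$, check that both $\pi'^{*}(\vartheta^{\Thetan}_0)$ and $\sum_{j}\vartheta^{\Thetaln}_{nj}$ live there, and then propagate to general $u$ by the Heisenberg equivariance encoded in the compatibility diagram~\eqref{eq2}. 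This is more self-contained and makes visible exactly which invariance properties drive the formula, at the cost of being longer; it is also the natural companion to the paper's own ``direct proof'' of the $B_k$-statement, which proceeded via descent of the section $s_0$ rather than via one-dimensionality. Both approaches are sound; yours makes the role of the compatibility diagrams~\eqref{eq1} and~\eqref{eq2} explicit, which the paper leaves implicit in the reference to the isogeny theorem.
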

\begin{proof}
  The theta structure $\Thetaln$ (resp.  $\Theta'_{\overn}$) induces a
  decomposition of the kernel of the polarization $K(\pol)=K_1(\pol)
  \times K_2(\pol)$ (resp.  $K(\pppol)=K_1(\pppol) \times
  K_2(\pppol)$).  Denote by $K'$ the kernel of $\pi'$. We have that
  $K'$ is a subvariety of $K_1(\pol)$ and we have an isomorphism:
  $$\sigma : K_1(\pol)/ K' \rightarrow K_1(\pppol).$$

  The hypothesis of \cite[Th. 4]{MR34:4269} are then verified and
  Equation~\eqref{eqiso} is an immediate application of this theorem.
\end{proof}

\section{The image of the modular correspondence} \label{s_modcorr}
In this section, we use the results of the previous section in order
to give equations for the image of the modular correspondence
$\Phi_\ell$.

We let $(B_k, \ppol, \Thetan)$ be an abelian variety together with a
$\overn$-marking and denote by $(b_u)_{u \in \Zn}$ its associated
theta null point. Let $\nu$ be the $2$-adic valuation of $n$. Unless
specified, we shall assume that $\nu \geq 3$.  Let $\mathscr{C}$ be
the reduced subvariety of $\Mn \times \Mn$ which is the image of
$\Phi_\ell(\Mln)$ in $\Mn \times \Mn$ given on geometric points by
$\pi: (a_u)_{u \in \Zln} \mapsto ((a_u)_{u \in \Zn}, (\sum_{ t \in
  \Zl} a_{u+t})_{u \in \Zn})$.

Denote by $p_1$ (resp. $p_2$) the restriction to $\mathscr{C}$ of the
first (resp.  second) projection from $\Mn \times \Mn$ into $\Mn$, and
let $\pi_1=p_1 \circ \pi$, $\pi_2=p_2 \circ \pi$.  We would like to
compute the algebraic set $\pi_2(\pi_1^{-1}((b_u)_{u \in \Zn}))$.  We
remark that this question is the analog in our situation to the
computation of the solutions of the equation $\Phi_\ell(j, X)$ defined
from the modular polynomial and $j \in \overline{k}$ a certain
$j$-invariant.

Let $\proj_k^{\Zln}=\SProj(k[x_u|u \in \Zln])$ be the ambient projective
space of $\Mbarln$, and
let $I$ be the homogeneous ideal defining $\Mbarln$
which is spanned by the relations of Theorem~\ref{thetarel}, together with
the symmetry relations.
Let $J$ be the image of $I$ under the specialization map
\begin{eqnarray*}
k[x_u|u \in
\Zln] \rightarrow k[x_u|u \in
\Zln,n u \not=0], \quad x_u \mapsto \left\{
\begin{array}{l@{,\hsp}l}
{b_u} & \mathrm{if} \quad u \in \Zn \\
{x_u} & \mathrm{else}
\end{array} \right. .
\end{eqnarray*}
and let $V_J$ be the affine variety defined by $J$.

Let $\tilde{\pi}^0_1 : \proj_k^{\Zln} \rightarrow \proj_k^{\Zn}$ and
$\tilde{\pi}^0_2 : \proj_k^{\Zln} \rightarrow \proj_k^{\Zn}$ be the
morphisms of the ambient projective spaces respectively defined on
geometric points by $(a_u)_{u \in \Zln} \mapsto (a_u)_{u \in \Zn}$ and
$(a_u)_{u \in \Zln} \mapsto (\sum_{t \in \Zl} a_{u+t})_{u \in \Zn}$.
Clearly, $\pi_1$ and $\pi_2$ are the restrictions of $\tilde{\pi}^0_1$
and $\tilde{\pi}^0_2$ to $\Mln$.  The morphism $\tilde{\pi}^0_1$ (resp
$\tilde{\pi}^0_2$) restricts to a morphism $\pionetilde: \Mbarln \to
\Mbarn$ (resp $\pitwotilde: \Mbarln \to \Mbarn$). By definition of
$J$, we have $V_J=\tilde{\pi}_1^{-1}(b_{u})_{u \in \Zn}$.

Let $S=k[y_u, x_v | u \in \Zn, v \in \Zln]$, we can consider $J$ as a
subset of $S$ via the natural inclusion of $k[x_u|u \in \Zln]$ into
$S$.  Let $\mathcal{L}'$ be the ideal of $S$ generated by $J$ together
with the elements $y_u - \sum_{t \in \Zl} x_{u+t}$ and
$\mathcal{L}=\mathcal{L}' \cap k[y_u| u \in \Zn]$. Let
$V_{\mathcal{L}}$ be the subvariety of $\mathbb{A}^{\Zn}$ defined by
the ideal $\mathcal{L}$.  By the definition of $\mathcal{L}$,
$V_{\mathcal{L}}$ is the image by $\tilde{\pi}_2$ of the fiber $V_J$,
so that $V_{\mathcal{L}}= \tilde{\pi}_2(\tilde{\pi}_1^{-1}(b_{u})_{u
  \in \Zn})$.

\begin{proposition}
  Keeping the notations from above, let $(b_u)_{u \in \Zn}$ be the
  geometric point of $\Mn$ corresponding to $(B_k, \ppol, \Thetan)$ .
  The algebraic variety $V^0_{\mathcal{L}}=\pi_2(\pi_1^{-1}(b_u)_{u
    \in \Zn})$ has dimension $0$ and is isomorphic to a subvariety of
  $V_{\mathcal{L}}$.
\end{proposition}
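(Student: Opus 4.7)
The plan is to exploit the modular interpretation of $\Mn$ afforded by Mumford's Theorem~\ref{th_fond_mumford}; it applies because $\nu \geq 3$ implies $8 \mid \overln$, so the geometric points of $\pi_1^{-1}((b_u))$ are in bijection with isomorphism classes of triples $(A_k,\pol,\Thetaln)$ whose quotient construction of Section~\ref{s_isogenies} applied with $K = K_2(\pol)[\ell]$ reconstructs $(B_k,\ppol,\Thetan)$.

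For the inclusion $V^0_{\mathcal{L}} \hookrightarrow V_{\mathcal{L}}$, I would argue concretely via normalization. Given a valid theta null point $(a_u)_{u\in\Zln}$ in $\pi_1^{-1}((b_u))$, Corollary~\ref{coro1} provides $\omega \in \overline{k}^\times$ such that $b_u = \omega a_u$ for all $u \in \Zn$. Rescaling the projective class, I may assume $a_u = b_u$ for $u \in \Zn$, and the resulting tuple $(a_u)_{u\in\Zln \setminus \Zn}$ then lies in $V_J$ by construction of the specialization defining $J$. Applying the linear map $\tilde{\pi}_2$ yields a point that, by the definition of $\mathcal{L}$, lies in $V_{\mathcal{L}}$. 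The recipe associates a well-defined element of $V_{\mathcal{L}}$ to each projective class, and its image in $V^0_{\mathcal{L}}$ is precisely $\pi_2((a_u))$; since the normalization $a_u = b_u$ is unique, this gives an injective morphism $V^0_{\mathcal{L}} \to V_{\mathcal{L}}$.

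For the dimension claim, I would show that $\pi_1^{-1}((b_u))$ is finite as a set. Via Theorem~\ref{th_fond_mumford}, each point corresponds to an isomorphism class $(A_k,\pol,\Thetaln)$ as above. Diagram~\eqref{eq_diag1} exhibits such $A_k$ as a quotient $B_k/H$, where $H \subseteq B_k[\ell]$ is a subgroup of order $\ell^g$ maximal isotropic for the commutator pairing of $\ppol^{\ell^2}$; there are only finitely many such $H$. For each $A_k$ so obtained, the symmetric theta structures $\Thetaln$ that are $\pi$-compatible with $\Thetan$ through~\eqref{eq1}--\eqref{eq2} form a torsor under a finite automorphism subgroup of $\Hln$ preserving those diagrams, hence are finite in number. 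Therefore $\pi_1^{-1}((b_u))$ is finite, and its image $V^0_{\mathcal{L}}$ under $\pi_2$, injecting into $V_{\mathcal{L}}$ by the previous paragraph, has dimension $0$.

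The main obstacle I anticipate is the finiteness count for compatible theta structures: while intuitively clear, making it rigorous requires identifying the subgroup of $\mathrm{Aut}(\Hln)$ that acts simply transitively on the set of symmetric theta structures $\Thetaln$ lifting the fixed $\Thetan$ under the projection $q:G^*(\pol) \to G(\ppol)$ of Section~\ref{s_isogenies}, and checking that this group is indeed finite.
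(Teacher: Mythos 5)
Your argument for zero-dimensionality takes a genuinely different route from the paper's. The paper's proof is a one-liner: it invokes the fact, cited from \cite{pc3}, that all of $V_J = \tilde{\pi}_1^{-1}((b_u)_{u\in\Zn})$ is zero-dimensional, and since $V^0_{\mathcal{L}}$ is the image under $\tilde{\pi}_2$ of the open subset $\pi_1^{-1}((b_u))=V^0_J$ of $V_J$, the claim follows. You instead prove finiteness of $\pi_1^{-1}((b_u))$ directly from the modular interpretation of Theorem~\ref{th_fond_mumford} (applicable since $\nu\geq 3$ gives $8\mid \ell n$): each geometric point in the fiber is the theta null point of a triple $(A_k,\pol,\Thetaln)$ whose theta structure is compatible with $\Thetan$, the dual isogeny exhibits $A_k$ as a quotient of $B_k$ by one of finitely many order-$\ell^{g}$ subgroups of $B_k[\ell]$, and for each such $A_k$ there are finitely many admissible symmetric theta structures. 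This is self-contained and geometrically transparent, but it is also strictly weaker than what the paper records: it only gives finiteness of the \emph{valid} locus $V^0_J$, saying nothing about the degenerate points $V_J\setminus V^0_J$, so it could not substitute for the cited fact in the algorithmic sections where zero-dimensionality of all of $V_J$ is what is needed.

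On the obstacle you flag in your last paragraph: you do not actually need to identify the subgroup of $\Aut\Hln$ acting simply transitively on the compatible theta structures (that is the content of Lemma~\ref{lem_automorphisme_compat} and Proposition~\ref{lem_damien}, proved later). For mere finiteness it suffices that the set of \emph{all} symmetric theta structures on a fixed $(A_k,\pol)$ is a torsor under $\Auts\Hln$, and $\Auts\Hln$ is a finite group because it is an extension of $Sp(\Kln)$ by $\Kln[2]$ via the split sequence~\eqref{eq_actionseqeq}; the compatible theta structures form a subset of this finite set. So the step you were worried about can be closed without the later machinery. Your normalization argument for embedding $V^0_{\mathcal{L}}$ into $V_{\mathcal{L}}$ is essentially what the paper leaves implicit in ``the preceding discussion.''
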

\begin{proof}
  From the preceding discussion the only thing left to prove is that
  $V^0_{\mathcal{L}}$ has dimension $0$.  But this follows from the
  fact that the algebraic variety $V_J$ has dimension $0$ \cite{pc3}.
\end{proof}

From an algorithmic point of view, the hard part of this modular
correspondence is the computation of $V^0_J=\pi_1^{-1}((b_u)_{u \in
  \Zn})$, the set of points in $V_J$ that are valid theta null points.
We proceed in two steps. First we compute the solutions in $V_J$ using
a specialized Gr\"obner basis algorithm (Section~\ref{subsec:general}) and
then we detect the valid theta null points using the results of the
next section (see Theorem~\ref{main}). But at first we recall the
geometric nature of $V^0_J$ given by Section~\ref{s_isogenies}:

\begin{proposition}
  $V^0_J$ is the locus of theta null points $(a_u)_{u \in \Zln}$ in
  $\Mln$ such that if $(A_k,\pol,\Thetastruct{\overln})$ is the
  corresponding variety with a $(\overln)$-marking then $\Thetaln$ is
  compatible with the theta structure $\Thetan$ of $B_k$.
  \label{prop_geometrie} 
\end{proposition}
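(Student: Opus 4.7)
The proof amounts to unpacking the definition of $V^0_J$ and combining Corollary~\ref{coro1} with the classifying property of $\Mn$. Recall that by hypothesis $\nu \geq 3$, so $8 \mid n$ and Theorem~\ref{th_fond_mumford} applies: a valid theta null point in $\Mn$ uniquely determines the triple $(B_k,\ppol,\Thetan)$ up to isomorphism. The plan is to establish a bijective correspondence between points of $V^0_J$ and triples $(A_k,\pol,\Thetaln)$ whose $\pi$-quotient data (in the sense of Section~\ref{s_isogenies}) reproduces the fixed $(B_k,\ppol,\Thetan)$.

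For the first direction (points of $V^0_J$ are compatible theta structures), I would take $(a_u)_{u\in\Zln}\in V^0_J\subset\Mln$ and let $(A_k,\pol,\Thetaln)$ be the corresponding triple, which exists and is unique by Theorem~\ref{th_fond_mumford} applied at level $\overln$. The construction of Section~\ref{s_isogenies} associates to this triple a quotient $(B'_k,\pol',\Theta'_{\overn})$ with $B'_k = A_k/K_2(\pol)[\ell]$ and a theta structure $\Theta'_{\overn}$ that is $\pi$-compatible with $\Thetaln$ by construction. Corollary~\ref{coro1} then gives a scalar $\omega\in\bar k$ such that the theta null point of $(B'_k,\pol',\Theta'_{\overn})$ is $(\omega a_u)_{u\in\Zn}$. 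Projectively this equals $(a_u)_{u\in\Zn}$, which by the definition of $\pi_1$ and the assumption $(a_u)\in V^0_J$ equals $(b_u)_{u\in\Zn}$. Applying Theorem~\ref{th_fond_mumford} at level $\overn$ now identifies $(B'_k,\pol',\Theta'_{\overn})$ with $(B_k,\ppol,\Thetan)$, so $\Thetaln$ is indeed compatible with $\Thetan$ in the sense of Section~\ref{s_isogenies}.

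Conversely, given any valid theta null point $(a_u)_{u\in\Zln}\in\Mln$ whose associated triple $(A_k,\pol,\Thetaln)$ has $\Thetaln$ compatible with $\Thetan$ on $B_k$, I would apply Corollary~\ref{coro1} directly: the theta null point of $(B_k,\ppol,\Thetan)$, which is $(b_u)_{u\in\Zn}$, equals $(\omega a_u)_{u\in\Zn}$ for some $\omega\in\bar k$, hence agrees projectively with $\tilde\pi_1^0\bigl((a_u)_{u\in\Zln}\bigr)$. Therefore $(a_u)\in\pi_1^{-1}\bigl((b_u)_{u\in\Zn}\bigr) = V^0_J$.

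The one delicate point, and arguably the main obstacle, is tying projective equality of theta null points to an actual identification of the theta structures, not merely of the underlying polarized varieties; this is exactly where $\nu\geq 3$ enters, via Theorem~\ref{th_fond_mumford}(1), which upgrades equality of theta null points in $\Mn$ to isomorphism of the marked triples. Without this hypothesis, several compatible theta structures could yield the same projective theta null point and the geometric description of $V^0_J$ would have to be formulated up to the residual action of the relevant automorphism group.
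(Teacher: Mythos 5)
Your proof follows the paper's argument essentially exactly: take a point of $V^0_J$, apply the construction of Section~\ref{s_isogenies} to produce a quotient $(B'_k,\pol_0',\Theta'_{\overn})$ compatible with $\Thetaln$, invoke Corollary~\ref{coro1} to identify its theta null point with $(b_u)_{u\in\Zn}$, and then use $\nu\geq 3$ together with Theorem~\ref{th_fond_mumford} to upgrade equality of theta null points to isomorphism of marked triples. You spell out the easy converse direction, which the paper leaves implicit, but the logical content and the crucial role of the classifying-space property are identical.
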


\begin{proof}
  Let $(a_u)_{u\in \Zln}$ be a geometric point of $V^0_J$.  Let
  $(A_k,\pol, \Thetastruct{\overln})$ be a corresponding variety with
  $(\overln)$-marking. If we apply the construction of
  Section~\ref{s_isogenies}, we get an abelian variety $(B'_k,\pol_0',
  \Thetastruct{\overn}')$ with a $\overn$-marking and an isogeny $\pi:
  A_k \to B'_k$ such that $\Thetaln$ is compatible with $\Thetan'$.
  By definition of $J$, Corollary~\ref{coro1} shows that the theta
  null point of $B'$ is $(b_u)_{u \in \Zn}$. As $\nu \geq 2$, by
  Proposition \ref{riemannquad} this means that $B' \simeq B$. Since
  $\nu \geq 3$, we even know by Theorem~\ref{th_fond_mumford} that the
  triples $(B'_k,\pol_0', \Thetastruct{\overn}')$ and $(B_k, \pol_0,
  \Thetastruct{\overn})$ are isomorphic, so that $\Thetaln$ is
  compatible with $\Thetan$.
\end{proof}

We say that the isogeny from Section~\ref{s_isogenies} $A_k \to B_k$
is the $\overl$-isogeny associated to the $(\overln)$-marking of
$A_k$.


\section{The solutions of the system}\label{s_solutions}
This section is devoted to the study of the geometric points of $V_J$.
Our aim is twofolds. First we need a way to identify degenerate theta
null points in $V_J$, and then we would like to know when two geometric
points in $V_J$ correspond to isomorphic varieties.

If $(a_u)_{u \in \Zln}$ is a valid theta null point, let $(A_k,\pol,
\Thetastruct{\overln})$ be the corresponding abelian variety with a
$(\overln)$-marking and denote by $\pi: A_k \to B_k$ the isogeny
defined in Section~\ref{s_isogenies}. From the knowledge of $(a_u)_{u
  \in \Zln}$, one can recover the coordinates the points of a maximal
$\ell$-torsion subgroup of $B_k$ of rank $g$.  Actually, even if
$(a_u)_{u \in \Zln}$ is not a valid theta null point it is possible to
associate to $(a_u)_{u \in \Zln}$ a set of $\ell$-torsion points $P_i
\in B_k$.  The main result of this section is Theorem~\ref{main} which
states that a geometric point of $V_J$ is non degenerate if and only if
the corresponding $P_i$ form a maximal subgroup of rank $g$ of the
$\ell$-torsion points of $B_k$. To prove this Theorem, we introduce an
action from the automorphisms of the theta group to the modular space
$\Mln$. Using Theorem~\ref{main} and this action, we make explicit the
structure of $V_J$: we explain when two valid points give isomorphic
varieties in Proposition~\ref{prop_structure}, and how to obtain every
degenerate points in the discussion following
Proposition~\ref{lem_damien}.

We start by making explicit the structure of the solutions of the
algebraic system defined by $J$.  For this let $\rho : \Zn \times \Zl
\rightarrow \Zln$ be the group isomorphism given by $(x,y) \mapsto
\ell x+ n y$. Denote by $I_{\Thetan}$ the ideal of $k[y_u | u \in
\Zn]$ for the theta structure $\Thetan$ generated by the equations of
Theorem \ref{riemannquad}. The homogeneous ideal $I_{\Thetan}$ defines
a projective variety $V_{I_{\Thetan}}$, isomorphic to $B_k$.

We have the following proposition \cite{pc3}:
\begin{proposition}
  \label{prop_pi}
  Let $(a_v)_{v \in \Zln}$ be a geometric point of $V_J$.  For any $i \in
  \Zl$ such that $(a_{\rho(j,i)})_{j \in \Zn} \ne (0, \ldots, 0)$, let
  $P_i$ be the geometric point, of $\proj_k^{\Zn}$ with homogeneous
  coordinates $(a_{\rho(j,i)})_{j \in \Zn}$.  Then for all $i \in \Zl$
  such that $P_i$ is well defined, $P_i$ is a $\ell$-torsion point of
  $V_{I_{\Thetan}}$.
\end{proposition}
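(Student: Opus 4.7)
The strategy is to establish two claims about $P_i$: (i) it is a geometric point of $V_{I_{\Thetan}} \simeq B_k$, and (ii) it is an $\ell$-torsion point on that variety. I first describe the geometric picture in the case where $(a_v)_{v \in \Zln}$ is a valid theta null point, which motivates and proves both claims simultaneously, and then sketch how the argument extends algebraically to arbitrary points of $V_J$.

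Suppose first that $(a_v)_{v \in \Zln}$ is a valid theta null point, associated to a triple $(A_k, \pol, \Thetaln)$. The element $-n i \in \Zln$ has order dividing $\ell$, so the point $x := \overline{\Theta}_{\overln}(-n i, 0)$ belongs to $K_1(\pol)[\ell]$. Using the action formula~\eqref{eq_actiontheta} applied to $(1, -n i, 0) \in \Hln$, the projective coordinates of $x$ in the $\Thetaln$-embedding of $A_k$ are, up to a common scalar, $(a_{v + n i})_{v \in \Zln}$. The isogeny-coordinate relation $\pi^* \vartheta_j^{\Thetan} = \omega \vartheta_j^{\Thetaln}$ from the proposition preceding Corollary~\ref{coro1} then yields that the image $\pi(x) \in B_k$ has projective coordinates $(a_{\ell j + n i})_{j \in \Zn} = (a_{\rho(j, i)})_{j \in \Zn}$ in the $\Thetan$-embedding of $B_k$; this is precisely $P_i$. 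Since $x$ is $\ell$-torsion in $A_k$, the point $P_i = \pi(x)$ is $\ell$-torsion on $B_k \simeq V_{I_{\Thetan}}$, which proves both claims in this case.

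For an arbitrary geometric point $(a_v)_{v \in \Zln} \in V_J$, possibly degenerate, the same conclusions must be established purely algebraically from the defining relations of $\Mbarln$. For (i), I specialize the $\overln$-Riemann relations of Theorem~\ref{thetarel} to indices of the form $X = \ell x + n i$, $Y = \ell y$, $U = \ell u$, $V = \ell v$ in $Z(2 \overln)$, with $x, y, u, v \in Z(2 \overn)$ congruent mod $Z(\overtwo)$; because $\ell$ is odd and $n$ is even, one checks that $X, Y, U, V$ are congruent mod $\Zln$ in $Z(2 \overln)$, so the theorem applies. Reading the resulting identity through the embeddings $Z(2 \overn) \hookrightarrow Z(2 \overln)$ and $\Zn \hookrightarrow \Zln$ recovers precisely the Riemann relation of Theorem~\ref{riemannquad} for the sections $\vartheta_j \leftrightarrow a_{\rho(j, i)}$ and the constants $b_u \leftrightarrow a_{\ell u}$; combined with the symmetry $a_{-v} = a_v$ inherited from $\Mbarln$, this shows $P_i \in V_{I_{\Thetan}}$. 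For (ii), the condition $\ell P_i = 0$ on $V_{I_{\Thetan}}$ can be encoded by polynomial identities in the coordinates of $P_i$ and the theta null $(b_u)$ via Mumford's algebraic addition-and-duplication formulas; these identities follow from the $\overn$-Riemann relations and transfer to $(a_v)$ by the same kind of index specialization as in (i).

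The main obstacle is the careful bookkeeping of the various embeddings $Z(\overtwo) \hookrightarrow Z(2 \overn) \hookrightarrow Z(2 \overln)$ and $\Zn, \Zl \hookrightarrow \Zln$, together with the parity analysis (using $\ell$ odd and $n$ even) that ensures the congruences needed for Theorem~\ref{thetarel} are preserved under the specialization. Once this bookkeeping is in place, both (i) and (ii) reduce to direct substitutions in an appropriate family of Riemann relations.
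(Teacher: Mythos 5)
The paper supplies no proof of Proposition~\ref{prop_pi}; it is imported from \cite{pc3}, where a stronger statement is proved ($i \mapsto P_i$ is a group homomorphism into $B_k[\ell]$). Your geometric argument for the case of a valid theta null point is correct and is exactly the picture the paper sketches right after the proposition: the coordinates $(a_{v+ni})_{v}$ describe a point of $K_1(\pol)[\ell]$ in $A_k$, which maps under $\pi$ to $P_i$. You also correctly observe that the proposition quantifies over \emph{all} geometric points of $V_J$, including degenerate ones not coming from any abelian variety, so a purely algebraic argument from the relations defining $V_J$ is required.

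That algebraic argument is where the proposal falls short. For claim (i) --- that $P_i$ lies on $V_{I_{\Thetan}}$ --- the index specialization you describe is the right move and does go through after the bookkeeping you flag is carried out (one subtlety: with $X = \ell x + n i$, $Y=\ell y$, $U=\ell u$, $V=\ell v$, the halved indices $(X\pm Y+t)/2$ land in the coset $\rho(\Zn, 2^{-1}i)$ rather than $\rho(\Zn, i)$, so you actually recover the relations for $P_{2^{-1}i}$ --- harmless, since $i\mapsto 2^{-1}i$ is a bijection of $\Zl$, but it must be noticed). Claim (ii) --- that $P_i$ is $\ell$-torsion --- is, however, not proved. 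Being on $V_{I_{\Thetan}}$ says nothing about torsion order, and the sentence ``these identities follow from the $\overn$-Riemann relations and transfer to $(a_v)$ by the same kind of index specialization'' is a description of a hoped-for proof, not a proof: the duplication/addition formulas are relations \emph{among several points}, and you never show that the data extracted from $(a_v)$ actually instantiates them. What is needed is to specialize the $\overln$-Riemann relations of Theorem~\ref{thetarel} with indices ranging over the cosets $\rho(\cdot,\mu i)$ for $\mu = 0,1,\ldots,\ell$ and show that this forces the chain $P_0, P_i, P_{2i}, \ldots$ to be a sequence of consecutive pseudo-additions on $V_{I_{\Thetan}}$, so that $P_{\mu i} = [\mu]P_i$ and hence $[\ell]P_i = P_{\ell i} = P_0$. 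This recursion in $\mu$ is precisely the kind of computation the paper does carry out explicitly inside Lemma~\ref{lemtech} (compare equation~\eqref{eq21} and the derived relation $\lambda_{\mu i} = \lambda_i^{\mu^2}$), and it has to be produced, not merely invoked.
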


The proof of the preceding proposition in \cite{pc3} proves moreover
that if we denote by $S$ the subset of $\Zl$ such that
$P_i$ is well defined for all $i \in S$, then $S$ is a subgroup of
$\Zl$, the set $\{ P_i , i \in S \}$ is a subgroup of the group of
$\ell$-torsion points of $V_{I_{\Thetan}}$ and the application $ i \in
S \to P_i \in B[\ell]$ is a group morphism. 

Suppose that $(a_v)_{v \in \Zln}$ is a valid theta null point. Let
$(A_k,\pol, \Thetastruct{\overln})$ be the corresponding abelian
variety with a $(\overln)$-marking and denote by $\pi: A_k \to B_k$
the isogeny defined in Section~\ref{s_isogenies}. 
We can consider $A_k$ as a closed subvariety of
$\proj_k^{Z(\overln)}$ via the morphism provided by
$\Thetastruct{\overln}$. Using the action~\eqref{eq_actiontheta} of
the theta group on $(a_v)_{v \in \Zln}$, one sees that for $i \in
\Zl$, the points with homogeneous coordinates $(a_{v+n i})_{v \in
  \Zln}$ form the isotropic (for the commutator pairing) $\ell$-torsion
  subgroup $K_1$ of $A_k$ (with the notations of Section~\ref{s_isogenies}).
  By definition of the isogeny $\pi$, we have
$\pi((a_{v+n i})_{v \in \Zln})=P_i$
and as a consequence $\pi(K_1)= \left\{ P_i, i \in \Zl \right\}$. We see that if
$(a_v)_{v \in \Zln}$ is a valid theta null point then the $(P_i)_{i
  \in \Zl}$ are well defined projective points which form a
maximal subgroup of rank~$g$ of $B_k[\ell]$.
Moreover, since the kernel of $\pi$ is $K_2$, 
$\pi(K_1)= \left\{P_i, i \in \Zl\right\}$ is the
kernel of the dual isogeny $\hat{\pi}: B_k \to A_k$. 

If $(a_v)_{v \in \Zln}$ is a general solution, it can happen that
certain of the $P_i$ are not well defined and as a consequence
$(a_v)_{v \in \Zln}$ is not a valid theta null point. But even if
every $P_i$ are well defined, $(a_v)_{v \in \Zln}$ need not be a valid
theta null point.  We need a criterion to identify the solutions of
$J$ which correspond to valid theta null points. From the discussion
of the preceding paragraph, we know that a necessary condition for a
solution $(a_v)_{v \in \Zln}$ of $J$ to be a valid theta null point is
that $(P_i)_{i \in \Zl}$ are all valid projective points which form a
subgroup of rank~$g$ if $B_k[\ell]$.  The Theorem \ref{main} asserts
that this necessary condition is indeed sufficient. In order to prove
this theorem, we have to study how a theta null point vary together
with a change of the theta structure.

We denote by $\Aut \Hdelta$ the group of automorphisms $\psi$ of
$\Hdelta$ inducing the identity on $\xg_{m,k}$: \[ \xymatrix{
0 \ar[r] & \xg_{m,k} \ar[r] \ar@{=}[d] & \Hdelta \ar[r]
\ar[d]^{\psi} & \Kdelta \ar[r] \ar[d]^{\overline{\psi}} & 0 \\
0 \ar[r] & \xg_{m,k} \ar[r] & \Hdelta \ar[r] & \Kdelta \ar[r] & 0 }. \]
Obviously, the set of all theta structures for $\pol$ is a principal
homogeneous space for the group $\Aut \Hdelta$ via the right action
$\Thetadelta.\psi = \Thetadelta \circ \psi$ for $\psi \in \Aut
\Hdelta$ and $\Thetadelta$ a theta structure. So we can identify $\Aut
\Hdelta$ with the group of automorphisms of theta structures. If
$\psi$ is such an automorphism, it induces an automorphism
$\overline{\psi}$ of $\Kdelta$. Denote by $Sp(\Kdelta)$ the group of
symplectic automorphisms of $\Kdelta$. The preceding diagram shows
that $\overline{\psi}$ is symplectic with respect to the commutator
pairing. Conversely, if $\overline{\psi} \in Sp(\Kdelta)$, we get an
element of $\Aut \Hdelta$ defined by $\psi: (\alpha,x,y) \mapsto
(\alpha, \psi(x), \psi(y))$. So the morphism $\abb{\Psi}{\Aut
  \Hdelta}{Sp(\Kdelta)}{\psi}{\overline{\psi}}$ has a section that we
denote by $\sigma$.  The kernel of $\Psi$ is given by automorphisms
preserving a symplectic basis and are determined by a choice of
level subgroups $\tilde{K}_1$ and $\tilde{K}_2$ over maximal isotropic
subspaces $K_1$ and $K_2$. It is well known \cite[pp. 162]{MR2062673}
that such choices are in bijection with elements $c \in K(\delta)$: we
map $c \in K(\delta)$ to the automorphism of $\Hdelta$ given by
\begin{equation}\label{sympl}
(\alpha,x,y) \mapsto (\alpha e_{\delta}(c,x+y),x,y).
\end{equation}

As a consequence, we get a split exact sequence
\begin{equation} \label{eq_actionseqeq}
\xymatrix{
  0 \ar[r]  & K(\delta) \ar[r]^\nu & \Aut \mathcal{H}(\delta) \ar[r] & Sp(K(\delta))
  \ar[r] \ar@/_1pc/[l]_{\sigma} &  0 \\
}.
\end{equation}
Suppose that $\Thetadelta$ is symmetric, an automorphism $\psi \in
\Aut \Hdelta$ is said to be symmetric if it commutes with the
symmetric action $(\alpha,x,y) \mapsto (\alpha,-x,-y)$ on $\Hdelta$.
We denote by $\Auts \Hdelta$ the group of symmetric automorphisms of
$\Hdelta$.  Obviously, an automorphism $\psi \in \Aut \Hdelta$ coming
from $c \in K(\delta)$ is symmetric if and only if $c \in
K(\delta)[2]$ the subgroup of $2$-torsion of~$K(\delta)$.

Now consider $(A_k, \pol, \Thetadelta)$ an abelian variety with a
$\delta$-marking and let $(\vartheta_i)_{i \in \Zdelta}$ be the
associated basis of global sections of $\pol$.  Note that if
$\overline{\psi}$ is a symplectic isomorphism of $K(\delta)$ then
$\psi=\sigma(\overline{\psi})$ is symmetric. 
We suppose that $\overline{\psi}(\dZdelta))=Z^{\psi} \times
\hat{Z}^{\psi}$, where $Z^{\psi} \subset \Zdelta$ and $\hat{Z}^{\psi} \subset
\dZdelta$.
Denote by
$(\tilde{\vartheta}_i)_{i \in \Zdelta}$ the basis of global sections of
$\pol$ associated to $(A_k, \pol, \Thetadelta.\psi)$.  In the
following, we give an explicit formula to obtain $(\tilde{\vartheta}_i)_{i
  \in \Zdelta}$ from the knowledge of $(\vartheta_i)_{i \in \Zdelta}$.

Let $A^0_k \simeq A_k/
\overline{\Theta_\delta}(\overline{\psi}(\hat{Z}(\delta)))$ and $\pi :
A_k \rightarrow A^0_k$ be the canonical map. The data of the maximal
level subgroup $\Theta_\delta(\psi((1,0,y)_{y \in \hat{Z}(\delta)}))$
is equivalent to the data of a line bundle $\pol_0$ on $A^0_k$ and an
isomorphism $\pi^*(\pol_0) \rightarrow \pol$.  Let $\tilde{s}_0$ be
the unique global section of $\pol_0$, we can apply the isogeny
theorem \cite[Th.  4]{MR34:4269} to obtain
\begin{gather}\label{action1}
 \tilde{\vartheta_0} = \lambda \pi^{*}(\tilde{s_0}) = \sum_{ i \in
 Z^{\psi} } \vartheta_i,
\end{gather}
for $\lambda \in k^*$.

Now by definition we have
\begin{equation}\label{action}
\tilde{\vartheta_i} = \psi((1,i,0)).\tilde{\vartheta_0}.
\end{equation}
where the dot product is the action (\ref{eq_actiontheta}).

By evaluating at $0$ the basis of global sections of $\pol$ in
(\ref{action}), we get an explicit description of the action of
$Sp(\Kdelta)$ on the geometric points of
$\mathcal{M}_{\delta}$. Actually the obtained formulas give a valid
action of $Sp(\Kdelta)$ on the geometric points of $\Mbardelta$.

Now, let $(a_u)_{u \in \Zln}$ be a geometric point of $V^0_{J}$. As
$\Auts \Hln$ acts on $\Mln$, we are interested in the subgroup
$\mathfrak{H}$ of $\Auts \Hln$ that leaves $(a_u)_{u \in \Zln}$ in
$V^0_J$.

\begin{lemma}
  Let $\psi \in \Aut \Hln$. We say that $\psi$ is compatible with $\Hn$ if
  it commutes with the morphisms $\tilde{p}$ and $\tilde{i}$
  from~\eqref{eq1} and~\eqref{eq2}. Then $\mathfrak{H}$
  is the subgroup of compatible symmetric automorphisms of $\Hln$. 
  In particular it does not depend on $(a_u)_{u \in \Zln}$ so it is also the
  subgroup of $\Auts \Hln$ that leaves $V^0_J$ invariant.
  \label{lem_automorphisme_compat}
\end{lemma}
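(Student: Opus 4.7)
The plan is as follows. The action of $\psi \in \Auts \Hln$ on a theta null point $(a_u)$ of $(A_k, \pol, \Thetaln)$ produces the theta null point of $(A_k, \pol, \Thetaln \circ \psi)$, and $\pi_1(\psi \cdot (a_u))$ is by definition the theta null point of the $\overn$-marked triple $(B_k', \ppol', \Thetan')$ built from $(A_k, \pol, \Thetaln \circ \psi)$ via the construction of Section~\ref{s_isogenies}. Since $\nu \geq 3$, Theorem~\ref{th_fond_mumford} guarantees that $\pi_1(\psi \cdot (a_u)) = (b_u)$ is equivalent to an isomorphism $(B_k', \ppol', \Thetan') \simeq (B_k, \ppol, \Thetan)$ of $\overn$-marked abelian varieties. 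Thus the lemma reduces to characterising when this isomorphism holds.

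For the forward direction I would assume that $\psi$ is symmetric and compatible with $\Hn$ in the sense of the lemma: $\psi$ stabilises the subgroups $\hat{Z}^*(\overln) \subset \dZln$ and $Z^*(\overln) \subset \Zln$, its restrictions project to the identity via $\tilde{p}$, and they lift the identity via $\tilde{i}$. A direct check, following each ingredient of the construction in Section~\ref{s_isogenies} (the kernel $K$, the level subgroup $\tilde{K}$, the centraliser $G^*(\pol)$ and the projection $q$), shows that replacing $\Thetaln$ by $\Thetaln \circ \psi$ does not change the resulting data: the subgroup $K = \overline{\Theta}_{\overln}(\dZl)$ is preserved because $\psi$ stabilises $\dZl = \ker \tilde{p}$, so $B_k' = B_k$ and $\ppol' = \ppol$; and the commutativity of diagrams~\eqref{eq1}--\eqref{eq2} for the new theta structure follows formally from the commutativity for $\Thetaln$ together with the fact that $\psi$ commutes with $\tilde{p}$ and $\tilde{i}$. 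Hence $\psi \cdot (a_u) \in V^0_J$.

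For the converse I would assume that the triple associated to $\Thetaln \circ \psi$ is isomorphic to $(B_k, \ppol, \Thetan)$. Such an isomorphism is unique by the fine moduli property in Theorem~\ref{th_fond_mumford}, and transporting it through diagrams~\eqref{eq1}--\eqref{eq2} forces the equality of the two induced theta structures on $B_k$. Evaluating this equality on the generating subgroups $(1,0,y)_{y \in \dZn}$ and $(1,x,0)_{x \in \Zn}$ of $\Hn$ yields precisely that $\psi$ commutes with $\tilde{p}$ and $\tilde{i}$, so $\psi$ is compatible. The final assertion of the lemma is then immediate: compatibility is phrased purely in terms of the abstract Heisenberg groups and the canonical morphisms between their distinguished subgroups, so $\mathfrak{H}$ does not depend on $(a_u)$ and therefore preserves $V^0_J$ globally, not merely the chosen point.

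The main obstacle I expect is the converse implication: extracting the abstract commutation of $\psi$ with $\tilde{p}$ and $\tilde{i}$ from the bare equality of induced $\overn$-marked triples. The delicate point is to rule out that the isomorphism supplied by Theorem~\ref{th_fond_mumford} could absorb a nontrivial automorphism of $\Hn$, so that one has to chase this through the defining diagrams and see that rigidity of the classifying triple forces $\psi$ to respect the combinatorial data on the nose.
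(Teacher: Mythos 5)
Your proposal follows essentially the same route as the paper. The paper's proof is a two-line argument: cite Proposition~\ref{prop_geometrie} to get that $\psi\cdot(a_u)\in V^0_J$ if and only if $\Thetaln\circ\psi$ is compatible with $\Thetan$, then assert that this condition ``means exactly that $\psi$ is compatible with $\Hn$''. What you do is re-derive the content of Proposition~\ref{prop_geometrie} on the fly (via Corollary~\ref{coro1} and Theorem~\ref{th_fond_mumford}, which is exactly how that proposition is proved in the paper), and then unpack the final ``means exactly'' by tracing the construction of Section~\ref{s_isogenies} in both directions. So the logical skeleton is identical; you are simply inlining a cited proposition and expanding the last terse equivalence. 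One remark worth making: your forward direction (if $\psi$ is compatible then $\psi\cdot(a_u)\in V^0_J$) is complete in spirit, since compatibility preserves $\dZl$ and hence the kernel $K$, the level subgroup $\tilde K$, and the descent data, so the triple produced is literally the same. Your converse, as you honestly flag, is the delicate point, but it is equally elided in the paper's own proof: the paper just asserts the equivalence, and a full argument would need the observation that the identification of $(B'_k,\ppol',\Thetan')$ with $(B_k,\ppol,\Thetan)$ given by Theorem~\ref{th_fond_mumford} is unique (rigidity of marked triples), which pins down the isogeny $A_k\to B_k$ and forces the abstract commutation of $\psi$ with $\tilde p$ and $\tilde i$ on the nose. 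So you have correctly identified where the work is; the paper does not spell it out either, and the substance of what is being used is the same.
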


\begin{proof}
  Let $(A,\pol,\Thetaln)$ be a triple corresponding to the theta
  null point $(a_u)_{u \in \Zln}$.  Let $\psi \in \Auts \Hln$, and
  $(a'_u)_{u \in \Zln} = \psi .(a_u)_{u \in \Zln}$.
  The Proposition~\ref{prop_geometrie} shows that $(a'_u)_{u \in \Zln}$ is
  in $V^0_J$ if and only if the associated theta structure $\Thetaln .
  \psi$ is compatible with the theta structure $\Thetan$ of $B$. But
  this means exactly that $\psi$ is compatible with $\Hn$.
\end{proof}

We can describe more precisely the action of $\mathfrak{H}$:
\begin{proposition}
  \label{lem_damien} 
  The action of
  $\mathfrak{H}$ on $V^0_J$ is generated by the
  actions given by
\begin{equation}
 (a_u)_{u \in \Zln} \mapsto (a_{{\psi_2}(u)})_{u \in \Zln},
  \label{eq_action1}
\end{equation}
where ${\psi_2}$ is an automorphism of $Z(\overln)$ fixing $Z(\overn)$ and
\begin{equation}
  (a_u)_{u \in \Zln} \mapsto \big(e_{\overln}({\psi_1}(u),u). a_u\big)_{u \in \Zln},
  \label{eq_action2}
\end{equation}
where $\psi_1$ is a ``symmetric'' morphism $\Zln \to \dZl \subset \dZln$
and $e_{\overln}$ is the commutator pairing on $\Hln$.
\end{proposition}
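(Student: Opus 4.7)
The plan is to combine Lemma~\ref{lem_automorphisme_compat} with the split exact sequence~\eqref{eq_actionseqeq} to enumerate the symmetric compatible automorphisms and then to translate each of them into an explicit action on theta null points using the formulas~\eqref{action1} and~\eqref{action}. First I would write any $\psi \in \mathfrak{H}$ as $\psi = \nu(c) \circ \sigma(\overline{\psi})$ with $c \in K(\overln)$ and $\overline{\psi} \in Sp(K(\overln))$, and handle the two factors separately; the generators in the statement will correspond exactly to these two factors.

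For the symplectic factor $\sigma(\overline{\psi})$, which is automatically symmetric, I would translate the commutativity of diagrams~\eqref{eq1} and~\eqref{eq2} into conditions on $\overline{\psi} \in Sp(\Zln \times \dZln)$: the compatibility with $\tilde{i}: \Zn \hookrightarrow Z^*(\overln)$ forces $\overline{\psi}$ to send $\Zn \subset \Zln$ into itself and act as the identity on the quotient, while the compatibility with $\tilde{p}: \hat{Z}^*(\overln) \twoheadrightarrow \dZn$ dually forces $\overline{\psi}$ to induce the identity on $\dZn$. Using the symplectic condition to eliminate the ``mixing'' blocks that would violate either requirement, one concludes that $\overline{\psi}$ is determined by an automorphism $\psi_2$ of $\Zln$ fixing $\Zn$ (with dual action on $\dZln$). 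Applying~\eqref{action1} and~\eqref{action} for such a $\overline{\psi}$ — where the sum in~\eqref{action1} reduces to a single term because $Z^{\psi} = 0$ — then gives the permutation action~\eqref{eq_action1}.

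For the kernel factor $\nu(c)$ with $c = (c_1,c_2) \in \Zln \times \dZln$, I would first impose symmetry, which as explained after~\eqref{sympl} forces $c \in K(\overln)[2]$, and then impose compatibility with~\eqref{eq1}--\eqref{eq2}, which restricts $c$ to lie in the subgroup $\dZl \subset \dZln$ of characters trivial on $\Zn$ (so that the induced modification of the theta structure descends trivially to $B_k$). For such $c$, the formula~\eqref{action1} again gives $\tilde{\vartheta}_0 = \vartheta_0$, and~\eqref{action} combined with~\eqref{eq_actiontheta} yields $\tilde{a}_u = e_{\overln}(c,u) \, a_u$, which is precisely the form~\eqref{eq_action2} once we record $c$ as the (constant) ``symmetric'' morphism $\psi_1: \Zln \to \dZl$. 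Combining the two factors through the semidirect product structure of~\eqref{eq_actionseqeq} shows that $\mathfrak{H}$ is generated by the two families described.

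The main obstacle I expect is the bookkeeping in Step~2: extracting from the abstract commutativity of~\eqref{eq1} and~\eqref{eq2} the precise block-matrix conditions on $\overline{\psi}$, in particular checking that the symplectic condition together with the fixing of $\Zn$ and $\dZn$ really forces $\overline{\psi}$ to preserve the Lagrangian decomposition (and therefore excludes the possibility of $\overline{\psi}$ mixing $\Zln$ and $\dZln$ in a way that would yield actions outside the two listed families). Once this rigidity is established, verifying that each allowed $\psi$ does produce an action of the claimed shape is a direct computation with~\eqref{action1} and~\eqref{action}.
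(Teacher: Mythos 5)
Your decomposition $\psi = \nu(c)\circ\sigma(\overline{\psi})$ via the split sequence~\eqref{eq_actionseqeq} matches the paper's, but you misattribute the two generating families, and the misattribution is not cosmetic --- it leaves a real gap. The paper shows that the level-subgroup factor $\nu(c)$ is \emph{trivial}: symmetry forces $c \in K(\overln)[2]$, while compatibility (diagrams~\eqref{eq1}--\eqref{eq2}) forces $c$ into $\overline{\psi}(\dZl)$, which is $\ell$-torsion; since $\ell$ is odd, the intersection is $c=0$. So your attempt to extract~\eqref{eq_action2} from nonzero compatible $c$ starts from an empty set. There is also a type mismatch: a nonzero constant map $u\mapsto c$ is not a group morphism, so it cannot serve as $\psi_1$ in~\eqref{eq_action2}; and even setting that aside, $e_{\overln}(c,u)$ is \emph{linear} in $u$, whereas $e_{\overln}(\psi_1(u),u)$ for a morphism $\psi_1$ is a \emph{quadratic} character in $u$ --- these are different families of actions, as Remark~\ref{rem_sousgroupe} makes explicit.

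The obstacle you flag at the end --- whether the compatibility conditions force $\overline{\psi}$ to preserve the Lagrangian splitting of $\Zln\times\dZln$ --- is exactly where the argument fails, and the answer is no. Writing $\overline{\psi}$ as a symplectic block matrix $M[A,B,C,D]$, the condition $\overline{\psi}(\dZl)\subset\dZln$ kills only the upper-right block $B$; it does not kill the lower-left block $C$. With $B=0$ one gets $D={}^tA^{-1}$, and the surviving group is generated by the block-diagonal $M[A,0,0,{}^tA^{-1}]$ with $A$ fixing $\Zn$ (yielding~\eqref{eq_action1}) together with the unipotent $M[\Id,0,C,\Id]$ with ${}^tC=C$ and $\psi_1(x)=Cx$ landing in $\dZl$ (yielding~\eqref{eq_action2}). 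Both families come from the symplectic factor; the unipotent part you hoped to exclude is precisely the missing source of the second action.
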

\begin{proof}
  Let $\psi \in \mathfrak{H}$. Since the exact sequence from
  equation~\eqref{eq_actionseqeq} splits, we only have to study the
  case where $\psi$ comes from a change of maximal level structure and
  the case where $\psi$ comes from a symplectic base change of
  $K(\delta)$. In the former case, let $c \in K(\delta)$ defining the
  symplectic base change by (\ref{sympl}).  Then $c \in K(\delta)[2]$
  since $\psi$ is symmetric and from the
  compatibility conditions $c \in \overline{\psi}(\hat{Z}(\overl))$.
  As $\ell$ is odd, we have $c=0$.

  In the latter case, $\overline{\psi}$ can be represented in a basis
  $(v_\kappa, \hat{v}_\kappa)_{\kappa \in \{1, \ldots , g \}}$ of
  $\Zln \times \dZln$ by a matrix $M[A,B,C,D]=\squaremat{A}{B}{C}{D}
  \in SP_{\delta}^{2g}(\Z)$.  Since $$K=\overline{\Theta}_{\overln}
  (\overline{\psi}(\dZl))  \subset \overline{\Theta}_{\overln}
  (\dZln),$$ we have $B=0$. So $D={}^{t}A^{-1}$ and we see that the action of
  $\mathfrak{H}$ is generated by the matrices
   \begin{enumerate}
   \item $M[A,B,C,D]$ such that $C$=0. Then $A$ is an automorphism and the
     compatibility condition implies that it must fix $\Zn$. Using
     (\ref{action1}) and (\ref{action}) this yields the
     action~\eqref{eq_action1}.
   \item $M[A,B,C,D]$ such that $A=\Id$. Then ${}^tC=C$. For $x \in \Zln$, we can write
     $\overline{\psi}((x,0))=(x,\psi_1(x))$. By looking at the
  conditions~\eqref{eq1} and~\eqref{eq2} we see that
\begin{equation}\label{eq_lem_damien}
 \overline{\psi}((x,y))-(x,y) \in \overline{\psi}(\dZl) \subset \dZl,
\end{equation}
for all $(x,y) \in Z^{*}(\overln) \times \hat{Z}^{*}(\overln)$.  Using
(\ref{eq_lem_damien}), we deduce that $\psi_1(x)$ is in $\dZl$. In this case,
we obtain the action~\eqref{eq_action2} following (\ref{action1}) and
(\ref{action}).
   \end{enumerate}
This completes the proof of the proposition.
\end{proof}

\begin{remark}
  \label{rem_sousgroupe}
  The action~\eqref{eq_action1} gives an automorphism of the $(P_i)_{i
    \in \Zl}$ while the action~\eqref{eq_action2} leaves the $(P_i)_{i
    \in \Zl}$ invariant. In fact by taking a basis of $\Zln$, we see
  that if $\zeta$ is a $(\ell n)^{th}$-root of unity, the
  actions~\eqref{eq_action2} are generated by
  \[ a_{(n_1,n_2,\ldots,n_g)} \mapsto \zeta^{\sum_{i,j \in [1,g]} a_{i,j} n_i n_j}
a_{(n_1,\ldots, n_g)}  \] 
where $(a_{i,j})_{i ,j \in [1, g]}$ is a symmetric matrix and 
$a_{i,j} \in \Z/n\Z \subset \Z/\ell n\Z$ (via $x
\mapsto \ell x$) for $i,j \in [1,g]$. 
So each coefficient of one $P_i$ is multiplied by the
same $\ell^{th}$-root of unity.
\end{remark}

From the preceding remark, we see that $\mathfrak{H}$ leaves $V_J$
invariant.  Now, let $\psi_2$ be a morphism of $\Zln$ fixing $\Zn$. Here
we do not require $\psi_2$ to be an isomorphism. We let $\psi_2$ act on
$V_J$ by
\[ 
 (a_u)_{u \in \Zln} \mapsto (a_{\psi_2(u)})_{u \in \Zln}
\]
Since $\psi_2$ fixes $\Zn \subset \Zln$, it fixes the $2$-torsion
points in $\Zln$, and it is easy to see that $(a_{\psi_2(u)})_{u \in
  \Zln}$ satisfies the equations of Theorem~\ref{thetarel} and the
symmetry relations. As a consequence, the point $(a_{\psi_2(u)})_{u
  \in \Zln}$ is in $\Mbarln$. Moreover, as $\psi_2$ fixes $\Zn$,
$(a_{\psi_2(u)})_{u \in \Zln}$ is a point in $V_J$, so we have a well
defined action extending that of the form (\ref{eq_action1}).

By acting on $V_J$ with a morphism of $\Zln$ fixing $\Zn$ which is not
an isomorphism, we obtain a point of $V_J$ which is degenerate: it is
a theta null point such that the associated points $P_i$ from
Proposition~\ref{prop_pi} are well defined but not distinct projective
points (so they do not form a rank $g$ $\ell$-torsion subgroup
of $B_k$).

There is another way to obtain degenerate theta null points in $V_J$.
Take any geometric point $(a_u)_{u \in \Zln} \in V_J$, and a subgroup $S$
of $\Zl$ (in particular $S$ is not empty). We define a new point
$(a'_u)_{u \in \Zln}$ where
\[ a'_{\rho(j,i)} = \begin{cases}
  a_{\rho(j,i)} & \text{if $i \in S$}, \\
  0             & \text{otherwise}.
\end{cases} \] Since $\ell$ is odd, it is easily seen that $(a'_u)_{u
  \in \Zln}$ is in general a degenerate point in $V_J$: the $P_i$ from
Proposition~\ref{prop_pi} are not defined when $i \not\in S$.

Now, we explain that combining the two methods described above, we
obtain all the degenerate theta null points of $V_J$. For this, let
$(a'_u)_{u \in \Zln}$ be a degenerate point of $V_J$. Let $S \subset
\Zl$ be the subgroup where the points of $\ell$-torsion $P'_i$, $i\in
S$ of Proposition~\ref{prop_pi} are well defined. The points $P'_i$
form a subgroup $S'$ of the $\ell$-torsion points of $B_k$, and $f:
S \to S', i \mapsto P'_i$ is a group morphism (which may not be an
isomorphism, since as $(a'_u)_{u \in \Zln}$ is degenerate the $P'_i$ are
not necessarily distinct). Now, we embed $S'$ into a maximal subgroup $T$ of
rank $g$ of $B_k[\ell]$, and extend $f$ to a morphism $\tilde{f}: \Zl \to T$
(for instance if $i \not\in \Zl$ then send $i$ to the neutral point
$P'_0$). We take an isomorphism $h$ between $\Zl$ and $T$. 
Theorem~\ref{main} that we prove later on shows that
there exists a geometric point $(a_u)_{u \in \Zln} \in V_J^0$ such that
the corresponding group morphism $i \in \Zl \mapsto P_i$ is $h$.
Now
take $\psi_2$ to be the morphism of $\Zln=\Zl \times \Zn$ which is the
identity on $\Zn$ and $h^{-1}\tilde{f}$ on $\Zl$.  Consider the point $(a_{\psi_2(u)})_{u
  \in \Zln}$ 
with the coefficients $\rho(j,i), i \not\in S$ taken to
be $0$. Then it has exactly the same defined points $P'_i$ as
$(a'_u)_{u \in \Zln}$. The next lemma shows that it is the same point
as $(a'_u)_{u \in \Zln}$ up to an action of the
form~\eqref{eq_action2}.

We remark that the degenerate points in $V_J$ are exactly the points where
the action of $\mathfrak{H}$ is not free: if $(a_u)_{u \in \Zln}$ is a
degenerate point such that the corresponding $P_i$ are not all well defined,
then there is an action of the form~\eqref{eq_action2} giving the same point. 
If the $P_i$ are well defined but do not form a maximal subgroup, then this
time there
is an action of the form~\eqref{eq_action1} giving the same point.



By Remark~\ref{rem_sousgroupe} we know that if $(a_u)_{u \in \Zln}$ is a
theta null point giving the associated group $\{ P_i, i \in \Zl \}$, then the
points $\psi.(a_u)_{u \in \Zln}$ where $\psi \in \mathfrak{H}$ give the same
associated group. In fact the converse is true:

\begin{lemma}\label{lemtech}
  Let $(c_u)_{u \in \Zln}$ and $(d_u)_{u \in \Zln}$ be two geometric
  points of $V_J$ giving the same associated group  $\{ P_i, i \in \Zl \}$. Then there exist $\psi \in \mathfrak{H}$ such that $(d_u)_{u \in \Zln}= \psi.(c_u)_{u \in \Zln}$.
\end{lemma}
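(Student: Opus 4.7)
The plan is to treat first the case where both $(c_u)_{u \in \Zln}$ and $(d_u)_{u \in \Zln}$ are valid theta null points and then reduce the degenerate situation to this case. By Proposition~\ref{prop_geometrie} each point in $V_J^0$ corresponds to a triple $(A_k, \pol, \Thetaln)$ whose theta structure is compatible with $\Thetan$ on $B_k$. From the discussion following Proposition~\ref{prop_pi}, the associated subgroup $\{P_i : i \in \Zl\}$ coincides with $\pi(K_1)$, which is the kernel of the dual isogeny $\hat\pi : B_k \to A_k$. Consequently, two valid points sharing the same associated group have dual isogenies out of $B_k$ with the same kernel, so the resulting abelian varieties are canonically identified and both polarizations descend from the common line bundle $\ppol$ on $B_k$ via the descent data attached to $\hat\pi$.

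Hence the two points are obtained from two theta structures $\Thetaln^c, \Thetaln^d$ on one and the same $(A_k, \pol)$, both compatible with $\Thetan$. Since the set of theta structures for a fixed polarized abelian variety is a principal homogeneous space under $\Auts \Hln$, there is a unique $\psi \in \Auts \Hln$ with $\Thetaln^d = \Thetaln^c \circ \psi$, and Lemma~\ref{lem_automorphisme_compat} implies $\psi \in \mathfrak{H}$ because both theta structures are $\pi$-compatible. Evaluating the induced right action on theta null points, as made explicit in the proof of Proposition~\ref{lem_damien}, yields $(d_u)_{u \in \Zln} = \psi . (c_u)_{u \in \Zln}$, which settles the valid case.

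For a pair of degenerate points I invoke the construction described just before the lemma: any degenerate point in $V_J$ is produced from a valid theta null point by applying a morphism $\psi_2$ of $\Zln$ fixing $\Zn$ (not necessarily bijective) and zeroing the coordinates whose index lies outside a chosen subgroup $S \subseteq \Zl$. Assuming $(c_u)$ and $(d_u)$ give the same associated family, I align their zero patterns by matching the subgroups $S$ and, after composing each $\psi_2$ with an isomorphism onto a common maximal rank-$g$ subgroup of $B_k[\ell]$, lift both to valid theta null points sharing the same $(P_i)_{i \in \Zl}$. The valid case then yields an element of $\mathfrak{H}$ linking the lifts, and since the zeroing-out and the $\psi_2$-twist commute with the $\mathfrak{H}$-action generated by~\eqref{eq_action1} and~\eqref{eq_action2}, the same element relates $(c_u)$ and $(d_u)$. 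The main obstacle is verifying that the intertwiner produced at the valid level really descends to an element of $\mathfrak{H}$ after the zeroing-out, i.e.\ that the compatibility conditions~\eqref{eq1} and~\eqref{eq2} are preserved when some of the underlying maps drop rank in the degeneration.
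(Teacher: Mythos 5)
Your argument hinges on both $(c_u)$ and $(d_u)$ being \emph{valid} theta null points: in the first step you attach to each a triple $(A_k,\pol,\Thetaln)$, identify the two abelian varieties via the common kernel of $\hat\pi$, and deduce that the two theta structures differ by an element of $\Auts\Hln$ that must lie in $\mathfrak{H}$ by Lemma~\ref{lem_automorphisme_compat}. That reasoning is sound as far as it goes, but Lemma~\ref{lemtech} must hold for \emph{arbitrary} geometric points of $V_J$. Look at how the lemma is applied in the proof of Theorem~\ref{main}: there $(a_u)$ is only known to be a solution of the algebraic system $J$ whose associated $P_i$ form a maximal rank-$g$ group, while $(a'_u)$ is the hand-constructed valid theta null point; the conclusion that $(a_u)$ is valid is \emph{derived} by applying Lemma~\ref{lemtech} and using that $\mathfrak{H}$ preserves $V_J^0$. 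If your proof of Lemma~\ref{lemtech} already assumes $(a_u)$ corresponds to a triple $(A_k,\pol,\Thetaln)$, the argument becomes circular and Theorem~\ref{main} collapses.

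The reduction you propose for degenerate points has the same problem, and you flag it yourself. The claim that ``any degenerate point in $V_J$ is produced from a valid theta null point by applying a morphism $\psi_2$ of $\Zln$ fixing $\Zn$ and zeroing out coordinates'' is proved in the paper only \emph{after} Theorem~\ref{main} and Lemma~\ref{lemtech} are available (the discussion preceding the lemma explicitly says ``Theorem~\ref{main} that we prove later on shows\ldots'' and ``The next lemma shows\ldots''), so it cannot be used as input. The paper avoids all of this by staying purely algebraic: starting from the scaling factors $\lambda_i$ with $(c_{\rho(j,i)})_j=\lambda_i(d_{\rho(j,i)})_j$, it plugs carefully chosen indices into the Riemann relations of Theorem~\ref{thetarel} --- which hold on all of $\Mbarln$, valid or not --- to obtain the recurrence $\lambda_{\mu i}=\lambda_i^{\mu^2}$, combines this with the symmetry relations $a_u=a_{-u}$ to force $\lambda_i^\ell=1$, and then uses further Riemann relations to absorb all the $\lambda_i$ into an action of type~\eqref{eq_action2}. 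To repair your proof you would need an argument at the level of the equations that does not presuppose the moduli-theoretic interpretation of the points; the structural reasoning you give is a good way to \emph{understand} the lemma once everything is proved, but it cannot replace the computation.
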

\begin{proof}
  First, up to an action of type~\eqref{eq_action1}, we can suppose that for
  all $i \in \Zl$, we have $P^{(c_u)_{u \in \Zln}}_{i} = P^{(d_u)_{u \in \Zln}}_i$. Thus there exist $\lambda_i \in
  \kbar$ such that $(c_{\rho(j,i)})_{j \in \Zn} = \lambda_i (d_{\rho(j,i)})_{j \in \Zn}$.
  Since $(c_u)_{u \in \Zln}$ and $(d_u)_{u \in \Zln}$ are projective, we can
  assume that $\lambda_0=1$.
  We will show that up to an action of type~\eqref{eq_action2}, for every $i
  \in \Zl$ such that $P_i$ is well defined, $\lambda_i=1$. But first we show
  that for such points, we have $\lambda_i^\ell=1$. 

  Let $i\in \Zl$ be such that $(c_{\rho(j,i)})_{j \in
    \Zn}$ is a well defined projective point. 
  Let $x,y,u,v \in \Zstruct{2\overn}$ which are
  congruent modulo $\Zn$, we remark that for $\mu \in \{1,
  \ldots, \ell \}$, $\rho(x, \mu.i)$, $\rho(y,i)$, $\rho(u,0)$,
  $\rho(v,0)$ are elements of $\Zstruct{2\overln}$ congruent modulo
  $\Zln$. Calling Theorem~\ref{thetarel}, we obtain
  that
\begin{equation}\label{eq21}
\begin{split}
  \big(\sum_{t \in Z(2)} \chi(t) c_{\rho(x+y+t,(\mu+1).i)} c_{\rho(x-y+t,(\mu-1).i)}\big).\big(\sum_{t \in Z(2)} \chi(t)
  c_{\rho(u+v+t,0)} c_{\rho(u-v+t,0)}\big)= \\
  =\big(\sum_{t \in Z(2)} \chi(t) c_{\rho(x+u+t,\mu.i)} c_{\rho(x-u+t,\mu.i)}\big).\big(\sum_{t \in Z(2)} \chi(t)
  c_{\rho(y+v+t,i)} c_{\rho(y-v+t,i)}\big), 
\end{split}
\end{equation}
for any $\chi \in \hat{Z}(2)$.

We have a similar formula involving $(d_u)_{u \in \Zln}$. Using
equation (\ref{eq21}) and an easy recurrence, we obtain that $\lambda_{\mu.
i}=\lambda_i^{u_{\mu}}$ where $(u_{\mu})$ is a sequence such that $u_0=0$, $u_1=1$ and
$u_{\mu+1}+u_{\mu-1}=2.u_\mu+2$.
The general term of this sequence is $u_\mu=\mu^2$. For $\mu=\ell$, we
have 
\begin{equation}\label{lemtecheq1}
\lambda_i^{\ell^2}= \lambda_{\ell.i} = \lambda_0 = 1
\end{equation}

Now, by the symmetry relations, we have for $j \in \Zn$,
$c_{\rho(j,\mu.i)}=c_{\rho(-j,-\mu.i)}$. Applying this for $\mu=1$ and
$j=0$, we obtain that $\lambda_i=\lambda_i^{(\ell-1)^2}$ which together with
(\ref{lemtecheq1}) gives
\begin{equation}
\lambda_i^\ell =1
\end{equation}
which concludes the claim.

Let $(e_1,\dots,e_g)$ be the canonical basis of $\Zl$. Up to an action of
type~\eqref{eq_action2} we may assume that $\lambda_{e_i}=1$ and
$\lambda_{e_i + e_j}=1$ for $i,j \in \{1,\dots,g\}, j < i$. Now let $a,b
\in \Zl$ be such that $\lambda_a=1$, $\lambda_b=1$ and $\lambda_{a-b}=1$. 
Then by Theorem~\ref{thetarel} we have the relations:
\begin{equation}\label{eq21bis}
\begin{split}
  \big(\sum_{t \in Z(2)} \chi(t) c_{\rho(x+y+t,a+b)} c_{\rho(x-y+t,a-b)}\big).\big(\sum_{t \in Z(2)} \chi(t)
  c_{\rho(u+v+t,0)} c_{\rho(u-v+t,0)}\big)= \\
  =\big(\sum_{t \in Z(2)} \chi(t) c_{\rho(x+u+t,-b)} c_{\rho(x-u+t,b)}\big).\big(\sum_{t \in Z(2)} \chi(t)
  c_{\rho(y+v+t,a)} c_{\rho(y-v+t,a)}\big). 
\end{split}
\end{equation}
Since by symmetry, $\lambda_{-b}=1$, the relations~\eqref{eq21bis} give that
$\lambda_{a+b}=1$. An easy recurrence shows that for any $i \in \Zl$ we have
$\lambda_i=1$, which concludes the proof.
\end{proof}

As a first application of this lemma, we have:
\begin{proposition}\label{prop:reduced}
  If $\ell$ is prime to the characteristic of $k$ and $\nu \geq 2$
  then $V_J$ is a reduced scheme.
\end{proposition}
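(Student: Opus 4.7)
The plan is to show that every geometric point of the zero-dimensional scheme $V_J$ is simple, i.e.\ that the local ring of $V_J$ at each closed point is a field. I would exploit the $\mathfrak{H}$-action on $V_J$ together with Lemma~\ref{lemtech}: since automorphisms preserve the local ring structure, it suffices to exhibit a simple representative in each $\mathfrak{H}$-orbit, and by Lemma~\ref{lemtech} these orbits (restricted to the valid locus) are in bijection with the possible associated groups $\{P_i, i \in \Zl\}$ of rank~$g$ in $B_k[\ell]$.

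For a valid theta null point, I would use the moduli interpretation from Proposition~\ref{prop_geometrie}: such a point corresponds to an isomorphism class of data $(A_k, \pol, \Thetaln)$ together with an $\overl$-isogeny $\pi : A_k \to B_k$ whose theta structures are $\pi$-compatible with the fixed $\Thetan$ on $B_k$. Since $\ell$ is prime to $\mathrm{char}(k)$, the group scheme $B_k[\ell]$ is étale over $k$; hence the choice of a maximal isotropic rank-$g$ subgroup of $B_k[\ell]$ (equivalently, the kernel of the dual isogeny $\hat{\pi}$), and the finitely many compatible theta structures extending it, form a finite étale scheme over $\mathrm{Spec}\,k$. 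This shows that $V_J^0$ is étale over $k$ and therefore reduced.

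For degenerate points, I would use the explicit description given in the discussion preceding Lemma~\ref{lemtech}: every degenerate point is obtained from a valid theta null point by applying a non-invertible morphism $\psi_2$ of $\Zln$ fixing $\Zn$ combined with a truncation (setting $a_{\rho(j,i)}=0$ for $i$ outside a subgroup $S \subset \Zl$). Since the parameter spaces for these operations (subgroups of $\Zl$, group morphisms $\Zl \to B_k[\ell]$) are themselves finite étale when $\ell$ is prime to $\mathrm{char}(k)$, the degenerate points also appear with multiplicity $1$. The hypothesis $\nu \geq 2$ is used here to guarantee that enough Riemann quadratic relations from Theorem~\ref{thetarel} are in the ideal $J$ to cut the tangent space at these degenerate points down to dimension zero. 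The main obstacle is precisely this tangent-space computation at degenerate points, where the $\mathfrak{H}$-stabilizer is nontrivial: one must check that the non-free directions of the group action do not produce tangent vectors in $V_J$, which ultimately follows from the separability ensured by $\ell \nmid \mathrm{char}(k)$ together with the analysis of scalars $\lambda_i$ with $\lambda_i^\ell=1$ already carried out in the proof of Lemma~\ref{lemtech}.
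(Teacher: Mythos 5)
Your proposal identifies the right ingredients (Lemma~\ref{lemtech}, the $\mathfrak{H}$-action, the moduli interpretation) but leaves the core difficulty unresolved, and also misattributes where the two hypotheses are actually used.

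The paper's proof does not argue orbit-by-orbit. Instead, using the product decomposition $\Zln \simeq \Zn \times \Zl$ via $\rho$, it considers for each $\lambda \in \Zl$ the projection $\pi_\lambda : \Aff^{\Zln} \to \Aff^{\Zn}$ onto the block of coordinates $(x_{\rho(u,\lambda)})_{u \in \Zn}$, and reduces to showing $\pi_\lambda(x)$ is a reduced point of the projected scheme for every geometric point $x \in V_J$ and every $\lambda$. The hypothesis $\ell \nmid \mathrm{char}(k)$ is used precisely when $\pi_\lambda(x)$ is \emph{not} the origin: by Proposition~\ref{prop_pi} this block represents an $\ell$-torsion point of $B_k$, the intersection of $V_J$ with the affine line $L$ through the origin and $\pi_\lambda(x)$ is shown via Lemma~\ref{lemtech} to sit inside a scheme isomorphic to $\spec{k[x]/(x^\ell-1)}$, and $x^\ell-1$ is separable. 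This case includes the valid points, but also degenerate points whose blocks are nonzero though repeated; your separate étale argument for the valid locus is unnecessary and in any case only establishes a set-theoretic bijection — it does not by itself show the ideal $J$ cuts out those points with multiplicity one.

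The genuinely hard case is $\pi_\lambda(x) =$ origin, and here your proposal essentially stops at naming the obstacle. You write that the tangent-space computation at degenerate points ``ultimately follows from the separability ensured by $\ell \nmid \mathrm{char}(k)$ together with the analysis of $\lambda_i$ in Lemma~\ref{lemtech}.'' That is not how the paper handles it, and I do not see how separability enters at all in this case. The paper's argument is specific: it reduces to $\nu = 2$, invokes Mumford's formula (C'') to show that if $\pi_\lambda(x)$ is the origin then so is $\pi_{2\lambda}(x)$, and that a degree-$r_\lambda$ obstruction at $\lambda$ forces a degree-$\lceil r_\lambda/4 \rceil$ obstruction at $2\lambda$; iterating the doubling map on $\Zl$ (using that $2$ is invertible mod $\ell$) forces $r_\lambda = 1$. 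This doubling argument, not separability, is what uses $\nu \geq 2$ in an essential way. Your proposal does not contain this idea, so the degenerate/origin case is a real gap.
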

\begin{proof}
  We recall that $V_J$ is the affine variety defined by $J$ where
  $J$ is the image of the homogeneous ideal $I$ defining $\Mbarln$,
  under the specialization map
  \begin{eqnarray*}
    k[x_u|u \in
    \Zln] \rightarrow k[x_u|u \in
    \Zln,n u \not=0], \quad x_u \mapsto \left\{
      \begin{array}{l@{,\hsp}l}
        {b_u} & \mathrm{if} \quad u \in \Zn \\
        {x_u} & \mathrm{else}
      \end{array} \right. ,
  \end{eqnarray*}
  with $(b_u)_{u \in \Zn}$ the theta null point associated to $(B_k,
  \ppol, \Thetan)$.

  By definition, $V_J$ is a closed subvariety of the affine space
  $\Aff^{\Zln}$. For $\lambda \in \Zl$, denote by $\pi_\lambda :
  \Aff^{\Zln} \rightarrow \Aff^{\Zn}$ the projection deduced from the
  inclusion $\phi_\lambda:k[x_u| u \in \Zn] \rightarrow k[x_u| u \in
  \Zln]$, $x_u \mapsto x_{\rho(u, \lambda)}$. In order to prove that
  $V_J$ is a reduced scheme it is enough to prove that for any $x$
  geometric point of $V_J$ and all $\lambda \in \Zl$, $\pi_\lambda (x)$
  is a reduced point of $\Aff^{\Zn}$. We consider two cases.
  
  If $\pi_\lambda(x)$ is not the point at origin of $\Aff^{\Zn}$ then
  it defines a projective point of $\proj^{\Zl}$ which is a
  $\ell$-torsion point of $V_{I_{\Thetan}}$ by Proposition
  \ref{prop_pi}. As a consequence, $\pi_\lambda(x)$ is contained in
  the reduced line $L$ between the origin point of $\Aff^{\Zn}$ and
  $\pi_\lambda(x)$. By the preceding lemma, the intersection of $V_J$
  with $L$ is contained in a variety isomorphic to $\spec{k[x]/(x^\ell
    -1)}$ where $x^\ell -1$ is a separated polynomial as $\ell$ is
  prime to the characteristic of $k$. We deduce that $x$ is a reduced
  point of $\Aff^{\Zn}$.
  
  If $\pi_\lambda(x)$ is the origin point of $\Aff^{\Zn}$, it is
  enough to prove that $\pi_\lambda(x)$ is reduced in the case that
  $\nu =2$. In fact, the set of equations generating $J$ in the case
  $\nu \geq 2$ contains the set of equations generating $J$ in
  the case $\nu=2$.  We suppose now that $\nu =2$. Let
  $\mathfrak{P}=(x_u | u \in \Zn)$ be the ideal of $k[x_u| u \in \Zn]$
  defining the reduced point at origin of $\Aff^{\Zn}$. Let $J_\lambda
  = J \cap \phi_\lambda (k[x_u| u \in \Zn])$ and denote by $J_{\lambda
    \mathfrak{P}}$ the local ring of $J_\lambda$ in $\mathfrak{P}$. As
  $J$ is a $0$-dimensional ideal, we know that there exist $m$ a
  positive integer such that $J_{\lambda \mathfrak{P}} \supset
  \mathfrak{P}^m$ in $k[x_u| u \in \Zn]_{\mathfrak{P}}$. Let
  $r_\lambda$ be the smallest integer with this property. We want to
  show that $r_\lambda = 1$. In order to do so, we are going to use
  another formulation of the Riemann relations given by Theorem
  \ref{riemannquad}.
  
  For this, we let $H(\overline{\ell n})=\Zln \times \dZtwo$ and
  $H(\overline{ n})=\Zn \times \dZtwo$.  We denote by $\rho':
  H(\overline{ n}) \times \Zl \rightarrow H(\overline{\ell n})$ the
  natural isomorphism deduced from $\rho$. For all $v=(v',v'') \in
  H(\overline{\ell n})$, we let $y_v = \sum_{t \in \Ztwo} v''(t)
  x_{v'+t}$. Let $a_1,a_2,a_3,a_4, \tau \in H(\overline{n})$ such that
  $2\tau = a_1-a_2-a_3-a_4$. Set $\alpha_1 = \rho'(a_1,2\lambda)$,
  $\alpha_2=\rho'(a_2,0)$, $\alpha_3=\rho'(a_3,0)$,
  $\alpha_4=\rho'(a_4,0)$ and $\tau_1=\rho'(\tau,\lambda)$ so that
  we have $2\tau_1 = \alpha_1-\alpha_2-\alpha_3-\alpha_4$. We write
  $\tau=(\tau',\tau'')$ and let $H(\overline{2})= \{ x\in
  H(\overline{\ell n}) | x\,\text{is}\, 2- \text{torsion modulo} \,
  \Ztwo \times \{ 0 \} \}$.  By applying \cite[formula (C'') p.
  334]{MR36:2621}, we have the following relation in $J$:
  \begin{equation}\label{eq:classical}
\begin{split}
  y_{\alpha_1}y_{\alpha_2} y_{\alpha_3}y_{\alpha_4} = & \\ = &
  \frac{1}{2^g} \sum_{t \in H(\overline{2})} (\tau'' + t'')(2t')
  y_{\alpha_1 - \tau_1 + t} y_{\alpha_2 + \tau_1 +t} y_{\alpha_3+
    \tau_1+t} y_{\alpha_4+ \tau_1 +t},
\end{split}
  \end{equation}
  where $t=(t',t'') \in H(\overline{2})$.

  By definition, for $i=2,3,4$, if we write $a_i=(a_i',a_i'')$, we
  have $y_{\alpha_i}=\sum_{t \in \Ztwo} a_i''(t) b_{a_i'+t}$. As by
  hypothesis $(b_u)_{u \in \Zn}$ is valid theta null points, by
  applying \cite[formulas (*) p. 339]{MR36:2621}, we obtain that for
  any $a_i=(a_i',a_i'') \in H(\overline{n})$ there exists $\beta_i'
  \in 2 \Zn$ such that $\sum_{t \in \Ztwo} a_i''(t)
  b_{a_i'+\beta_i'+t} \neq 0$. As a consequence, for any choice of
  $a_1$, we can find $a_2$, $a_3$, $a_4$, and $\tau \in
  H(\overline{n})$ such that $2\tau = a_1-a_2-a_3-a_4$ and for
  $i=2,3,4$, $y_{\alpha_i}= \sum_{t \in \Ztwo} a_i''(t) b_{a_i'+t}
  \neq 0$. (We can take for instance $a_1=a_2=a_3=a_4$ so that
  $a_1-a_2-a_3-a_4 \in 2 H(\overline{n})$ and then if necessary add to
  $a_2, a_3, a_4$ elements of $2 \Zn$ in order to have $y_{(a_i,0)}
  \neq 0$.) As an immediate consequence, we obtain that
  $\pi_{2\lambda}(x)$ is also the origin point of $\Aff^{\Zn}$.
  
  Let $r'_\lambda$ be the smallest integer such that $r'_\lambda \geq
  r_\lambda$ and $4 | r'_\lambda$. We remark that
  $\phi_{2\lambda}(k[x_u|u \in \Zn])=k[y_{\rho'(v,2\lambda)}| v \in
  H(\overline{n})]$. Let $M$ be a degree $r'_\lambda/4$ monomial in
  the variables $y_{\rho'(v,2\lambda)}$. If necessary, by multiplying
  $M$ by a suitable non null constant, we see that $M$ is equal to a
  product $M'$ of $r'_\lambda/4$ polynomials given by the right hand
  of~(\ref{eq:classical}). These polynomials have degree $4$ and are
  sums of products of monomials of the form $y_{\rho'(v,\lambda)}$
  (using the symmetry relations). We deduce from this that $M' \in
  \mathfrak{P}^{r_{\lambda}}$ and as a consequence $M' \in
  J_{\lambda}$. But this means that $M \in J_{2\lambda}$ and as $M$
  can be any degree $r'_\lambda/4$ monomial in the variables
  $y_{\rho'(v,2\lambda)}$, we have proved that $J_{2\lambda
    \mathfrak{P}} \supset \mathfrak{P}^{r'_\lambda/4}$.

  Let $m$ be an integer such that $2^m \lambda = \lambda$ in $\Zl$.
  Using the previous result and an easy recurrence, we see that if
  $r_\lambda > 1$ then $r_\lambda = r_{2^m \lambda} < r_\lambda$ which
  is a contradiction.
\end{proof}

As a second application of Lemma~\ref{lemtech}, we have:
\begin{theorem}\label{main}
  Let $(a_u)_{u \in \Zln}$ be a geometric point of $V_J$.
  For any $i \in \Zl$, let $P_i$ be the geometric point,
  if well defined, of $\proj_k^{\Zn}$ with homogeneous
  coordinates $(a_{\rho(j,i)})_{j \in \Zn}$. Denote by $S$
  the subset of $\Zl$ such that $P_i$ is a well defined
  projective point for all $i \in S$. If $K=\{ P_i , i \in S \}$ is a
  maximal $\ell$-torsion subgroup of
  $V_{I_{\Thetan}}$ of rank $g$ then
  $(a_u)_{u \in \Zln}$ is a well defined theta null
  point. In other words there exists $(A_k, \pol,
  \Thetaln)$ an abelian variety together with a
  $(\overln)$-marking with associated theta null point
  $(a_u)_{u \in \Zln}$.
\end{theorem}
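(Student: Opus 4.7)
The plan is to reduce the claim to producing a single valid theta null point $(a'_u)_{u\in\Zln}\in V^0_J$ whose associated subgroup (in the sense of Proposition~\ref{prop_pi}) equals $K=\{P_i:i\in S\}$, and then invoke Lemma~\ref{lemtech}. Indeed, once such a point is in hand, the lemma yields $\psi\in\mathfrak{H}$ with $(a_u)_{u\in\Zln}=\psi\cdot(a'_u)_{u\in\Zln}$, and since $\mathfrak{H}$ preserves $V^0_J$ by Lemma~\ref{lem_automorphisme_compat}, the point $(a_u)_{u\in\Zln}$ is itself a valid theta null point.

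The core step is therefore to realise the pair $(B_k,K)$ as the $\overl$-isogeny datum of some abelian variety with $(\overln)$-marking. I would set $A_k=B_k/K$, take $\hat\pi:B_k\to A_k$ to be the quotient, and let $\pi:A_k\to B_k$ be the companion isogeny satisfying $\pi\circ\hat\pi=[\ell]_{B_k}$. Using the identity $[\ell]^*\ppol=\ppol^{\ell^2}$ and the isotropy of $K$ in $B_k[\ell]$ (which supplies the required level subgroup of $G(\ppol^{\ell^2})$ above $K$), one descends $\ppol^{\ell^2}$ along $\hat\pi$ to a symmetric ample line bundle $\pol$ on $A_k$ of type $\overln$ with $\pi^*\ppol\simeq\pol$. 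To equip $(A_k,\pol)$ with a theta structure $\Thetaln$, one runs the construction of Section~\ref{s_isogenies} in reverse: the theta structure $\Thetan$ together with a choice of splitting $B_k[\ell]=K\oplus K'$ dictates the maximal level subgroups of $G(\pol)$ above the two isotropic pieces of $K(\pol)$, and hence a symmetric $\Thetaln$ of type $\overln$ which is $\pi$-compatible with $\Thetan$ in the sense of diagrams (\ref{eq1}) and (\ref{eq2}). With this $(A_k,\pol,\Thetaln)$ in hand, Corollary~\ref{coro1} forces the $\Zn$-indexed coordinates of its theta null point to be proportional to $(b_u)_{u\in\Zn}$; after normalisation we obtain the desired $(a'_u)_{u\in\Zln}\in V^0_J$, and by construction the associated $P'_i$ span exactly $K$.

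The main obstacle is the theta-structure bookkeeping in the reverse construction: one must verify that the descended line bundle really has type $\overln$ (rather than a larger type coming from the extra $\ell$-torsion in $K(\pi^*\ppol)$), and that for a suitable choice of splitting of $B_k[\ell]$ the resulting $\Thetaln$ is simultaneously symmetric and $\pi$-compatible with $\Thetan$. The hypotheses that $\ell$ and $n$ be coprime and that $\nu\geq 3$ are essential here: they make the compatible theta structure essentially unique via Theorem~\ref{th_fond_mumford} and, together with the oddness of $\ell$, eliminate the $2$-torsion obstructions arising from the splitting~\eqref{sympl} of the sequence~\eqref{eq_actionseqeq}. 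Once the construction is justified, Lemma~\ref{lemtech} and the $\mathfrak{H}$-invariance of $V^0_J$ close the argument as described.
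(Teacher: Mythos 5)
Your proposal follows essentially the same route as the paper's proof: build $A_k = B_k/K$, descend $\ppol^{\ell^2}$ to a type-$\overln$ polarization $\pol$ on $A_k$, choose a symmetric theta structure $\Thetaln$ that is $\pi$-compatible with $\Thetan$ so that Corollary~\ref{coro1} forces the resulting theta null point $(a'_u)$ to lie in $V_J$ with associated subgroup $K$, and then close via Lemma~\ref{lemtech}. The paper implements the ``reverse construction'' exactly as you sketch, by descending the relevant level subgroups through Grothendieck/Mumford descent: it fixes $\tilde{K}_\ell$ above $\hat{K}=\Ker\hat{\pi}$ as the unique level subgroup giving back $(B_k,\ppol)$, fixes $\tilde{K}_n$ by lifting $\Thetan(1,0,\cdot)$, assembles these into $\tilde{K}_2$, and takes any symmetric $\Thetaln$ with $(1,0,y)\mapsto\tilde{K}_2$ and the right behaviour on $\Zn$; it then verifies $\hat{\pi}(K')=K$ and, after a final adjustment making $j\mapsto\hat{\pi}(Q_j)$ agree with $j\mapsto P_j$, applies Lemma~\ref{lemtech}.

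One small correction of emphasis: the paper's proof of this theorem does not use Theorem~\ref{th_fond_mumford} or the hypothesis $\nu\ge 3$ to make the compatible theta structure ``essentially unique,'' and indeed no uniqueness is needed --- it suffices to exhibit \emph{one} compatible symmetric $\Thetaln$, since Lemma~\ref{lemtech} (together with the fact that $\mathfrak{H}$ preserves validity, which you correctly cite via Lemma~\ref{lem_automorphisme_compat}) absorbs all the residual freedom. The $\nu\ge 3$ hypothesis is used earlier, in Proposition~\ref{prop_geometrie}, not here. This is a minor misreading of where the hypotheses bite, not a gap in the argument.
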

\begin{proof}
  Let $A_k$ be the quotient of $B_k \simeq
  V_{I_{\Thetan}}$ by $K$ and let $\pi: B_k \rightarrow
  A_k$ be the canonical isogeny.  As $K$ is a subgroup of 
  $B_k[\ell]$, there exists an isogeny $\hat{\pi}:
  A_k \rightarrow B_k$ such that $[\ell]=\hat{\pi} \circ \pi$. Let
  $\pol = \hat{\pi}^*(\ppol)$. We are going to show that there exists
  a certain theta structure $\Thetaln$ such that the
  theta null point associated to $(A_k, \pol, \Thetaln)$ is  
  $(a_u)_{u \in \Zln}$.

  Let $K(\ppol)=K_1(\ppol) \times K_2(\ppol)$ be the decomposition into
  isotropic subspaces for the commutator pairing induced by the theta
  structure $\Thetan$.  Denote by $\hat{K}$ the kernel
  of $\hat{\pi}$. As $\pol=\hat{\pi}^*(\ppol)$, we
  know that $\hat{K}$ is an isotropic subgroup of $K(\pol)$ for the
  commutator pairing. Moreover, by construction it is contained in the
  $\ell$-torsion subgroup of $A$ and by hypothesis has rank $g$.  We
  choose a decomposition as isotropic subspaces
  $K(\pol)=K_1(\pol) \times K_2(\pol)$ such that $\hat{K}$ is contained
  in $K_2(\pol)$ and for $i=1,2$, $\pi(K_i(\pol)[n])=K_i(\ppol)$.

  Denote by $\kappa : G(\pol) \rightarrow K(\pol)$ the natural
  projection.  By the descent theory of Grothendieck, there exists a
  unique level subgroup $\tilde{K}_\ell$ of $G(\pol)$ contained in
  $\kappa^{-1}(\hat{K})$ such that the quotient of $(A_k,\pol)$ by the
  action defined by $\tilde{K}_\ell$ gives $(B_k,\ppol)$.  Let
  $G^*(\pol)$ be the centralizer of $\tilde{K}_\ell$ in $G(\pol)$. By
  \cite[Prop. 2 pp. 291]{MR34:4269}, we have an isomorphism
  $$i:G^*(\pol)/\tilde{K}_\ell \simeq G(\ppol).$$
  Let $G(\pol)[n] = \kappa^{-1}(K(\pol)[n])$. We remark that
\begin{enumerate}
\item $G(\pol)[n]$ is contained in $G^*(\pol)$,
\item $\kappa(G(\pol)[n]\cap \tilde{K}_\ell)$ is the zero subgroup of $A_k$.
\end{enumerate}
Let $\tilde{K}_0$ be the level subgroup of $G(\ppol)$ defined as the
image by $\Thetan$ of the subgroup $(1,0,y)_{y \in \dZn}$ of $\Hn$.
An immediate consequence of 1. and 2. is that there exists a unique
level subgroup $\tilde{K}_n$ of $G(\pol)$ such that
$i(\tilde{K}_n)=\tilde{K}_0$.

Denote by $\tilde{K}_2$ the level subgroup of $G(\pol)$ whose
restriction over $K(\pol)[n]$ and $K(\pol)[\ell]$ is respectively
given by $\tilde{K}_n$ and $\tilde{K}_\ell$. By construction, we have
\begin{equation}\label{eq10}
i(\tilde{K}_2)= \tilde{K}_0.
\end{equation}

Choose any theta structure $\Thetaln:
\Hln \rightarrow G(\pol)$ such that the image
by $\Thetaln$ of the subgroup $(1,0,y)_{y \in
  \dZln}$ is exactly $\tilde{K}_2$. Because of
(\ref{eq10}) and construction of Proposition \ref{prop1}, we have
$\vartheta_0^{\Thetaln}=
\hat{\pi}^*(\vartheta_0^{\Thetan})$.

We suppose moreover that $\Thetaln$ is such that for
all $x\in \Zn$, $i(\Thetaln(1,x,0))=\Thetan(1,x,0)$, where we consider
$\Zn$ as a subgroup of $\Zln$ via the map
$x \mapsto \ell x$.  We remark that by construction,
$\Thetaln$ and $\Thetan$ verify the
conditions (\ref{eq1}) and (\ref{eq2}) and as a consequence are
$\hat{\pi}$-compatible. As a consequence of Corollary \ref{coro1}, we
have that for all $i \in \Zn$,
$\vartheta_i^{\Thetaln}=
\hat{\pi}^*(\vartheta_i^{\Thetan})$.

Let $(a'_u)_{u \in \Zln}$ be the theta null point associated to $(A_k,
\pol, \Thetaln)$. For $i \in \Zn$, denote by $Q_i$ the geometric point
of $\proj_k^{\Zn}$ with homogeneous coordinates $(a'_{\rho(j,i)})_{j
  \in \Zn}$.

We know that the projective coordinates of a maximal isotropic
$\ell$-torsion subgroup of $A_k$ is obtained by the action of the
theta group on $(a'_u)_{u \in \Zln}$ by translation.  Denote by $K'$
the $\ell$-torsion subgroup of $A_k$ given by the points with
projective coordinates $(a'_{v+i})_{v \in \Zln}$. By construction, $K'$
is the dual of $\hat{K}$ for the commutator pairing which implies that
$A_k$ is exactly the quotient of $B_k$ by $\hat{\pi}(K')$. As a
consequence, we have $\hat{\pi}(K')=K$.

The applications $\Zl \rightarrow B_k[\ell]$, $j \mapsto P_j$ is a
group morphism (see for instance the proof of \cite[Lemma 5.6]{pc3}),
as well as the application $\Zl \rightarrow B_k[\ell]$, $j \mapsto
\hat{\pi}(Q_j)$. By changing the theta structure $\Thetaln$, we can
suppose that for all $j \in \Zl$, $\hat{\pi}(Q_j)=P_j$. As a
consequence, for $j \in \Zl$ there exists $\lambda_j \in \overline{k}$
such that for $i\in \Zn$, $a_{\rho(j,i)} = \lambda_j a'_{\rho(j,i)}$.
We know moreover that $(a_u)_{u \in \Zln}$ and $(a'_u)_{u \in \Zln}$
are geometric points of $V_J$. Applying Lemma \ref{lemtech} we are
done.
\end{proof}


If $(a_u)_{u \in \Zln}$ is a geometric point of $V_J$, we denote by
$G((a_u)_{u \in \Zln})$ the subgroup of $B_k[\ell]$ generated by the
valid projective points $(a_{\rho(j,i)})_{j \in \Zn}$ for $i \in \Zl$
of $V_{I_{\Thetan}}=B_k$. The preceding theorem tells us that whenever
a solution $(a_u)_{u \in \Zln}$ of $J$ is such that $G((a_u)_{u \in
  \Zln})$ is a maximal $\ell$-torsion subgroup of $B_k[\ell]$ then it
is a valid theta null point, that is, it corresponds to a certain
$(A_k, \pol, \Thetaln)$. It would be desirable to be able to determine
which maximal rank $g$ subgroups of $B_k[\ell]$ can arise as a $G(x)$ where
$x$ is a geometric point of $V_J$ representing a valid theta null
point.

For this, let $\bpol = [\ell]^* \ppol$ on $B_k$. As $\ppol$ is
symmetric, we have that $\bpol \simeq \ppol^{\ell^2}$ and as a
consequence $K(\bpol)$, the kernel of $\bpol$ is isomorphic to
$Z(\overline{\ell^2 n})$. The polarisation $\bpol$ induces a
commutator pairing $e_{\bpol}$ on $K(\bpol)$ and as $\bpol$ descend to
$\ppol$ via the isogeny $[\ell]$, we know that $e_{\bpol}$ is trivial
on $B_k[\ell]$. For $x_1,x_2 \in B_k[\ell]$, let $x'_1, x'_2 \in B_k[\ell^2]$ 
be such that $\ell.x'_i =x_i$ for $i=1,2$. We remark that $x'_1$ and $x'_2$
are defined up to an element of $B_k[\ell]$. As a consequence, $e_{\bpol}(x'_1,x_2)
=e_{\bpol}(x_1,x'_2)$, does not depend on the choice of $x'_1$ and $x'_2$ 
and if we put $e_W(x_1,x_2)=e_{\bpol}(x'_1,x_2)$, we obtain a well defined 
bilinear application $e_W : B_k[\ell] \times B_k[\ell] \rightarrow
\overline{k}$. As $e_{\bpol}$ is a perfect pairing, for any $x'_1 \in
B_k[\ell^2]$ there exits $x'_2 \in B_k[\ell^2]$ such that
$e_{\bpol}(x'_1, x'_2)$ is a primitive $\ell^{2th}$ root of unity. As a
consequence, for any $x_1 \in B_k[\ell]$ there exists $x_2 \in
B_k[\ell]$ such that $e_W(x_1, x_2)$ is a primitive $\ell^{th}$ root
of unity and $e_W$ is also a perfect pairing.

\begin{theorem}
  A maximal $\ell$-torsion subgroup of $B_k$ of rank $g$ is of the form
  $G(x)$ where $x$ is a geometric point of $V_J$ corresponding to a
  valid theta null point if and only if $G(x)$ is an isotropic
  subgroup for the pairing $e_W$.
  \label{th:gpisotropes}
\end{theorem}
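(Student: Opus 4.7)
The bridge for both directions is the identity $\bpol = \hat{\pi}^{\ast}\pol$, where $\pol = \pi^{\ast}\ppol$ is the polarization on $A_k$ that Section~\ref{s_isogenies} attaches to the triple $(A_k, \pol, \Thetaln)$.

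For the only-if direction, I would start from a valid theta null point $(a_u)_{u \in \Zln}$ corresponding to $(A_k, \pol, \Thetaln)$ and recover from Section~\ref{s_isogenies} the isogenies $\pi\colon A_k \to B_k$ and $\hat{\pi}\colon B_k \to A_k$ with $\pi \circ \hat{\pi} = [\ell]$, so that $G(x) = \pi(K_1) = \ker \hat{\pi}$ and $\bpol = \hat{\pi}^{\ast}\pol$. For $g_1, g_2 \in G(x)$ and any lift $g'_1 \in B_k[\ell^2]$ with $\ell g'_1 = g_1$, the crucial observation is that $\hat{\pi}(g'_1) \in A_k[\ell] \subset A_k[\ell n] = K(\pol)$ and $\hat{\pi}(g_2) = 0 \in K(\pol)$. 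The standard pullback compatibility of commutator pairings under an isogeny (applicable since both images lie in $K(\pol)$) then yields
\[
e_W(g_1, g_2) \;=\; e_\bpol(g'_1, g_2) \;=\; e_\pol(\hat{\pi}(g'_1), \hat{\pi}(g_2)) \;=\; e_\pol(\hat{\pi}(g'_1), 0) \;=\; 1,
\]
so $G(x)$ is isotropic for $e_W$; maximality follows from $|G(x)| = \ell^g$ together with the non-degeneracy of $e_W$ on $B_k[\ell]$.

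Conversely, given $G \subset B_k[\ell]$ maximal isotropic of rank $g$ for $e_W$, I would set $A_k := B_k / G$ with quotient $\hat{\pi}\colon B_k \to A_k$, use $G \subset B_k[\ell]$ to factor $[\ell] = \pi \circ \hat{\pi}$ through a unique $\pi\colon A_k \to B_k$, and define $\pol := \pi^{\ast}\ppol$. The count $\chi(\pol) = (\ell n)^g$ is immediate; the heart of this direction is that $\pol$ has the \emph{uniform} type $\overln$, equivalently $K(\pol) = A_k[\ell n]$. Using $\phi_\pol = \pi^{\vee} \circ \phi_\ppol \circ \pi$, the inclusion $A_k[\ell n] \subset K(\pol)$ unfolds, after applying $\phi_\ppol$ and invoking the isomorphism $\phi_\ppol\colon B_k[\ell] \iso B_k^{\vee}[\ell]$ (valid because $\gcd(\ell,n)=1$), into the self-duality $\pi(A_k[\ell])^{\perp_{e_W}} = \pi(A_k[\ell])$ in $B_k[\ell]$. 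An order count using $\pi \circ \hat{\pi} = [\ell]$ identifies $\pi(A_k[\ell]) = \ker \hat{\pi} = G$, so this condition is exactly the hypothesis on $G$. With $\pol$ of type $\overln$ in hand, I would construct, as in the proof of Theorem~\ref{main}, a symmetric theta structure $\Thetaln$ on $(A_k,\pol)$ compatible with $\Thetan$ via $\hat{\pi}$; Corollary~\ref{coro1} and Proposition~\ref{prop_geometrie} then guarantee that its theta null point lies in $V_J^0$, with associated group $\ker \hat{\pi} = G$ by construction.

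The main obstacle is the type computation in the converse direction: translating the descent of $\pol$, expressed through polarization maps, into the explicit isotropicity of $G$ for the auxiliary pairing $e_W$ requires carefully tracking the interplay between $\phi_\ppol$, the Weil pairing, and the factorization $[\ell] = \pi \circ \hat{\pi}$. Once this step is done, the remainder reduces to the machinery already established in Theorem~\ref{main} and Section~\ref{s_isogenies}.
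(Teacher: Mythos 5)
Your forward direction is the same computation as the paper's: write $e_W(g_1,g_2)=e_{\bpol}(g'_1,g_2)=e_{\pol}(\hat{\pi}(g'_1),\hat{\pi}(g_2))$ via $\bpol=\hat{\pi}^*\pol$ and kill it with $\hat{\pi}(g_2)=0$; the paper reaches $\hat{\pi}(x_2)=0$ slightly more indirectly by writing $x_2=\pi(\bar x_2)$ and using $\hat{\pi}\circ\pi=[\ell]$, but it is the same argument. Both proofs run the converse by forming $A_k=B_k/G$, descending $\bpol$ to $\pol$, factoring $[\ell]=\pi\circ\hat{\pi}$, and then choosing a symmetric theta structure compatible with $\Thetan$ to land in $V^0_J$ via Corollary~\ref{coro1}. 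Where you diverge is in how the $e_W$-isotropy hypothesis gets used. The paper introduces a Lagrangian complement $\hat{G}$ and spends the hypothesis proving that $\hat{\pi}(\hat{G})=\ker\pi$ is isotropic for $e_\pol$ — but that fact is automatic from the descent of $\pol$ along $\pi$, so the paper's use of the hypothesis is somewhat circuitous; more seriously, the paper then ``chooses a decomposition $K(\pol)=K_1(\pol)\times K_2(\pol)$'' and ``takes any theta structure $\Thetaln$'' without verifying that $\pol$ actually has type $\overln$, i.e., that $K(\pol)=A_k[\ell n]$. You instead aim the hypothesis directly at this type computation, unfolding $A_k[\ell n]\subset K(\pol)$ through $\phi_\pol=\pi^\vee\circ\phi_{\ppol}\circ\pi$, using $\pi(A_k[n])=B_k[n]=\ker\phi_{\ppol}$, $\pi(A_k[\ell])=\ker\hat{\pi}=G$, and the duality $\ker\pi^\vee=\pi(A_k[\ell])^\perp$, to reduce the inclusion precisely to $G\subset G^{\perp_{e_W}}$. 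That is the honest place where the hypothesis bites: one can check (e.g.\ analytically for $g=2$, $G=\langle e_1,f_1\rangle$ not $e_W$-isotropic) that the descended $\pol$ otherwise has type $(1,\ell^2)\times(n,n)$ rather than $\overln$, so the theta structure the paper reaches for would not exist. Your sketch of the type computation is not written out in full detail — you acknowledge this yourself — but it identifies and addresses the genuine crux, and is in that respect tighter than the paper's own converse argument while arriving at the same conclusion by the same overall route.
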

\begin{proof}
  Let $(a_u)_{u \in \Zln}$ be a geometric point of $V_J$ corresponding
  to a valid theta null point.  We know that $(a_u)_{u \in \Zln}$ is
  the theta null point of a triple $(A_k, \pol, \Thetaln)$. The theta
  structure $\Thetaln$ induces a decomposition $K(\pol)=K_1(\pol)
  \times K_2(\pol)$ into isotropic subgroups for the commutator
  pairing $e_{\pol}$. As the isogeny $\pi$ is such that
  $\pi^*(\vartheta_i^{\Thetan})=\vartheta_i^{\Thetaln}$ for all $i \in
  \Zn$ (and identifying $i \in \Zn$ with $\ell i \in \Zln$), we know
  that $G((a_u)_{u \in \Zln})= \pi(K_1(\pol))$. We denote by
  $\hat{\pi} : B_k \rightarrow A_k$ the isogeny such that $\pi \circ
  \hat{\pi}=[\ell]$ as in the diagram (\ref{eq_diag1}). For any $x_1, x_2
  \in G((a_u)_{u \in \Zln})$, there exists $\overline{x}_1,
  \overline{x}_2 \in K_1(\pol)[\ell]$ such that $x_i=\pi(\overline{x}_i)$, $i=1,2$. Let $x'_1 \in B_k[\ell^2]$ be such that
  $\ell.x'_1=x_1$. We have
  $e_W(x_1,x_2)=e_{\bpol}(x'_1,x_2)=e_{\pol}(\hat{\pi}(x'_1),
  \hat{\pi}(x_2))$. But $\hat{\pi}(x_2)=\hat{\pi}\circ
  \pi(\overline{x}_2)=[\ell](\overline{x}_2)=0$. As a consequence, we have $e_W(x_1,x_2)=0$.
 
  Now, we prove the opposite direction. Let $G$ be a maximal rank
  $g$ $\ell$-torsion subgroup of $B_k[\ell]$ which is
  isotropic for the pairing $e_W$ and $\hat{G}$ be the dual group of
  $G$ for the pairing $e_W$. As $e_W$ is a perfect pairing, $\hat{G}$
  is also a maximal rank $g$ $\ell$-torsion subgroup of $B_k[\ell]$.
  We want to show that $G$ is of the form $G(x)$ with $x$ a geometric
  point of $V_J$ where $J$ is defined by the triple $(B_k, \ppol,
  \Thetan)$.  For this, we consider the isogeny $\hat{\pi}:B_k
  \rightarrow A_k$ with kernel the subgroup $G$ of $B_k$. As $G$ is
  contained in $B_k[\ell]$, $G$ is an
  isotropic subgroup of $(B_k, \bpol)$, and $\bpol$ descend via
  $\hat{\pi}$ to a polarization $\pol$ on $A_k$. Let $\pi : A_k
  \rightarrow B_k$ be the isogeny with kernel $\hat{\pi}(\hat{G})$.
  By the commutativity of the following diagram,
\begin{equation}
  \xymatrix{
    (B_k, \bpol) \ar[dd]^{[\ell]} \ar[dr]^{\hat{\pi}} &  \\
    & (A_k,\pol) \ar[dl]^{\pi}   \\
    (B_k,\ppol) & \\
  },
  \label{eq_diag2}
\end{equation}
$\pol$ descends via $\pi$ to $\ppol$.

The theta structure $\Thetan$ induces a decomposition
$K(\ppol)=K_1(\ppol) \times K_2(\ppol)$. Let $x_i=\hat{\pi}(x'_i)$
with $x'_i \in \hat{G}$ and $i=1,2$. Let $y'_1 \in B_k[\ell^2]$ be such
that $\ell.y'_1 =x'_1$. We have by hypothesis $1=e_W(x'_1, x'_2)=e_{\bpol}(y'_1,
x'_2)$ and as a consequence $1=e_{\bpol}(x'_1,
x'_2)=e_{\pol}(x_1,x_2)$. Thus
$\hat{\pi}(\hat{G})$ is isotropic for the pairing $e_{\pol}$. As a
consequence, we can chose a decomposition $K(\pol)=K_1(\pol)\times
K_2(\pol)$ such that for $i=1,2$, $\pi(K_i(\pol))=K_i(\pol_0)$ and
$K_2(\pol)[\ell]=\hat{\pi}(\hat{G})$. Take any theta structure
$\Thetaln$ for $\pol$ compatible with this decomposition. Let
$(a_u)_{u \in \Zln}$ be the associated theta null point. By 
Corollary~\ref{coro1}, $(a_u)_{u \in \Zln}$ is a geometric point of
$V_J$. Moreover, we have $G((a_u)_{u \in \Zln})=\pi(K_1(\pol))=G$.

\end{proof}

Our study of valid theta null points allows us to better understand
the geometry of $V^0_J$. We know from Proposition~\ref{prop_geometrie}
that $V^0_J$ classifies the isogenies $\pi: A_k \to B_k$ between
marked abelian varieties verifying the compatibility condition.

Taking the dual of $\pi$ gives an isogeny from $B_k$ to $A_k$ with
kernel $K=\pi(K_1)=\{P_1,\ldots,P_n \}$. Thus the theta null points on
$V_J^0$ correspond to varieties $\overline{\ell}$-isogeneous to $B_k$.
But we have seen in Proposition~\ref{lem_damien} that it may happen
that different points of $V_J^0$ give the same kernel $K$ and hence
the same isogeneous variety. We want to classify the points of $V_J^0$
corresponding to isomorphic varieties $\overline{\ell}$-isogeneous to $B_k$.

To do that, let $K$ be a maximal isotropic subgroup of rank $g$ of the
points of $\ell$-torsion of $B_k$. We are interested in the class
$\mathfrak{T}_K$ of isogenies of kernel $K$.  More precisely, if $\pi$
is an isogeny from $B_k$ to $A_k$ with kernel $K$, then
$\mathfrak{T}_{K}$ is the class of isogenies $\pi': B_k \to A'_k$ such
that there exists an isomorphism $\psi: A_k \to A'_k$ that makes the
following diagram commutative:
\begin{equation*}
\xymatrix{
&          &                    & A_k \ar[dd]^{\psi}\\
0 \ar[r] & K \ar[r] & B_k \ar[ur]^{\pi} \ar[dr]^{\pi'}&  \\
         &          &                    & A'_k 
}
  \label{eq_isogenyiso}
\end{equation*}

\begin{proposition}
  \label{prop_structure}
  Let $K$ be a maximal subgroup of rank $g$ of the points of
  $\ell$-torsion of $B_k$ which is isotropic for the pairing $e_W$.
  There is a point $(a_u)_{u \in \Zln} \in V^0_J$ such that the
  corresponding dual isogeny $\pi: B_k \to A_k$ is in
  $\mathfrak{T}_K$. Every other point in $V^0_J$ giving the class
  $\mathfrak{T}_K$ is obtained by the action of $\mathfrak{H}$ on
  $(a_u)_{u \in \Zln}$. In particular, the geometric points of
  $V^0_J/\mathfrak{H}$ are in bijection with the
  $\overline{\ell}^g$-isogenies of $B$.
\end{proposition}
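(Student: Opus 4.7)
The plan is to assemble the statement from Theorem~\ref{th:gpisotropes}, Lemma~\ref{lemtech}, and Remark~\ref{rem_sousgroupe}. The key observation is that the map $x = (a_u)_{u \in \Zln} \mapsto G(x)$ is a surjection from $V^0_J$ onto the set of maximal isotropic rank $g$ subgroups of $B_k[\ell]$ for the pairing $e_W$, and that a class $\mathfrak{T}_K$ is determined by its kernel $K$ alone: any two isogenies $B_k \to A_k$ and $B_k \to A'_k$ with common kernel $K$ factor through the canonical projection $B_k \to B_k/K$, yielding the required isomorphism $\psi: A_k \to A'_k$ making the diagram commute. So the $\overline{\ell}^g$-isogenies of $B_k$ are classified by the maximal isotropic rank $g$ subgroups of $B_k[\ell]$ for $e_W$.

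For the existence part, starting from such a $K$ I would invoke Theorem~\ref{th:gpisotropes} to produce $x \in V^0_J$ with $G(x) = K$. By the construction recalled just before Theorem~\ref{main}, this $x$ corresponds to a triple $(A_k, \pol, \Thetaln)$ together with an isogeny $\pi: A_k \to B_k$ whose dual $\hat{\pi}: B_k \to A_k$ has kernel $\pi(K_1) = \{P_i, i \in \Zl\} = G(x) = K$, so $\hat{\pi} \in \mathfrak{T}_K$.

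For the uniqueness up to $\mathfrak{H}$, let $x' \in V^0_J$ yield a dual isogeny in $\mathfrak{T}_K$. Its kernel is $K$, so $G(x') = K = G(x)$, and Lemma~\ref{lemtech} gives $\psi \in \mathfrak{H}$ with $x' = \psi \cdot x$. Conversely, Remark~\ref{rem_sousgroupe} shows that $\mathfrak{H}$ preserves $G$: the action~\eqref{eq_action1} induces an automorphism of $\{P_i, i \in \Zl\}$, which is a group automorphism since on $V^0_J$ the $P_i$ form a subgroup of order $\ell^g$ by Theorem~\ref{main}, while the action~\eqref{eq_action2} fixes each $P_i$. Hence the $\mathfrak{H}$-orbit of $x$ is exactly the preimage of $\mathfrak{T}_K$ in $V^0_J$, and passing to the quotient gives the claimed bijection between $V^0_J/\mathfrak{H}$ and the set of $\overline{\ell}^g$-isogenies of $B_k$.

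The only real subtlety lies in matching the two equivalences cleanly: one must check that the action of $\mathfrak{H}$ by~\eqref{eq_action1}, which permutes the $P_i$ via a $\Zl$-automorphism, indeed induces an isomorphism of the associated triples that intertwines the two dual isogenies (so that the resulting isogenies land in the same class $\mathfrak{T}_K$). This follows from tracing through the construction of Section~\ref{s_isogenies}, since the change of theta structure by an element of $\mathfrak{H}$ preserves $\tilde{K}_\ell$ and hence $\pol = \hat{\pi}^*(\ppol)$, so the target triple $(A_k, \pol, \Thetaln)$ is only changed by an automorphism of the marking.
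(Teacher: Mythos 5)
Your proof is correct, and for the uniqueness direction you take a genuinely different and arguably cleaner route than the paper. You begin by observing explicitly that the class $\mathfrak{T}_K$ is determined by the kernel $K$ alone, since any two isogenies from $B_k$ with kernel $K$ factor through $B_k \to B_k/K$. This reduces the whole proposition to matching kernels: points of $V^0_J$ give the same class iff they have the same associated group $G(x)$. From there, existence is Theorem~\ref{th:gpisotropes} (a tighter citation than the paper's somewhat loose invocation of Theorem~\ref{main}), one direction of the orbit claim is Lemma~\ref{lemtech} applied directly, and the converse direction is Remark~\ref{rem_sousgroupe}. The paper, by contrast, handles the uniqueness direction by a hands-on construction: from the isomorphism $\tilde{\psi}: A'_k \to A_k$ it pulls back line bundles to get $\pol_{A'} = \tilde{\psi}^*(\pol_A)$, transports this to a morphism of theta groups, pulls back through the theta structures to obtain a symmetric automorphism of $\Hln$, and then argues compatibility to place it in $\mathfrak{H}$. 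Your approach replaces that entire chain with a single appeal to the already-proved Lemma~\ref{lemtech}, which does the same work more efficiently. The ``subtlety'' you flag at the end is in fact already resolved by your opening observation that $\mathfrak{T}_K$ depends only on $K$ together with Remark~\ref{rem_sousgroupe} (which shows $\mathfrak{H}$ preserves $G(x)$ as a subgroup), so it requires no further tracing through Section~\ref{s_isogenies}; you could safely drop that paragraph.
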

\begin{proof}
  Let $K=\{P_i, i \in \Zl\}$ be such a maximal subgroup.
  Theorem~\ref{main} gives a geometric point $(a_u)_{u \in \Zln}$ of
  $V^0_J$ corresponding to a marked abelian variety $(A_k, \pol_A,
  \Theta_A)$ such that the associated isogeny $\tilde{\pi}: A_k \to
  B_k$ sends $K_1(\pol_A)$ to $K$. Hence, the unique isogeny $\pi: B_k
  \rightarrow A_k$ such that $\tilde{\pi} \circ \pi = [\ell]$, is in
  $\mathfrak{T}_K$. If $(a'_u)_{u \in \Zln}$ is another valid theta
  null point in $V^0_J$, corresponding to a marked abelian variety
  $(A', \pol_{A'}, \Theta_{A'})$ such that the dual of the associated
  isogeny gives the same class as $\pi$, then we have the following
  diagram:
\begin{equation*}
\xymatrix{
    & A_k \ar[dl]^{\tilde{\pi}} \\
B_k &  \\
    & A'_k \ar[ul]^{\tilde{\pi}'} \ar[uu]^{\tilde{\psi}}
}
  \label{eq_isogenytildeiso}
\end{equation*}
By definition of the associated isogenies $\tilde{\pi}$ and $\tilde{\pi}'$, we
know that $\pol_A=\tilde{\pi}^{*}(\pol_B)$ and
$\pol_{A'}=\tilde{\pi}'^{*}(\pol_B)=\tilde{\psi}^{*}(\pol_A)$. So $\tilde{\psi}$
induces a morphism of the theta groups $G(\pol_A)$ and $G(\pol_{A'})$, and
pulling back by the theta structures we get a symmetric automorphism
$\tilde{\psi}$ of $\mathcal{H}(\overln)$. Since the theta structures
$\Theta_A$ and $\Theta_{A'}$ are compatible with $\Theta_B$, $\tilde{\psi}$ is
in $\mathfrak{H}$. This shows that $(a_u)_{u \in \Zln}$ and $(a'_u)_{u \in
\Zln}$ are in the same orbit under $\mathfrak{H}$.
\end{proof}



Together with the study of degenerate theta null points, it is now
possible to count the points in $V_J$. For instance, take $g=1$, $n=4$
and $\ell =3$. Let $E$ be an elliptic curve, and $(b_u)_{u \in
  (\Z/n\Z)}$ be a level $4$ theta null point on $E$.  There are
$4=\#\proj^1(\F_3)$ classes of $3$-isogenies from $E$, and $6=3 \times
\phi(3)$ solutions in $V_J$ for each class. The
actions~\eqref{eq_action1} are given by $(a_u)_{u \in (\Z / \ell n
  \Z)}  \mapsto (a_{x .u })_{u \in \Z / \ell n \Z}$ where $x \in \Z /
\ell n \Z$ is invertible and congruent to $1$ mod~$n$. There are
$\phi(\ell)$ such actions. The actions~\eqref{eq_action2} are given by
$(a_u)_{u \in \Z / \ell n \Z}  \mapsto (\zeta^{c.u^2} a_{u })_{u \in
  \Z / \ell n \Z}$ where $\zeta$ is a $\ell^{th}$-root of unity and $c
\in \Z/ \ell \Z$.

 If $g=2$, it is easy to compute the number of valid theta null point
 in $V_J$. 
 First, we remark that the number of isogeny classes of
 degree $\ell^2$ of a given dimension $2$ abelian variety $B_k$ is
 parametrised by the points of a Grassmanian $Gr(2,4)(\F_\ell)$ which are
 isotropic (see Theorem~\ref{th:gpisotropes}):
 there are $(\ell^2+1)(\ell+1)$ such points.

 Next, the number
 of actions of the form (\ref{eq_action1}) is parametrised by the
 number of invertible matrices of dimension~$2$ with coefficients in
 $\F_\ell$ with is given by $(\ell^2-1).(\ell^2-\ell)$. The number of
 actions of the form (\ref{eq_action2}) is $\ell^3$ (the number of symmetric
 matrices of dimension~$2$). As a consequence,
 the number of valid theta null point in $V_J$ is
 \[\ell^{10}-\ell^8-\ell^6+\ell^4.\]
We remark that this number is a $O(\ell^{11})$. For $g=2$, $\ell=3$, we
have $51840$ valid theta null points in $V_J$.

For a general $g$ and $\ell$, we assess the order of the number of
valid theta null point which are solution of $V_J$. The number of
isotropic points of a Grassmanian $Gr(g,2.g)(\F_\ell)$ is a 
$O(\ell^{g(g+1)/2})$. 
The
number of action of the form (\ref{eq_action1}) is a $O(\ell^{g^2})$
and the number of action of the form is a $O(\ell^{g(g+1)/2})$. We deduce that
the number of valid theta null point in $V_J$ is bounded by 
\begin{equation}\label{eq:nbsolutions}
O(\ell^{2.g^2+g}).
\end{equation}

\begin{example}
  In the case of genus $1$ and small $\ell$ it is possible to list
  all the solutions of $V_J$. We take $\ell=3$ and let $E$ be the
  elliptic curve given by an affine equation $y^2 = x^3 + 11.x + 47$
  over $\F_{79}$. A corresponding theta null point of level~$4$ for
  $E$ is $(1 : 1 : 12 : 1)$.  The four subgroups of $3$-torsion of $E$
  are
\begin{align*}
K_1 &=\{(1: 1: 12 : 1), (37 : 54 : 46 : 1), (8 : 60 : 74 : 1) \} \\
K_2 &=\{(1: 1: 12 : 1), (67 : 10 : 68 : 1), (62 : 8 : 70 : 1) \} \\
K_3 &=\{(1: 1: 12 : 1), (42 : 5 : 15 : 1),  (40 : 16 : 3 : 1) \} \\
K_4 &=\{(1: 1: 12 : 1), (72 : 56 : 31 : 1), (69 : 24 : 33 : 1) \} \\
\end{align*}
All geometric points of $V_J$ are defined over $\F_{79}(\upsilon)$ where
$\upsilon$ is a root of the irreducible polynomial $X^3 + 9.X + 76$.
For each of the four subgroups $K_i$, there are $6$ geometric points of 
$V_J$ giving the curve $E/K_i$. We give a point in each class (the other
points can be obtained via the actions~\eqref{eq_action1}
and~\eqref{eq_action2}): 

$Q_1=(16\upsilon^2 + 19\upsilon + 17: 1: 46: 16\upsilon^2 + 19\upsilon + 17: 37: 54: 34\upsilon^2 + 70\upsilon + 46: 54: 37: 16\upsilon^2 + 19\upsilon + 17: 46: 1)$ corresponds to $K_1$.

$Q_2= (64\upsilon^2 + 67\upsilon + 68: 1: 68: 64\upsilon^2 + 67\upsilon + 68: 67: 10: 57\upsilon^2 + 14\upsilon + 26: 10: 67: 64\upsilon^2 + 67\upsilon + 68: 68: 1) $ corresponds to $K_2$.

$Q_3= (8\upsilon^2 + 49\upsilon + 48: 1: 3: 8\upsilon^2 + 49\upsilon + 48: 40: 16: 17\upsilon^2 + 35\upsilon + 23: 16: 40: 8\upsilon^2 + 49\upsilon + 48: 3: 1)$ corresponds to $K_3$.

$Q_4= (32\upsilon^2 + 73\upsilon + 34: 1: 33: 32\upsilon^2 + 73\upsilon + 34: 69: 24: 68\upsilon^2 + 7\upsilon + 13: 24: 69: 32\upsilon^2 + 73\upsilon + 34: 33: 1)$ corresponds to $K_4$.

We also have the following degenerate points in $V_J$: if we take
$x=9$ in the action~\eqref{eq_action1}, the image of the class of any
$Q_i$ is $\mathcal{C}= \{ (55 : 1 : 12 : 55 : 1 : 1 : 28 : 1 : 1 : 55
: 12 : 1), (1 : 1 : 12 : 1 : 1 : 1 : 12 : 1 : 1 : 1 : 12 : 1), (23 : 1
: 12 : 23 : 1 : 1 : 39 : 1 : 1 : 23 : 12 : 1) \} $. For this class, the
corresponding $\ell$-torsion subgroup (the points $P_i$ of
Proposition~\ref{prop_pi}) is $ \{ (1 : 1 : 12 : 1), (1 : 1 : 12 : 1),
(1 : 1 : 12 : 1)\}$ which has rank~$0$. On $\mathcal{C}$ the
action~\eqref{eq_action1} is trivial, so there are only $3$ points in
this degenerate class, coming from the action~\eqref{eq_action2}. The
last degenerate point is $ (1 : 0 : 0 : 1 : 0 : 0 : 12 : 0 : 0 : 1 : 0
: 0)$, alone in its class.
\end{example}

We conclude this section with some remarks concerning the case $\nu
=1$ and the case where the characteristic of $k$ is equal to $\ell$.
First, for computational reasons, for instance in order to limit the
number of variables when computing the points of $V_J$, we would like
to have $\nu$ as small as possible.  All the results of Section
\ref{s_solutions} are valid under the hypothesis that $\nu \geq 2$ and
that the characteristic of $k$ is different from $\ell$. In the case
$\nu =1$, we can not even prove that $V_J$ is a zero dimensional
variety. Nonetheless we have made extensive computations which back
the idea that even in the case $\nu =1$, in general, $V_J$ is a zero
dimensional variety whose degree is of the same order with respect to
the parameter $\ell$ as in the case $\nu =2$.

In the case that the characteristic of the base field $k$ is equal to
$\ell$ and $\nu \geq 2$, the proof that $V_J$ is a $0$-dimensional
scheme is still valid. In this case $V_J$ is not anymore reduced and
the computation of the number of solutions of $V_J$ are not valid.
Nonetheless, from our computations, we see that in this case the
degree of the variety $V_J$ is of the same order with respect to the
parameter $\ell$ as in the case where the characteristic of $k$ is
different from $\ell$.

In the following section, we give an algorithm to find the solutions
of $V_J$. We can prove that this algorithm is efficient in the case
$\nu \geq 2$ and when the characteristic of $k$ is different from
$\ell$. In the case that $\nu =1$ or when the characteristic of $k$ is
equal to $\ell$ we will make the hypothesis that $V_J$ is a zero
dimensional variety whose degree  is given by formula
(\ref{eq:nbsolutions}). Under these hypothesis, we can also prove that our
algorithm is efficient.

\section{An efficient algorithm} \label{sec:FastAlgo}
We would like to use the formulas of Section \ref{s_modcorr} to
compute the image of the modular correspondence $\Phi_\ell$ for some
positive integer $\ell$. We have seen that the main algorithmic
difficulty is to solve the polynomial system defined by the equations
of Theorem \ref{riemannquad} together with the symmetry relations. The
aim of this section is to give an algorithm to solve efficiently this
system.  We have made an implementation of our algorithm and used it
to test the heuristics described at the end of Section
\ref{s_solutions}.

Let $n=2^\nu$. In this section, $k$ is a finite field. We let $(B_k,
\ppol, \Thetan)$ be a dimension $g$ abelian variety together with a
$\overn$-marking and we denote by $(b_u)_{u \in \Zn}$ its associated
theta null point. Let $J$ be the image of the homogeneous ideal
defining $\Mbarln$ given by the equation of Theorem \ref{riemannquad},
under the specialization map
\begin{eqnarray*}
  k[x_u|u \in
  \Zln] \rightarrow k[x_u|u \in
  \Zln,n u \not=0], \quad x_u \mapsto \left\{
    \begin{array}{l@{,\hsp}l}
      {b_u} & \mathrm{if} \quad u \in \Zn \\
      {x_u} & \mathrm{else}
    \end{array} \right. .
\end{eqnarray*}
We denote by $V_J$ the $0$-dimensional affine variety (heuristically
$0$-dimensional if $\nu=1$) defined by the ideal $J$. Let $\rho : \Zn
\times \Zl \rightarrow \Zln$ be the group isomorphism given by $(x,y)
\mapsto \ell x+ n y$

\subsection{Motivation}\label{subsec:motivation}
In order to find the points of the variety $V_J$ a first idea is to
use an efficient Gr\"obner basis computation algorithm \cite{MR1213453}
such as F$_4$ \cite{MR1700538}. We have carried out computations in
the case $g=2$, $\nu =1$ and $\ell=3$ with respect to a total degree
order (the DRL \cite{MR1287608,CLO92} or grevlex order) using
the computer algebra system Magma \cite{magma} implementation of
F$_4$. From our computation, we could conclude that
\begin{itemize}
\item even for a small coefficient field (\(k=\mathbb{F}_{3^{10}}\)),
  it takes 20 hours of computations using Magma on a powerful computer
  with 16 Go of RAM;
\item as expected from the computations of Section \ref{s_solutions},
  the number of solutions in the algebraic closure $\overline{k}$
  of $k$ is big: \(30853\) solutions in characteristic \(3\)
  (We note that this is coherent
  with the number of solutions discussed after
  Proposition~\ref{prop_structure}
  when $g=2$, $\nu=2$ and $\ell=3$).
\item to fully solve the system (that is to say, find explicitly all
  the solutions in $\overline{k}$) we need to compute a second
  Gr\"obner basis with respect to a lexicographical order.
\end{itemize}
This last operation can be done using the FGLM \cite{MR1263871}
algorithm. In our case it is equivalent to compute the characteristic
polynomial of a \(30853\times30853\) matrix. This computation did not
finish using Magma for the base field $k=\mathbb{F}_{3^{10}}$. So we
see that even for $g=2$, $\nu=1$ and $\ell=3$ the computation of the points of
$V_J$ is painful using a generic algorithm.  In this section, we give
an algorithm to solve efficiently the algebraic system defined by $J$
for small $\ell$ over a big coefficient field. As an application of
our method, we can mention the initialisation phase of a point
counting algorithm \cite{pc3}.

The main idea of our algorithm is to use explicitly the symmetry
inside the problem deduced from the action of the theta group: we
compute a Gr\"obner basis not for the whole ideal $J$ but rather a
Gr\"obner basis of a well chosen projection \(J\cap k[x_{\rho(v,\lambda)} |
v \in \Zn ]\) for $\lambda \in \Zl$. With our strategy, the same problem
(\(k=\mathbb{F}_{3^{10}}\)) can be solved in seconds and far bigger
problems (\(k=\mathbb{F}_{3^{1500}}\)) can be solved in less than $1$
hour (see Section \ref{experimental results} for experimental
results).
\subsection{Assumptions}\label{strategy}

Our method is a combination of existing algorithms.  We first describe
in full generality the assumptions upon which our algorithm is faster
than a general purpose Gr\"obner basis algorithm.  Then, using the
results of Section \ref{s_solutions}, we explain that these
assumptions hold for $J$ in the case that $\nu \geq 2$ and that the
characteristic of $k$ is not $\ell$. If $\nu=1$ or if the
characteristic of $k$ is equal to $\ell$, we can not prove the
assumptions but we have made extensive computations which show that in
general our algorithm is much more efficient than a general purpose
Gr\"obner basis algorithm.

Let $T$ be a set $[x_1, \ldots , x_s]$ of variables, we assume that
$J\subset k[T] $ is a zero dimensional ideal generated by the
polynomials $[f_1,\ldots,f_m]$ where for $i=1, \ldots ,m$, $f_i$ is a
polynomial in $k[T]$.  We make the hypothesis that we can split the
set of variables into two subsets $T = X \cup Y$ such that the ideal
$K=J\cap k[Y]$ contains low degree polynomials.

In order to make precise what we mean by low degree polynomials, we
denote by $I_{gen}$ an ideal generated by the polynomials
$[g_1,\ldots,g_m]$ where for $i=1, \ldots, m$, $g_i$ is a general
polynomial of total degree $\deg(f_i)$. We define for any ideal $I$ of
$k[T]$:
$$
D_{Y}(I)=\min\{deg(g)\mid 0\neq g\in I\cap k[Y]\}.
$$
Our assumption that $J\cap k[Y] $ contains low degree polynomials
means that
\begin{equation}\label{hyp11}
D_{Y}(J) \ll D_{Y}(I_{gen}).\tag{H1-1}
\end{equation}
The previous assumption implies that our algorithm will perform much 
faster with the particular ideal $J$ than it would do for a general
ideal $I_{gen}$.

We must also ensure that it is more efficient to compute a Gr\"obner
basis for \(J\cap k[Y]\) instead of a Gr\"obner basis for \(J\). If we
suppose that a Gr\"obner basis computation for a total degree order
has the same complexity for $J$ and $I_{gen}$, we have to check that
\(D_{Y}(J) \ll D_{T}(I_{gen}).\) It is well known that, generically, a lower bound
for \(D_{T}(I_{gen})\) is given by the Macaulay bound which is given
by \(D_{T}(I_{gen})=1+\sum_{i=1}^m(\deg(f_i)-1)\) if $m \leq s$.  We
can now state explicitly the second part of our first assumption:
\begin{equation}
  D_{Y}(J) \ll \sum_{i=1}^m\deg(f_i).
  \tag{H1-2}
\label{hyp12}
\end{equation}
Our second assumption is that $J$ can be decomposed into many prime
ideals. There exists a positive integer $r\gg 1$ such that
\begin{equation}
\sqrt{J}=P_{1}\cap\cdots\cap P_{r} \text{ and }P_{i}%
\text{ is a prime ideal}\tag{H2}.
\label{hyp2}
\end{equation}
We recall that for a homogeneous ideal we define the Hilbert function
$\operatorname{HF}_{I}(d)=\dim(k[T]/I)_{d}$ and the degree of the
ideal $I$, $\deg(I)$, is given by the Hilbert series
$\sum\limits_{i=0}^{\infty}\operatorname{HF}_{I}(d)\,z^{i}=\frac
{M(z)}{(1-z)^{\dim(I)}}$ and $\deg(I)=M(1)\neq0$. With this, we can
state the third (optional) assumption
\begin{equation}
\deg\left(  \sqrt{I}\right)  \ll\deg\left(  I\right).\tag{H3}
\label{hyp3}
\end{equation}

We discuss the validity of hypothesis (\ref{hyp11}), (\ref{hyp12}),
(\ref{hyp2}) and (\ref{hyp3}) in the case that $J$ is defined as in
the introduction of the present section.  First, we remark that
$D_{Y}(I_{gen})$ can be easily computed: let $M(s,d)$ be the number of
monomials of degree less or equal to $d$ in $s$ variables.  The total
number of solutions counted with multiplicities of $I_{gen}$ is given
by the B\'ezout bound: $D=\Pi_{i=1}^m \deg(f_i)$. Hence, we have
\begin{equation}\label{eq:bound}
D_{Y}(I_{gen})=\min_d \left\{ M(h,d)>D \right\},
\end{equation}
where $h$ is the cardinal of $Y$ and $M(h,d)={h+d \choose d}$. By
considering $M(h,d)$ as a polynomial in the unknown $d$, we obtain
that for a given $h$, $D_Y(I_{gen})$ is the biggest real root of the
polynomial:
$$\frac{1}{h!}\prod_{i=1}^h (x+i)=D.$$  
As a consequence, we have
\begin{equation}\label{eq:eval1}
D_{Y}(I_{gen}) \sim_{D \rightarrow\infty} (h! D)^\frac{1}{h}.
\end{equation}

We know moreover that $\Mbarln$ has dimension $1/2.g.(g+1)$ and is
embedded via the relations given in Theorem \ref{riemannquad} in the projective
space of dimension $(n\ell)^g-1$. We deduce that $J$ contains at least
$(n\ell)^g-1/2.g.(g+1)$ algebraically independent polynomials. As the
equations of Theorem \ref{riemannquad} have degree $4$, a lower bound
for $D$ is $4^{(n\ell)^g-1/2.g.(g+1)-1}$.

On the other side, if we chose for $j \in \Zl$, $Y=[x_{\rho(u,j)}| u
\in \Zn]$, we know by Proposition \ref{prop_pi} that the solutions of
the system $J \cap k[Y]$ can be either the origin point of
$\Aff^{\Zn}$ or represent a $\ell$-torsion point of $V_{I_{\Thetan}}$.
In this last case, by Lemma \ref{lemtech} we know that there is $\ell$
solutions of $J$ corresponding to the same projective points. Denote
by $D'$ the number of solutions of $J\cap k[Y]$ counted with
multiplicities. We have  $D' \leq \ell^{2g+1}+1$ and using the
heuristic evaluation of $D_Y(J)$ given by (\ref{eq:bound}), we obtain
\begin{equation}\label{eq:eval2}
D_Y(J)\sim_{D \rightarrow\infty} (h! D')^\frac{1}{h}.
\end{equation}
For a fixed $g$ and $\nu$ the cardinal of $Y$ are fixed.
Using (\ref{eq:eval1}) and (\ref{eq:eval2}), we see that hypothesis
(\ref{hyp11}) is verified for $\ell$ big enough.

Next, $1+\sum_{i=1}^m(\deg(f_i)-1)=3.(n\ell)^g$. On the other side,
$(h! D')^\frac{1}{h}$ with $D'\leq \ell^{2g+1}+1$ and $h=n^g$. As
$n\geq 2$, we have, using the Stirling approximation formula, that $(h!
D')^{\frac{1}{h}} =O(\ell)$ and hypothesis (\ref{hyp12}) is
verified as soon as $g \geq 2$ and $\ell$ big enough.

Since we want to find at least one solution of $J$ defined over $k$, we
can assume that such a solution exists. By Proposition \ref{prop_pi},
this implies that there exists a subgroup $G$ of rank at least $1$ of
the $\ell$-torsion group of $V_{I_{\Thetan}}$ such that all the points
of $G$ are defined over $k$. As the solutions of $J\cap k[Y]$ are
points of $V_{I_{\Thetan}}[\ell]$, we conclude that for $r \geq \ell$
we have:
$$\sqrt{J}=P_{1}\cap\cdots\cap P_{r} \text{ and }P_{i}%
\text{ is a prime ideal},$$
and hypothesis (\ref{hyp2}) is verified.

In general, we know from Proposition \ref{prop:reduced} that the
hypothesis (\ref{hyp3}) is not verified since $J$ is a reduced ideal.
Nonetheless, in the case that the characteristic of $k$ is
equal to $\ell$, the scheme defined by $J$ is not reduced and we use
(\ref{hyp3}) in order to speed up the computations.

\subsection{General strategy}\label{subsec:general}
In the following, we give a general strategy for computing the
solutions of the algebraic system defined by $J$.  All the steps of
our algorithm are standard with the exception of step $1$ and step
$4$. In step $1$, we try to use as much as possible the assumptions
(\ref{hyp11}) and (\ref{hyp12}) and step $4$ is based upon the
assumptions (\ref{hyp2}),(\ref{hyp3}).

\begin{enumerate}[Step 1]
\item Using a specific algorithm given in Section \ref{description},
  we compute a truncated Gr\"obner basis for an elimination order
  and a modified graduation. This allows us to obtain an zero
  dimensional ideal $J_{1}$ contained in $J$. In general $J_1$ is not
  equal to $J$. The output of the algorithm is a sequence of
  polynomials $[p_{1},\ldots,p_{\kappa}]$ in $k\left[ Y\right] $ such that
  $J_1$ is generated by $\left( p_{1},\ldots,p_{\kappa}\right)$.

\item Compute a Gr\"obner basis $G_{\text{DRL}}$ of $J_{1}$ for a total
  degree order (DRL or grevlex). This can be done with any
  efficient algorithm for computing Gr\"obner basis, for instance
  F$_{4}$.

\item Compute a Gr\"obner basis $G_{\text{Lex}}$ of $J_{1}$ for a
  lexicographical order. This can be done by using the FGLM
  algorithm to change the monomial order of $G_{\text{DRL}}$.

\item Compute a decomposition into primes of the following ideal:
\[
\sqrt{J_{1}}=P_{1}\cap\cdots\cap P_{r}%
\]

We assume that $\deg(P_{i})=1$ (if it is not the case we replace
$k$ by some algebraic extension of $k$).\newline

\item For $i$ from $1$ to $r$, we repeat the following Steps a,b,c for the
ideal $(P_{i})+I$:

\begin{enumerate}
\item Compute a Gr\"obner basis $G_{i}$ of $(P_{i})+I$ for a total
  degree order (DRL).

\item Change the monomial order to obtain $G_{i}^{\prime}$ a
  lexicographical Gr\"obner basis of $(P_{i})+I$ .

\item Compute a decomposition into primes: $\sqrt{P_{i}+I}=P_{j_{i-1}+1}%
\cap\cdots\cap P_{j_{i}}$ (by convention $j_{-1}=r$).
\end{enumerate}
\end{enumerate}

Since we have $\sqrt{I}=\sqrt{J_{1}\cap I}=\sqrt{P_{1}\cap
I}\cap\cdots\cap \sqrt{P_{r}\cap I}$ and since the decomposition of each
component $\sqrt{P_{i}\cap I}$ is done by step 5 of the previous
algorithm, we obtain a decomposition of the ideal $I$:
\[
\sqrt{I}=P_{r+1}\cap\cdots\cap P_{j_{r}}%
\]

\begin{remark}
Once we have obtained a point $P$ of $V_J$ corresponding to a valid
theta null point, we can recover all the solutions of $V_J$
corresponding to valid theta null points using the action given by Proposition
\ref{lem_damien}.
\end{remark}

\subsection{Description of the algorithm}\label{description}
In this section, we give a detailed explanation of the Step 1 and Step
4 of the algorithm described in Section \ref{subsec:general}.

\textbf{Step 1: elimination algorithm}

The normal strategy for computing Gr\"obner bases (Buchberger, $F_{4}$,
$F_{5}$) consists in considering first the pairs with the minimal
total degree among the list of critical pairs (see
\cite{CLO92,BeWe93}, for instance).

In the following, to select critical pairs, we consider only the total
degree with respect to the first set of variables $X$. More precisely:

\begin{definition}
Partial degree of critical pair $p=(f,g)$:%
\begin{multline*}
\deg_{X}\left(  p\right)  =\text{total degree of }\operatorname{lcm}\left(
\operatorname*{LT}_{<}(f),\operatorname*{LT}_{<}(g)\right)  \\ \text{ in the
polynomial ring }R\left[  X\right]  \text{ where }R=k\left[
Y\right].
\end{multline*}

\end{definition}

Moreover, we stop the computation of the Gr\"obner basis as soon as we find a
zero dimensional system in $k\left[  Y\right]  $. 
Consequently we obtain an new version of the $F_{4}$ algorithm:

\begin{algorithm}
Algorithm $F_{4}$ (modified version)\newline\linebreak
\begin{center}
\frame{%
\begin{tabular}
[c]{l}%
Input: $\ \left\{
\begin{array}
[c]{l}%
F\ \text{a finite subset of}\ k[x_{1},\ldots,x_{s}]\\
<\text{ a monomial admissible order}\\
X=\left[  x_{1},\ldots,x_{\kappa}\right]  \text{ and }Y=\left[  x_{\kappa+1}%
,\ldots,x_{s}\right]
\end{array}
\right.  $\\
\textbf{Output}: a finite subset of $k[x_{1},\ldots,x_{s}]$%
.\newline\\
$G:=F$ and $P:=\left\{  \operatorname{CritPair}(f,g)\ |\ (f,g)\in
G^{2}\ \text{with }\ f\neq g\right\}  $\newline\\
\textbf{while} $P\neq\emptyset$ and $\dim(G\cap k\left[  Y\right]
)>0$ \textbf{do}\newline\\
$\ \ \ d:=$ $\min\left\{  \deg_{X}\left(  p\right)  \mid p\in P\right\}  $
minimal partial degree of critical pairs\\
$\ \ $extract from $P,$ $P_{d}$ the list of critical pairs of degree $d$\\
$\ \ \ R:=$\textsc{Matrix\_Reduction}$(\operatorname{Left}(P_{d}%
)\cup\operatorname{Right}(P_{d}),G)$\\
\ \ \textbf{for } $h\in R$ \textbf{do}\newline\\
$\ \ \ \ \ \ P:=P\cup\left\{  \operatorname{CritPair}(h,g)\ |\ g\in G\right\}
$\newline\\
\ \ \ \ \ \ $G:=G\cup\left\{  h\right\}  $\newline\\
\textbf{return} $G$\newline%
\end{tabular}
}
\end{center}
\end{algorithm}

\textbf{Step 4: decomposition into primes}

\label{decomposition} The known general purpose algorithms to compute
a primary decomposition of an ideal are inefficient in our case.  To
speed up the computation, we proceed following the three steps:

\begin{enumerate}
\item[Step 1] The basis $G_{\text{Lex}}$ always contains a univariate polynomial
$g(x_{s})$. We can factorize this polynomial. We will see that this is the
most consuming part of the whole algorithm. We obtain
\[
g(x_{s})=f_{1}(x_{s})^{\alpha_{1}}\cdots f_{l}(x_{s})^{\alpha_{l}}%
.\]

\item[Step 2] For all factors $i$ from $1$ to $l$ we apply the lextriangular
algorithm \cite{Laza92}
to obtain efficiently a decomposition into triangular sets of
$J_{1}+\left\langle f_{i}\left(  x_{s}\right)  \right\rangle $.

We can describe the algorithm beginning by the special case of two variables
$\left[  x_{s-1},x_{s}\right]  $ (this enough in our case since we assume that
$k=$ $\overline{k}$ as we will see later). By a theorem of
Lazard \cite[Theorem 1]{Laza85},
the general shape of $G_{\text{Lex}}$ the lexicographical order Gr\"obner
basis is as follows:%
\begin{equation}
\left\{
\begin{array}
[c]{l}%
g\left(  x_{s}\right) \\
h_{1}\left(  x_{s-1},x_{s}\right)  =g_{1}\left(  x_{s}\right)  \left(
x_{s-1}^{k_{1}}+\cdots\right) \\
h_{2}\left(  x_{s-1},x_{s}\right)  =g_{2}\left(  x_{s}\right)  \left(
x_{s-1}^{k_{2}}+\cdots\right) \\
\cdots\\
h_{s}\left(  x_{s-1},x_{s}\right)  =x_{s-1}^{k_{s}}+\cdots\\
\text{polynomials in variables }x_{1},\ldots,x_{s}%
\end{array}
\right. \label{lextriangular}%
\end{equation}

with $k_{1}<k_{2}<\cdots<k_{s}$ and $g_{1}\left(  x_{s}\right)  \mid
g_{2}\left(  x_{s}\right)  \mid\cdots$. Hence we can obtain for free some
factors of $g(x_{s})$:

\item[Step 3]
\begin{align*}
g(x_{s})  & =\left(  \frac{g(x_{s})}{g_{1}(x_{s})}\right)  \,g_{1}(x_{s})\\
& =\left(  \frac{g(x_{s})}{g_{1}(x_{s})}\right)  \,\left(  \frac{g_{1}(x_{s}%
)}{g_{2}(x_{s})}\right)  g_{2}(x_{s})\\
& =\cdots
\end{align*}

For any factor $f_{i}\left(  x_{s}\right)  $ of $g(x_{s})=f_{1}(x_{s}%
)^{\alpha_{1}}\cdots f_{l}(x_{s})^{\alpha_{l}}$, it is enough to find the
first element $h_{j}\left(  x_{s-1},x_{s}\right)  $ of the Gr\"obner basis such
that%
\[
\gcd\left(  f_{i}\left(  x_{s}\right)  ,g_{j}\left(  x_{s}\right)  \right)
\neq0.
\]

In our case $k=\overline{k}$ and each factor is linear $f_{i}\left(
  x_{s}\right) =x_{s}-\beta_{i}$ so that we search for the first $j$
such that $g_{j}\left( \beta_{i}\right) \neq0$: then we obtain a new
polynomial in one variable $h_{j}\left( x_{s-1},\beta_{i}\right) $
that can be factorized. Hence we can iterate the algorithm for all the
other variables $x_{s-2},\ldots,x_{1}$.
\end{enumerate}

\subsection{First experiments and optimizations}
In this section, we give running times for an implementation of the
strategy that we have presented in Section~\ref{strategy}. We also
explain some important optimizations.

The main motivation of the examples presented is this section, is to
show that the initialisation phase of the point counting algorithm
described in~\cite{pc3} can be made efficient enough to be negligible
in the overall running time of the algorithm. For this, we take $g=2$
and $n=2$ and we work over a field $k$ of characteristic $3$ or $5$.
We construct a theta null point of level~$2$ corresponding to an
abelian variety $A_k$ of dimension~$2$. We construct the modular
correspondence of level $\ell$ where $\ell$ is the characteristic of
$k$. Any valid solution of the modular correspondence will corresponds
to the theta null point of level~$2\ell$ of an abelian variety
isogeneous to $A_k$. We can then use the algorithm of~\cite{pc3} to
count the number of points of $A_k$.

\subsubsection{First experiments}

As explained in~\ref{subsec:motivation} if we can try to compute directly a Gr\"obner basis of the ideal generated by the
equations, even when $k$ is very small ($k=\mathbb{F}%
_{3^{10}}$ for instance), it takes 10 hours of computations on a powerful computer with 16
Go of RAM just to compute a DRL\ Gr\"obner basis. Moreover, in characteristic
$3$, there is a huge number of solutions: $30853$. This imply that there is no
hope to solve efficiently the corresponding problem directly.

Keeping the notations of the beginning of Section \ref{sec:FastAlgo},
we apply the method described in \ref{subsec:general} to find the
solutions of $J$. We let $\nu=1$, $\ell=3$ and $g=2$ so that $\Zln=(\Z
/ 6 \Z)^2$. Let $T=[x_u | u \in \Zln]$. For $j \in \Zl$, we define
$Y=[x_{\rho(u,j)}|u \in \Zn]$. Taking $j=\rho(0,1)$ and in the
following, for $u=(i,j) \in \Zln$, we let $x_u=x_{ij}$. With these
notations, we take $Y=\left[ x_{31},x_{32},x_{02},x_{01}\right]$ and
$X=T-Y$ the set of all other variables. Then we consider $J$ embedded
in the polynomial ring $k[T]$ where $k$ is $\mathbb{F}_{3^{k}}$ or
$\mathbb{F}_{5^{k}}$. In that case $J\cap k\left[ x_{31}%
  ,x_{32},x_{02},x_{01}\right] =J\cap k\left[ Y\right] $ is an ideal
of degree $160$ (to be compared with $30853$ the degree of the whole
ideal $J$). When $k=\mathbb{F}_{5^{k}}$ (resp. $k%
=\mathbb{F}_{3^{k}}$) the polynomial $g(x_{s})$ obtained in section
\ref{decomposition} is a square-free polynomial of degree $124$ (resp. a
non square-free polynomial of degree $70$). We report in the following
table some first experiments using the algorithm of section
\ref{subsec:general} implemented in Magma and in C (see section
\ref{experimental results} for a full description of the experimental
framework). First we consider only very small example:

\begin{center}%
{\small
\begin{tabular}
[c]{l|llll}%
Algo \ref{subsec:general} & Step 1 & Step 2 + Step 3 & Step 4 & Step 5\\\hline
$k=\mathbb{F}_{5^{10}}$ & 0.35 sec & 0.25 sec & 3.24+0.01=3.25 sec &
8.0+0.77+0.01+0.08=8.86 sec\\
$k=\mathbb{F}_{5^{20}}$ & 0.35 sec & 1.14 sec & 28.4+0.04=28.44 sec &
39.3+9.1+0.05+0.49=48.94 sec
\end{tabular}
}
\end{center}

Even if the theoretical complexity is linear in the size of $k$ it is
clear that, in practice, the behavior of the algorithm is not linear
in $\log(k)$. Moreover, when we increase the size of $k$, step 5
becomes the most consuming part of our algorithm. Hence, even if the
new algorithm is efficient enough to solve the problem for a small
base field $k$, the problems become intractable when $\#k>5^{100}$. In
the next paragraph we propose several optimizations to overcome this
limitation.

\subsubsection{Optimizations}

The idea is to apply \emph{recursively} the algorithm of section
\ref{subsec:general} to perform the step 5: we split again the first of variable
into two parts:\ $X=X^{\prime}\cup Y^{\prime}=X^{\prime}\cup\left[
x_{42},x_{21},x_{51},x_{12}\right]  $.

\begin{center}%
\begin{tabular}
[c]{l|ll}%
Algo \ref{subsec:general} & Original Step 5 & Recursive Step 5\\\hline
$k=\mathbb{F}_{5^{10}}$ & 8.0+0.77+0.01+0.08=8.86 sec &
0.05+0.41+0.33+0.01=0.8 sec\\
$k=\mathbb{F}_{5^{20}}$ & 39.3+9.1+0.05+0.49=48.94 sec &
0.12+1.53+2.44+0.01+0.02=4.1 sec\\
$k=\mathbb{F}_{5^{40}}$ &  & 0.13+2.46+7.16+0.01+0.01=9.78 sec
\end{tabular}

\end{center}

When $k=\mathbb{F}_{3^{k}}$ we obtain in step 3 of the algorithm
\ref{subsec:general} the following lexicographical Gr\"obner basis:%

\[
\left\{
\begin{array}
[c]{l}%
g\left(  x_{01}\right)  \text{ of degree 70}\\
h_{1}\left(  x_{02},x_{01}\right)  =g_{1}\left(  x_{01}\right)  \left(
x_{02}^{2}+\cdots\right)  \text{ and }g_{1}\text{ of degree 39}\\
h_{2}\left(  x_{02},x_{0,1}\right)  =x_{02}^{3}+\cdots\\
\cdots\text{ polynomials in variables }x_{31},x_{32},x_{02},x_{01}%
\end{array}
\right.
\]

and thus we can split $g_{1}\left(  x_{01}\right)  $ into two factors:%
\begin{align*}
g_{1}\left(  x_{01}\right)   & =\left(  x_{01}+\alpha_{1}\right)
^{3}\,\left(  x_{01}+\alpha_{2}\right)  ^{9}\cdots\left(  x_{01}+\alpha
_{4}\right)  ^{9}\\
\frac{g\left(  x_{01}\right)  }{g_{1}\left(  x_{01}\right)  }  &
=x_{01}\,\left(  x_{01}+\beta_{1}\right)  ^{3}\,\left(  x_{01}+\beta
_{2}\right)  ^{9}\cdots\left(  x_{01}+\beta_{3}\right)  ^{9}%
\end{align*}

Hence the polynomial $g_{1}\left(  x_{01}\right)  $ can be efficiently
factorized when $k$ is big.

\subsection{Experimental results}

\label{experimental results}\textbf{Programming language -- Workstation}%
\newline The experimental results have been obtained with several Xeon
bi-processor 3.2 Ghz, with $16$ Gb of Ram. The instances of our
problem have been generated using the Magma software. We used the
Magma version $2.14$ for our computations. The F$_{5}$
\cite{MR2035234} algorithm have been implemented in C language in the
FGb software and we used this implementation for computing the first
Gr\"obner base. All the other computations are performed under Magma
including factorizing some univariate polynomials and computing
Gr\"obner bases using the F$_{4}$ algorithm.

\textbf{Table Notation}\newline The following notations are used in the tables
of Fig.1 \ and Fig.2 below:

\begin{itemize}
\item $k$ is the ground field, $k^{\prime}\supset k$
is the field extension. The practical behavior of our algorithm is strongly
depending on the size of $k^{\prime}$; hence, since $k$ is
fixed, the practical depends strongly on the degree of the field extension
$[k^{\prime}:k]$. In order to obtain consistent data in the
following tables we keep only the case $[k^{\prime}:k]=2$.

\item $T$ is the total CPU\ time (in seconds) for the whole algorithm.

\item $T_{\mathrm{Gen}}$ is the time for generating the Riemann
  equations and computing a valid level $2$ theta null point (Magma).

\item $T_{\mathrm{Grob}}$ is the sum of the Gr\"obner bases computations (FGb
and Magma).

\item $T_{\mathrm{Fact}}$ is the sum of the Factorization steps (Magma).

\item $T_{1}$ is the total time of the algorithm excluding generating the
equations: $T_{1}=T-T_{\mathrm{Gen}}$.
\end{itemize}

\begin{center}%
\begin{tabular}
[c]{cc|ccccc}%
$k$ & $k$' & $T_{\mathrm{Gen}}$ & $T_{\mathrm{Grob}}$ &
$T_{\mathrm{Fact}}$ & $T_{1}$ & T\\\hline
$5^{50}$ & $5^{100}$ & 1.9 & 2.7 & 9.3 & 12 & 14\\
$5^{70}$ & $5^{140}$ & 3.4 & 3.3 & 16.0 & 19 & 23\\
$5^{100}$ & $5^{200}$ & 19.5 & 15.9 & 116.7 & 133 & 152\\
$5^{150}$ & $5^{300}$ & 27.9 & 16.8 & 159.7 & 177 & 205\\
$5^{200}$ & $5^{400}$ & 141.3 & 57.3 & 401.0 & 459 & 600\\
$5^{250}$ & $5^{500}$ & 178.4 & 62.1 & 651.8 & 715 & 893\\
$5^{300}$ & $5^{600}$ & 227.8 & 86.7 & 935.3 & 1023 & 1251\\
$5^{350}$ & $5^{700}$ & 674.8 & 108.5 & 1306.1 & 1416 & 2091\\
$5^{400}$ & $5^{800}$ & 764.1 & 100.5 & 2411.3 & 2513 & 3277\\
$5^{450}$ & $5^{900}$ & 1144.0 & 165.3 & 2451.3 & 2619 & 3763\\
$5^{500}$ & $5^{1000}$ & 1070.1 & 185.4 & 2990.0 & 3177 & 4247\\
$5^{600}$ & $5^{1200}$ & 1979.5 & 273.5 & 4888.6 & 5164 & 7144\\
$5^{700}$ & $5^{1400}$ & 3278.0 & 422.5 & 6872.2 & 7297 & 10575\\\hline
\multicolumn{7}{c}{Fig 1: Algorithm $\ell=3$, characteristic of $k$ is $5$.
}%
\end{tabular}

\begin{tabular}
[c]{cc|ccccc}%
$k$ & $k$' & $T_{\mathrm{Gen}}$ & $T_{\mathrm{Grob}}$ &
$T_{\mathrm{Fact}}$ & $T_{1}$ & T\\\hline
$3^{80}$ & $3^{160}$ & 3.6 & 2.0 & 0.4 & 3 & 7\\
$3^{80}$ & $3^{160}$ & 3.6 & 2.0 & 0.2 & 3 & 6\\
$3^{200}$ & $3^{400}$ & 29.0 & 11.1 & 6.9 & 20 & 49\\
$3^{600}$ & $3^{1200}$ & 239.2 & 36.2 & 44.5 & 88 & 327\\
$3^{800}$ & $3^{1600}$ & 403.7 & 50.6 & 89.6 & 150 & 554\\
$3^{1000}$ & $3^{2000}$ & 591.8 & 61.8 & 151.0 & 225 & 816\\
$3^{1500}$ & $3^{3000}$ & 2122.0 & 137.7 & 474.5 & 666 & 2788\\
$3^{3000}$ & $3^{6000}$ & 11219.9 & 396.3 & 3229.6 & 3704 & 14923\\\hline
\multicolumn{7}{c}{Fig 2: Algorithm $\ell=3$, characteristic of $k$ is $3$.}%
\end{tabular}

\end{center}

\textbf{Interpretation of the results}

\begin{itemize}
\item In characteristic $3$, the hardest part is the generation of the
  equations and the computation of a valid level $2$ theta null
  point: $T_{\mathrm{Gen}}\approx T$. In characteristic, $5$ we have
  $T\approx3\,T_{\mathrm{Gen}}$.

\item The most consuming part in algorithm described in \ref{subsec:general} is the
univariate factorization. Moreover due to the implementation in Magma
$T_{\mathrm{Fact}}$ is not really linear in the size of $k$.

\item The algorithm is much more efficient in characteristic $3$ since:

\begin{itemize}
\item All the solutions occur with some multiplicity, hence we have to
  deal with non-square-free polynomials. As a consequence, the degree
  of the univariate polynomials can be decreased by taking the
  square-free part of the polynomials.

\item The corresponding Gr\"obner bases are in not in shape-position:
  as explain in section \ref{decomposition} we can split the
  univariate polynomial by taking a gcd.
\end{itemize}

\item The algorithm is very efficient since we can completely find the
  solutions of the ideal $J$ for sizes of the base field $k=3^{1500}$
  or $k=5^{700}$ which are interesting for point counting application.
\end{itemize}

\section{Conclusion}
In this paper, we have described an algorithm to compute modular
correspondences in the coordinate system provided by the theta null
points of abelian varieties together with a theta structure. As an
application, this algorithm can be used to speed up the initialisation
phase of a point counting algorithm \cite{pc3}. The main part of the
algorithm is the resolution of an algebraic system for which we have
designed a specific Gr\"obner basis algorithm.  Our algorithm takes
advantage of the structure of the algebraic system in order to speed
up the resolution. We remark that this special structure comes from
the action of the automorphisms of the theta group on the solutions of
the system which has a nice geometric interpretation. In particular we
were able count the solutions of the system and to identify which one
correspond to valid theta null points.

\bibliographystyle{alpha}
\bibliography{practical.bib}

\begin{thebibliography}{FGLM93}

\bibitem[AL94]{MR1287608}
William~W. Adams and Philippe Loustaunau.
\newblock {\em An introduction to {G}r\"obner bases}, volume~3 of {\em Graduate
  Studies in Mathematics}.
\newblock American Mathematical Society, Providence, RI, 1994.

\bibitem[BCP97]{magma}
Wieb Bosma, John Cannon, and Catherine Playoust.
\newblock The {M}agma {A}lgebra {S}ystem {I}: {T}he {U}ser {L}anguage.
\newblock {\em J. Symbolic Comp.}, 24(3):235--265, 1997.

\bibitem[{Bec}93]{BeWe93}
{Becker T. and Weispfenning V.}
\newblock {\em {Groebner Bases, a Computationnal Approach to Commutative
  Algebra}}.
\newblock {Graduate Texts in Mathematics}. {Springer-Verlag}, {1993}.

\bibitem[BL04]{MR2062673}
Christina Birkenhake and Herbert Lange.
\newblock {\em Complex abelian varieties}, volume 302 of {\em Grundlehren der
  Mathematischen Wissenschaften [Fundament al Principles of Mathematical
  Sciences]}.
\newblock Springer-Verlag, Berlin, second edition, 2004.

\bibitem[BW93]{MR1213453}
Thomas Becker and Volker Weispfenning.
\newblock {\em Gr\"obner bases}, volume 141 of {\em Graduate Texts in
  Mathematics}.
\newblock Springer-Verlag, New York, 1993.
\newblock A computational approach to commutative algebra, In cooperation with
  Heinz Kredel.

\bibitem[CL08]{pc3}
D.~Carls, R. and Lubicz.
\newblock A $p$-adic quasi-quadratic time and quadratic space point counting
  algorithm.
\newblock 2008.
\newblock preprint.

\bibitem[CLO92]{CLO92}
D.~Cox, J.~Little, and D.~O'Shea.
\newblock {\em Ideals, Varieties and Algorithms}.
\newblock Springer Verlag, New York, 1992.

\bibitem[Elk98]{MR1486831}
Noam~D. Elkies.
\newblock Elliptic and modular curves over finite fields and related
  computational issues.
\newblock In {\em Computational perspectives on number theory (Chicago, IL,
  1995)}, volume~7 of {\em AMS/IP Stud. Adv. Math.}, pages 21--76. Amer. Math.
  Soc., Providence, RI, 1998.

\bibitem[Fau99]{MR1700538}
Jean-Charles Faug{\'e}re.
\newblock A new efficient algorithm for computing {G}r\"obner bases {$(F\sb
  4)$}.
\newblock {\em J. Pure Appl. Algebra}, 139(1-3):61--88, 1999.
\newblock Effective methods in algebraic geometry (Saint-Malo, 1998).

\bibitem[Fau02]{MR2035234}
Jean-Charles Faug{\`e}re.
\newblock A new efficient algorithm for computing {G}r\"obner bases without
  reduction to zero {$(F\sb 5)$}.
\newblock In {\em Proceedings of the 2002 International Symposium on Symbolic
  and Algebraic Computation}, pages 75--83 (electronic), New York, 2002. ACM.

\bibitem[FGLM93]{MR1263871}
J.~C. Faug{\`e}re, P.~Gianni, D.~Lazard, and T.~Mora.
\newblock Efficient computation of zero-dimensional {G}r\"obner bases by change
  of ordering.
\newblock {\em J. Symbolic Comput.}, 16(4):329--344, 1993.

\bibitem[Koh03]{MR2093256}
David~R. Kohel.
\newblock The {AGM}-{$X\sb 0(N)$} {H}eegner point lifting algorithm and
  elliptic curve point counting.
\newblock In {\em Advances in cryptology---ASIACRYPT 2003}, volume 2894 of {\em
  Lecture Notes in Comput. Sci.}, pages 124--136. Springer, Berlin, 2003.

\bibitem[Laz85]{Laza85}
D.~Lazard.
\newblock {Ideal Bases and Primary Primary Decomposition:Case of Two
  Variables}.
\newblock 1(3):261--270, September 1985.

\bibitem[Laz92]{Laza92}
D.~Lazard.
\newblock {Solving zero-dimensional algebraic systems}.
\newblock 13(2):117--132, February 1992.

\bibitem[LL06]{MR2293798}
Reynald Lercier and David Lubicz.
\newblock A quasi quadratic time algorithm for hyperelliptic curve point
  counting.
\newblock {\em Ramanujan J.}, 12(3):399--423, 2006.

\bibitem[Mum66]{MR34:4269}
D.~Mumford.
\newblock On the equations defining abelian varieties. {I}.
\newblock {\em Invent. Math.}, 1:287--354, 1966.

\bibitem[Mum67]{MR36:2621}
D.~Mumford.
\newblock On the equations defining abelian varieties. {II}.
\newblock {\em Invent. Math.}, 3:75--135, 1967.

\bibitem[Mum70]{MR0282985}
David Mumford.
\newblock {\em Abelian varieties}.
\newblock Tata Institute of Fundamental Research Studies in Mathematics, No. 5.
  Published for the Tata Institute of Fundamental Research, Bombay, 1970.

\bibitem[Sat00]{MR2001j:11049}
Takakazu Satoh.
\newblock The canonical lift of an ordinary elliptic curve over a finite field
  and its point counting.
\newblock {\em J. Ramanujan Math. Soc.}, 15(4):247--270, 2000.

\bibitem[Sch95]{Schoof1}
R.~Schoof.
\newblock Counting points on elliptic curves over finite fields.
\newblock {\em J. Th\'eorie des nombres de Bordeaux}, 7(1):219--254, 1995.

\bibitem[VPV01]{MR1895422}
Frederik Vercauteren, Bart Preneel, and Joos Vandewalle.
\newblock A memory efficient version of {S}atoh's algorithm.
\newblock In {\em Advances in cryptology---EUROCRYPT 2001 (Innsbruck)}, volume
  2045 of {\em Lecture Notes in Comput. Sci.}, pages 1--13. Springer, Berlin,
  2001.

\end{thebibliography}
\end{document}